\tikzstyle{abstract}=[rectangle, draw=black, rounded corners, fill=blue!40, drop shadow,
\tikzstyle{comment}=[rectangle, draw=black, rounded corners, fill=green, drop shadow,
\tikzstyle{myarrow}=[->, >=open triangle 90, thick]
\tikzstyle{line}=[-, thick]
\theoremstyle{plain}
\newtheorem{theorem}{Theorem}[section]
\newtheorem{corollary}[theorem]{Corollary}
\newtheorem{lemma}[theorem]{Lemma}
\newtheorem{problem}[theorem]{Problem}
\newtheorem{proposition}[theorem]{Proposition}
\newtheorem{definition}[theorem]{Definition}
\newtheorem{assumption}[theorem]{Assumption}
\theoremstyle{remark}
\newtheorem{remark}[theorem]{Remark}
\newtheorem{example}[theorem]{Example}
\theoremstyle{hp}
\newtheorem{hp}[theorem]{Hypothesis}
\numberwithin{equation}{section}
\newcommand{\rsto}{]\!\kern-1.8pt ]}
\newcommand{\lsto}{[\!\kern-1.7pt [}
\numberwithin{equation}{section}
\newcommand{\cadlag}{c\`adl\`ag\,}
\newcommand{\Ind}[1]{\mathrm{1}_{\left\{#1\right\}}}
\newcommand{\FF}{\mathbb{F}}
\newcommand{\GG}{\mathbb{G}}
\newcommand{\RR}{\mathbb{R}}
\newcommand{\QQ}{\mathbb{Q}}
\newcommand{\PP}{\mathbb{P}}
\newcommand{\EE}{\mathbb{E}}
\newcommand{\XX}{\mathbb{X}}
\newcommand{\cA}{\mathcal{A}}
\newcommand{\cB}{\mathcal{B}}
\newcommand{\cF}{\mathcal{F}}
\newcommand{\cG}{\mathcal{G}}
\newcommand{\cH}{\mathcal{H}}
\newcommand{\cM}{\mathcal{M}}
\newcommand{\cN}{\mathcal{N}}
\newcommand{\cP}{\mathcal{P}}
\newcommand{\cS}{\mathcal{S}}
\newcommand{\cV}{\mathcal{V}}
\newcommand{\Ex}[2]{\mathbb{E}^{#1}\left[#2\right]}                     
\newcommand{\Excond}[3]{\mathbb{E}^{#1}\left[\left.#2\right|#3\right]}  
\newcommand{\rc}{r^{c}} 
\newcommand{\rck}[1]{r^{c,#1}} 
\newcommand{\rfk}[1]{r^{f,#1}} 
\newcommand{\Bck}[1]{B^{c,#1}}
\newcommand{\Bfk}[1]{B^{f,#1}}
\newcommand{\BIk}[1]{B^{I,#1}}
\newcommand{\tI}{\mathcal{I}}
\renewcommand{\cite}{\citet}
\@date \else {\vskip3ex \centering\footnotesize\@date\par\vskip1ex}\fi
\else \@footnotetext{\@setdate}\fi}
\newcommand{\subjclassname@JEL}{JEL Classification}
\begin{document}

\title[BSDEs of xVA]{
A unified approach to xVA with CSA discounting and initial margin}


\author{Francesca Biagini}
\address[Francesca Biagini]{LMU M\"unchen, Mathematics Institute,\newline
\indent Theresienstr. 39, D-80333 Munich, Germany}
\email[Francesca Biagini]{biagini@math.lmu.de}%

\author{Alessandro Gnoatto}
\address[Alessandro Gnoatto]{University of Verona, Department of Economics, \newline
\indent via Cantarane 24, 37129 Verona, Italy}
\email[Alessandro Gnoatto]{alessandro.gnoatto@univr.it}

\author{Immacolata Oliva}
\address[Immacolata Oliva]{Sapienza University of Rome, Department of Methods and Models for Economics, Territory and Finance\newline%
\indent Via del Castro Laurenziano 9, 00161 Rome, Italy}
\email[Immacolata Oliva]{immacolata.oliva@uniroma1.it}%

\begin{abstract}
In this paper we extend the existing literature on xVA along three directions. First, we enhance current BSDE-based xVA frameworks to include initial margin \textcolor{black}{in presence of defaults}. Next, we solve the consistency problem that arises when the front-office desk of the bank \textcolor{black}{uses trade-specific discount curves (CSA discounting) which differ from the discount rate} adopted by the xVA desk. Finally, we 
\textcolor{black}{clarify the impact of aggregation of several sub-portfolios of trades on the xVA-valuation of the resulting global portfolio and study related non-linearity effects.}
\end{abstract}

\keywords{CVA, DVA, FVA, CollVA, xVA, EPE, Basel III, Collateral, Initial Margin}
\subjclass[2010]{91G30, 91B24, 91B70. \textit{JEL Classification} E43, G12}

\date{\today}

\maketitle

\section{Introduction}\label{intro} 

As a consequence of the 2007-2009 financial crisis, academics and practitioners are revisiting the
valuation of financial products in several aspects. In particular, the value of a product should account for the possibility of default of any agent involved in the transaction. Financial regulations, such as Basel III/IV and Emir, are also driving the methodological development.

All these issues are represented at the level of valuation equations by introducing value adjustments (xVA), which are further terms to be added or subtracted to an idealized reference price, computed in the absence of the aforementioned frictions, in order to obtain the final value of the transaction. 

In this paper we aim to provide a unified and comprehensive framework for pricing counterparty risk with xVA and assessing funding costs by using BSDEs techniques in a market described by diffusions. We then consider the consistency problem between xVA pricing 
equations and the CSA discounting rules, which originates from the quoting mechanism of market standard instruments, and solve it by relying on an invariance property of linear BSDEs.  
\textcolor{black}{
Furthermore, we extend our approach to include pricing of a global portfolio possibly stratified in an aggregation of several subsets of claims, and study non-linearity effects due to incremental xVA charges. 
}

\textcolor{black}{
In order to better specify our contribution, we first shortly recall the literature on xVA. Since the research on xVA is very wide, our overview is far from being exhaustive. }
The first contribution on the subject is \cite{duhu96}. Before the 2007--2009 financial crisis, the works of \cite{brigoMasetti} and \cite{cherubini05} analyze the concept of \emph{credit valuation adjustment} (CVA) for the first time. 
The possibility of default of both counterparties involved in the transaction, represented by the introduction of the \emph{debt valuation adjustment} (DVA), is investigated, among others, in \cite{bripapa11} and \cite{bricapa14}.

Apart from the issue of default risk, another important source of concern for practitioners and academics is represented by funding costs. 

A parallel stream of literature emerged during and after the financial crisis, to 
generalize valuation equations to account for features such as the presence of collateralization agreements. 
In a Black-Scholes economy, \cite{pit10} provides valuation formulas in presence or absence of collateral agreements. 
\cite{pit12} generalizes the issue in a multi-currency economy, see also \cite{fushita10b},  \cite{fushita09}, and \cite{gnoatto2020}. 
The \emph{funding valuation adjustment} (FVA) under several alternative assumptions on the so called \emph{Credit 
Support Annex} (CSA) is derived in \cite{papebri2011}, while \cite{bripa2014ccp} also discusses the role of central 
counterparties in the context of funding costs. A general approach to funding issues in a semimartingale setting is 
provided by \cite{BieRut15}.

Both funding and default risk need to be unified in a unique pricing framework. 
Contributions in this sense can be found in \cite{bbfpr2018} by means of the so-called \textit{discounting approach}. 
In a series of papers, Burgard and Kjaer generalize the classical Black-Scholes replication approach to include many effects, 
see \cite{bj2011a} and \cite{bj2013}. A more general BSDE approach is provided by \cite{crepey2015a}, \cite{crepey2015b}, 
\cite{BiCaStu2018} and \cite{BiCaStu2019}. The equivalence between the discounting and the BSDE-based replication approaches 
is demonstrated in \cite{bbfpr2018}.

The importance of xVA is reflected by the increasing number of monographs on the subject, see e.g. \cite{brimopa2013}. 
An advanced BSDE-based treatment is provided by \cite{crepeyBook14}. A detailed analysis of how to construct large hybrid 
models for counterparty risk simulations are provided in \cite{green2015}, \cite{listag2015} and \cite{sokol2014}, while 
\cite{gregory15} provides an accessible introduction to most aspects of the topic.

The present paper contributes to the literature on xVA along the following directions: on the theoretical side, 
we provide a unifying treatment of different aspects of xVA that have been separately treated by different authors \textcolor{black}{in} a comprehensive \textcolor{black}{approach}; on the other hand, we provide a \textcolor{black}{rigorous} mathematical treatment of practical issues arising in the context of xVA calculation and management in current market practice.

\textcolor{black}{More precisely,} we propose an xVA framework using BSDEs techniques in a market described by diffusions. Our BSDEs, introduced under a progressively enlarged filtration \textcolor{black}{$\mathbb{G},$} are specified up to a random time horizon, 
given by the minimum between the default time of the counterparty, the default time of the bank, and the maturity of the contract. For the sake of simplicity, we first discuss the well posedness of the BSDE for \textcolor{black}{a portfolio including a contingent claim.} \textcolor{black}{Our approach unifies several contributions from the existing literature. First, we include initial margins in the BSDE driver as in \cite{Gobet2018}, but we extend their setting by including defaults. Moreover, we consider different discounting rules without resorting to measure changes as in \cite{BiCaStu2018}. Finally, we provide a price decomposition in terms of a clean value and an adjustment process, by considering the associated pre-default BSDE with respect to the reference filtration $\mathbb{F}$ along the lines of \cite{crepey2015a}, \cite{crepey2015b} and \cite{BieRut15}.}

Concerning discounting rules, it is nowadays accepted that  the trading 
activity is funded by resorting on different sources of liquidity. More precisely,  the existence
of \textcolor{black}{different} funding curves leads to the consistency problem between xVA pricing equations
and the so-called CSA discounting rules. The latter originates from the quoting
mechanism of market standard instruments. Such instruments are quoted under the assumption that
they are perfectly collateralized. Since a perfectly collateralized transaction is funded by the collateral
provider, the discounting rate applied to evaluate market instruments is given by a collateral rate,
which typically corresponds to an overnight interest rate. The presence of multiple assumptions on the
collateral rate implies the co-existence of quotes with different discounting rates, which are in general
at odds with the discounting rate dictated by the xVA pricing BSDE. A further contribution of our work consists in a solution to the consistency issue by relying on an invariance property of linear BSDEs. \textcolor{black}{This is realized by introducing a further value adjustment term, called \textit{Discounting Value Adjustment} (DiscVA), representing the clean price adjustment compared to the one officially accepted by the front-office desk.} 

Moreover, the counterparty could be economically related to other entities, a typical example being given by a 
parent-subsidiary relation or in general by the existence of dedicated sub-portfolios. Such a scenario 
results in a multi-layer \emph{hierarchical} structure \textcolor{black}{in the resulting global portfolio}, which also takes into account the potentially different funding 
policies implemented by each of the aforementioned financial entities. We study the impact of such a stratified structure on the formulation of pricing equations.
In realistic situations, the global portfolio of positions between the bank and the counterparty is
in general an aggregation of several subsets of claims, \textcolor{black}{whose} structure is dictated by legal agreements in force between the two agents. We allow for the presence of multiple
agreements for the exchange of margins (margin sets) and multiple netting sets. We adapt our xVA
framework to arbitrary configurations of aggregation levels, while preserving the well posedness of the underlying BSDE.

Finally, we present incremental xVA charges for new potential trades under the proposed xVA framework: given the 
presence of portfolio effects in the computation of value adjustments, and given an existing portfolio of $K$ trades, 
the xVA charge for a new potential $(K+1)$-th trade is computed as the difference of the xVAs of two portfolios. 
\textcolor{black}{More precisely}, this corresponds to the difference between the xVA of the extended portfolio, consisting of $(K+1)$ 
trades, and the xVA charge of the base portfolio of $K$ trades. \textcolor{black}{In this way we are able} to describe non-linearity of portfolio-wide valuation. Existing discussions concerning the impact of the 
incremental cash flow of a trade on the balance sheet of the dealer can be found in \cite{bj2013}, \cite{castagna2013}, 
\cite{castagna2014}.

Given our focus on discounting and aggregation levels, in this paper we do not discuss \emph{capital valuation adjustment} 
(KVA). The issue is treated in recent papers such as \cite{alcacre2016}, \cite{alcacre2016b} and \cite{alcacre2017}, 
see also the ongoing discussion in \cite{ads2019}.
This is beyond the scope of the present paper and leave it for future research.

The paper is organized as follows. In Section \ref{Sect:preliminaries} we formalize in mathematical terms the main financial 
concepts related to the xVA framework. Section \ref{Sect:single} \textcolor{black}{first} describes the results related to the xVA 
evaluation when only one transaction is taken into account, \textcolor{black}{extending the analysis to CSA discounting}. 
\textcolor{black}{Furthermore, we study the xVA framework when the portfolio includes several contracts and investigate the role played by non-linearity}.  
Section~\ref{sec:example} provides an example and some numerical results illustrating \textcolor{black}{the results of Section \ref{Sect:single}.}
In Appendix \ref{app} we gather some results from the literature used to derive the main results.

\section{The financial setting}\label{Sect:preliminaries}
We fix a time horizon $T<\infty$ for the trading activity. We consider two agents named the \emph{bank} (B) and the 
\emph{counterparty} (C). Unless otherwise stated, throughout the paper we assume the bank's perspective 
and refer to the bank as the \emph{hedger.} 
\textcolor{black}{All processes are modeled over a probability space $(\Omega, \mathcal{G}, \mathbb{P})$. 
}
\textcolor{black}{
Let $W^\PP=\left(W^\PP_t\right)_{t\,\in\,[0,T]}$ be a $d$-dimensional Brownian motion on $(\Omega,\mathcal{G},\mathbb{P})$ with associated natural filtration $\mathbb{F}.$
}

We denote by $\tau^B,$ resp. by $\tau^C,$ the \emph{time of default} of the bank, resp. of the counterparty
and \textcolor{black}{by} $H^j_t:=\Ind{\tau^j\leq t}, \, j\,\in\, \{B,C\},$   
\textcolor{black}{the associated jump process.}

\textcolor{black}{
Default times \textcolor{black}{$\tau^B$ and $\tau^C$} are assumed to be exponentially distributed random variables with time-dependent intensity 
$$\Gamma^j_t=\int_0^t\textcolor{black}{\lambda^{j,\mathbb{P}}_s} ds, \, j\,\in\,\{B,C\}, \, \,t \,\in\, [0,T],$$ 
where $\lambda^j$ are non-negative bounded $\FF$-adapted processes. 
}
\textcolor{black}{
Let $\mathbb{H}^j = (\mathcal{H}_t^j)_{t \,\in\, [0,T]}, \, j \,\in\, \{B,C\},$ be the natural filtration of $H^B,\, H^C,$ respectively. 
}
\textcolor{black}{
On $(\Omega, \cG,\PP)$ we consider the filtration $\GG = \FF \vee \mathbb{H}.$ All filtrations are required to satisfy the usual hypotheses of completeness and right-continuity.} 

In the present paper we will extensively make use of the so called \emph{Immersion Hypothesis.}

\begin{hp} \label{hp:H}   
Any local $(\FF,\PP)$-martingale is a local $(\GG,\PP)$-martingale. 
\end{hp} 

\textcolor{black}{Note that, by Hypothesis \ref{hp:H}, $W^{\mathbb{P}}$ is also a $\GG$-Brownian motion. By \cite[Corollary 5.2.4]{BieRut2004} we obtain that the predictable representation property holds in $\GG$ with respect to $W, \, M^B, \, M^C,$ where 
$$M_t^{j} := H_t^j - \int_{0}^{t \wedge \tau^j} \lambda_u^j du, \; t \,\in\, [0.T],$$ 
for $j \,\in\, \{B,C\}.$} 

\textcolor{black}{From now on, we set 
	\begin{equation} \label{eq:tau}
		\tau := \tau^B \wedge \tau^C \wedge T. 
	\end{equation}
}

\subsection{Basic traded assets} 
\textcolor{black}{
We now introduce the market model.} 
For $d \geq 1,$ we denote by $S^i$, $i=1,\ldots, d$ the \emph{ex-dividend price} (i.e. the price) of risky \textcolor{black}{
assets} with associated 
\emph{cumulative dividend processes} $D^i$. All $S^i$ are assumed to be c\`adl\`ag $\FF$-semimartingales, while the cumulative dividend streams $D^i$ are $\FF$-adapted processes of finite variation with $D^i_0=0$.

We \textcolor{black}{consider} the following coefficient functions:
\begin{align} \label{coeff_def}
\begin{split}
\mu: & \left(\RR_{+}\times \RR^d, \mathcal{B}\left(\RR_{+}\times \RR^d\right)\right) \mapsto \left(\RR^d,\mathcal{B}\left(\RR^d\right)\right),\\
\sigma: & \left(\RR_{+}\times \RR^d, \mathcal{B}\left(\RR_{+}\times \RR^d\right)\right) \mapsto \left(\RR^{d\times d},\mathcal{B}\left(\RR^{d\times d}\right)\right),\\
\kappa: & \left(\RR_{+}\times \RR^d, \mathcal{B}\left(\RR_{+}\times \RR^d\right)\right) \mapsto \left(\RR^d\textcolor{black}{_+},\mathcal{B}\left(\RR^d\textcolor{black}{_+}\right)\right),
\end{split}
\end{align}
which are assumed to satisfy standard conditions ensuring existence and uniqueness of strong solutions of SDEs driven by the Brownian motion $W^\PP$. The matrix process $\sigma$ is 
invertible at every point in time. We assume that
\begin{align}\label{eq:asset_prox}
\begin{cases}
dS_t&=\mu(t,S_t)dt+\sigma(t,S_t)dW^\PP_t\\
S_0&=s_0\in\RR^{d}  
\end{cases}
\end{align}
on $[0,T].$ 
Note that we are not postulating that the processes $S^i$ are positive. The dividend processes $D^i$ are specified via 
\begin{equation} \label{eq:prox}
(D^1_t,\ldots,D^d_t)^\top = \int_0^t \kappa(u,S_u) du, \; t \,\in\, [0,T], 
\end{equation}
for $\kappa$ given in \eqref{coeff_def} such that $\int_0^T \textcolor{black}{|}\kappa(u,S_u)\textcolor{black}{|} du <\infty$ $\PP$-a.s. 

\textcolor{black}{The cumulative dividend price associated to the $i$-th asset is given by 
\begin{equation} \label{eq:S_cld} 
S^{i,cld}_t := S^i_t + B^{i}_t \int_{(0,t]}\frac{dD^i_u}{B^{i}_u}, \; \ i=1,\ldots d \;, \, t \,\in\,[0,T].
\end{equation}
}
We assume the existence of an indexed family of cash accounts $(B^x)_{x \,\in\,\tI},$ where the stochastic process 
$r^x := (r^x_t)_{t \geq 0}$ is bounded and $\FF$-adapted for all $x \,\in\, \tI.$ 
The set of indices $\tI$ embodies the type of agreement the counterparties establish in order to mitigate the 
counterparty credit risk. We will specify the characteristics of the aforementioned indices later on.  

All cash accounts, with unitary value at time $0$, are assumed to be strictly positive 
continuous processes of finite variation of the form 
\begin{align} \label{def:cash_account}
B^x_t := \exp\left\{\int_0^t r^{x}_s ds\right\}, \,t \,\in\, [0,T]. 
\end{align}

We introduce two risky bonds with maturity $T^\star \leq T$ and rate of return $\bar{r}^j + \lambda^j,$ \textcolor{black}{$j \,\in\, \{B,C\},$} 
issued by the bank and the counterparty, \textcolor{black}{respectively,} with dynamics

\begin{align}\label{eq:bond_prox}
dP^j_t=\left(\bar{r}^j_t+\lambda^j_t\right)P^j_tdt-P^j_{t-}dH^j_t, 
\, j\in\{B,C\}.
\end{align}

The payment stream of a financial contract is represented by an $\FF$-adapted c\`adl\`ag process of finite variation $A = (A_t)_{t \,\in\, [0,T]}$, 
as in \cite{crepey2015b}.  
We use the notation $\Delta A_t:=A_t-A_{t-}$ for the jumps of $A$. 
To include the presence of default events, we define the process 
$\bar{A} = \left(\bar{A}_t \right)_{t\,\in\,[0,T]}$ by setting
\begin{equation}\label{A_bar}
\bar{A}_t :=\Ind{t<\tau}A_t+\Ind{t\geq \tau}A_{\tau-} \;. 
\end{equation}
\textcolor{black}{Note that $A_{\tau-}$ represents the last payment before default, see also \cite{BriMor2018}.}

\subsection{\textcolor{black}{Repo-trading and collateralization}} \label{repo_trading}
In line with the existing literature, we assume that the trading activity on the risky assets is \emph{collateralized.} 
This means that borrowing and lending activities related to risky securities are financed via security lending or \emph{repo market,} 
\textcolor{black}{see} \cite{BiCaStu2018} 
Since transactions on the repo market are collateralized by the risky assets, repo rates are lower than unsecured funding rates. 
As argued in \cite{crepey2015a}, assuming that all assets are traded via repo markets is not restrictive. 
In case the transactions are fully collateralized, this translates in the following equality
\begin{align} \label{eq:repoConstraint}
\xi^i_tS^i_t + \psi^i_tB^i_t = 0, \; i = 1\,\ldots,d, \,t \,\in\, [0,T]\textcolor{black}{,}   
\end{align}
\textcolor{black}{where} $B^1,\ldots, B^d$ 
are the cash accounts associated to the risky assets $S^1,\ldots,S^d$. 

It is worth noting that $\xi_t^i, \, i = 1, \ldots, d,$ may be either positive or negative. Here $\xi_t^i >0$ means that we are in a long position, which has to be financed by collateralization. On the other hand, $\xi_t^i < 0$ implies that the $i$-th asset is shorted, so that the whole amount of collateral is deposited  in the riskless asset. 

Condition \eqref{eq:repoConstraint} plays an important role in precluding trivial arbitrage opportunities among different cash accounts. 

Within the bank, the trading desk borrows and lends money from/to the treasury desk. 
Borrowing and lending rates are allowed to differ, hence we denote by $\rfk{b},\rfk{l}$ the rate at which 
the trading desk borrows from and lends to the treasury desk, respectively. 
\textcolor{black}{In the same} notation \textcolor{black}{as} in \eqref{def:cash_account}, we introduce the associated cash accounts 
$\Bfk{b},\Bfk{l}.$ 
This means that if the position of the trading desk is negative, i.e. $\psi^f = \psi^{f,b}<0,$ the trading desk 
borrows from the treasury desk at the rate $r^{f,b}.$ Conversely, if the position of the trading desk is positive, 
i.e. $\psi^f = \psi^{f,l}>0,$ the trading desk lends money to the treasury desk with remuneration $r^{f,l}.$ 
\textcolor{black}{Since simultaneously borrowing and lending from the treasury desk is precluded, we set 
\begin{equation} \label{eq:bl_sim}
\psi^{f,l}_t\psi^{f,b}_t=0, \, \mbox{ for all } t \,\in\, [0,T]. 
\end{equation} }

\textcolor{black}{For what concerns financial contracts, collateralization is a method to minimize losses due to default 
of the counterparty by using margins.
}
In the financial jargon, a \emph{margin} represents an economic value, either in the form of cash or risky securities, 
exchanged between the counterparties of a financial transaction, in order to reduce their risk exposures. In line with the current market practice, we distinguish between \emph{initial margin} and \emph{collateral} 
(or \emph{variation margin}), that we present in \textcolor{black}{the following.}

\subsubsection{Variation margin}\label{subsect:VM}
A collateral is posted between the bank and the counterparty to mitigate counterparty risk. 
The collateral process $C=(C_t)_{t\,\in\,[0,T]}$ is assumed to be $\GG$-adapted. 
We follow the convention of \cite{BiCaStu2018}  and \cite{crepey2015a}: 

\begin{itemize}
\item If $C_t>0,$ we say that the bank is the \emph{collateral provider.} It means that the counterparty measures a positive exposure 
			towards the bank, so it is a potential lender to the bank, hence the bank provides/lends collateral to reduce its exposure. 

\item If $C_t<0,$ we say that the bank is the \emph{collateral taker.} It means that the bank measures a positive exposure towards 
			the counterparty, so it is a potential lender to the counterparty, hence the counterparty provides/lends collateral to reduce 
			its exposure. 
\end{itemize}

Let $\cV= \left(\cV_t\right)_{t\,\in\,[0,T]}$ be a generic $\GG$-adapted process, representing either the value of the trade including counterparty risk and funding adjustments or the clean value process, as it will be clarified later on. We assume that 
$C_t := f(\cV_t), \, t \,\in\, [0,T],$ where $f: \, \RR \,\rightarrow\, \RR$ is a Lipschitz function. This assumption allows to cover realistic collateral specifications, see e.g. \cite{listag2015} and \cite{BaFuMa2019}.

If there is a collateral agreement (or a multitude of agreements)  between the bank and the counterparty, in evaluating 
portfolio dynamics we need to make a distinction between the value of the portfolio and the wealth of the bank, the two 
concepts being distinguished since the bank is not the legal owner of the collateral (prior to default).

In this paper collateral is always posted in the form of cash, in line with standard \textcolor{black}{current market practice}. Moreover, we assume 
\emph{rehypothecation,} meaning that the holder of collateral can use the cash to finance her trading activity. 
This is the opposite of \emph{segregation,} where the received cash collateral must be kept in a separate account and 
can not be used to finance the purchase of assets. 

We associate the following interest rates to the collateral account:
\begin{itemize}
\item $\rck{l}$ with account $\Bck{l},$ representing the rate on the collateral amount \emph{received} by the bank 
			who posted collateral to the counterparty.
\item $\rck{b}$ with account $\Bck{b},$ representing the rate on the collateral amount \emph{paid} by the bank who 
			received collateral from the counterparty.
\end{itemize}

We simply set $\rc=\rck{l}=\rck{b}$ in case there is no bid-offer spread in the collateral rate. 
Possible choices for the collateral rate are e.g. EONIA for EUR trades, Fed Fund for USD and SONIA for GBP trades. 
Such rates are overnight rates with a negligible embedded risk component. The choice of such approximately risk-free rates
as collateral rates is motivated by market consensus. However, two counterparties might enter a collateral agreement 
that involves a remuneration of collateral at any other risky rate of their choice. 
Here we do not assume any requirements on collateral rates. This allows us to cover the quite common situation where the 
collateral rate agreed between the two counterparties in the CSA is defined by including a real valued 
spread over some market publicly observed rate, e.g. EONIA $- 50\, bps,$ where $bps$ stands for \emph{basis points.} 

For the collateral account we have the following equations: 

\begin{enumerate}
\item if $C_t>0$, 
			then the bank has lent $\psi^c_t=\psi^{c,l}_t < 0$ units of the collateral 
			account to the counterparty, i.e., 
			\begin{equation} \label{eq:coll_bank}
				\psi^{c,l}_t\Bck{l}_t=-C^+_t, \,t \,\in\, [0,T]; 
			\end{equation} 
			
\item if $C_t<0$, 
			then the bank has borrowed $\psi^c_t=\psi^{c,b}_t>0$ units of the collateral 
			account from the counterparty, i.e., 
			\begin{equation} \label{eq:coll_counterparty}
				\psi^{c,b}_t\Bck{b}_t=C^-_t, \,t \,\in\, [0,T]. 
			\end{equation}
\end{enumerate} 

It is worth noting that we use the following conventions: $x^+:=\max\{x,0\}$, $x^-:=\max\{-x,0\}$ so that $x=x^+-x^-.$ 
Note that this is in contrast to the convention adopted e.g. in \cite{bj2011b,bj2013}.

\subsubsection{Initial margin} \label{subsect:IM}
The collateralization represented by the variation margin is imperfect, due to the margin period of risk phenomenon: 
a defaulted counterparty stops posting collateral. However, bankruptcy procedure requires a certain time interval 
(typically 10 or 20 days) before the close-out payments are performed. 
This results in a period of time where the value of the transaction oscillates in the absence of an adjustment of 
the collateral account, hence producing an exposure. 
This is one of the reasons for the introduction of initial margins, which constitutes a further form of collateral.

According to the EMIR regulation, starting from 2020, most agents participating in an OTC transaction will be 
forced to post initial margin, which constitutes an additional form of collateral. 
Initial margin, according to \cite{thm2016} is a misnomer, as an initial margin is 
not only initial, but it is periodically updated during the lifetime of the trade. It is \emph{initial} in the sense 
that it is meant to provide a coverage from the initial point in time, where there is a default of the counterparty 
in a collateralized transaction. 

It is important to stress that, differently from variation margin, an initial margin can not be rehypothecated, 
but it is instead segregated. From the point of view of the wealth dynamics, this means that initial margin received 
from the counterparty can not be used by the trading desk as a component of the value of the portfolio. 
However, the received initial margin represents a loan from the counterparty that must be remunerated, hence funding 
costs related to initial margin will appear in the self-financing condition, see   \eqref{eq:sefFinancing} and \eqref{eq:dynamics} for further details.

\begin{remark} 
\textcolor{black}{
Initial margins are usually quantified by using suitable risk measures, 
such as \emph{value at risk} or \emph{expected shortfall,} as we will specify in \eqref{def:IM}. 
Expected shortfall is a popular choice to compute the initial margin for credit derivatives, since it is a 
coherent risk measure. 
Recently, the \emph{International Swaps and Derivatives Association} (ISDA) has proposed a novel methodology, 
the so called \emph{Standard Initial Margin Model} (SIMM), see \cite{isda_simm}. SIMM provides 
some standardized formulae to evaluate initial margin on non-cleared derivatives, based on using portfolio 
sensitivities instead of historical simulations. 
}
\end{remark} 

We model initial margins with $\GG$-adapted processes $I^{TC} = (I^{TC}_t)_{t \,\in\, [0,T]},$ $I^{FC} = (I^{FC}_t)_{t \,\in\, 
[0,T]},$ and we denote by $B^{I,x}, \, x \,\in\, \{l,b\},$ the cash accounts associated to $I^{TC}, \, I^{FC},$ respectively.   

In case the bank is the \emph{initial margin provider,} the bank posts $I^{TC}$ to the counterparty (TC), i.e.
		\begin{equation}\label{IM_TC} 
			\psi^{I,l}_tB^{I,l}_t = -I^{TC}_t, \,t \,\in\, [0,T], 
			\end{equation}
			or equivalently
			\begin{equation}\label{IM_TC2}
			-\psi^{I,l}_tdB^{I,l}_t = r^{I,l}_tI^{TC}_tdt \;. 
			\end{equation}
In case the bank is the \emph{initial margin taker,} the bank receives $I^{FC}$ from the counterparty (FC), i.e. 
			\begin{equation}\label{IM_FC} 
			\psi^{I,b}_tB^{I,b}_t = I^{FC}_t, \,t \,\in\, [0,T],    
			\end{equation}
			or equivalently  
			\begin{equation}\label{IM_FC2}
			\psi^{I,b}_tdB^{I,b}_t= r^{I,b}_tI^{FC}_tdt \;. 
			\end{equation}

We highlight that, contrary to the case of variation margin and in line with \textcolor{black}{current market practice}, 
$I^{TC}$ and $I^{FC}$ are simultaneously active and do not net each other. 
In the sequel we use  $I = (I_t)_{t\,\in\,[0,T]}$ as shorthand for both presented or received initial margins.

\textcolor{black}{We now introduce the definition of self-financing strategy in our market model. We recall that we assume the point of 
view of the bank (i.e. the hedger)}\textcolor{black}{.} 
\begin{definition} \label{def:dynamic_ptf} 
A \emph{dynamic portfolio,} denoted by $\varphi,$ is given by
\begin{align*}
\varphi = \left(\xi^1,\ldots,\xi^d,\xi^B,\xi^C,\psi^{1},\ldots,\psi^{d},\psi^B,\psi^C,\psi^{f,b},\psi^{f,l},\psi^{c,b},
					\psi^{c,l},\psi^{I,b},\psi^{I,l} \right), 
\end{align*}
where 
\begin{itemize}
\item[(i)] $\xi^1,\ldots,\xi^d$ are $\GG$-predictable processes, denoting the number of shares of the risky primary assets $S^1,\ldots,S^d$.

\item[(ii)] $\xi^B,\xi^C$ are $\GG$-predictable processes, denoting the number of shares of the risky bonds $P^B$ and $P^C$.

\item[(iii)] $\psi^{1},\ldots,\psi^{d},\psi^B,\psi^C$ are $\GG$-adapted processes, denoting the number of shares of the repo accounts 
			$B^{1},\ldots B^{d},B^B,B^C$.
			
\item[(iv)] $\psi^{f,b}$ is a $\GG$-adapted process, denoting the number of shares of the unsecured funding borrowing cash account $\Bfk{b}$. 

\item[(v)] $\psi^{f,l}$ is a $\GG$-adapted process, denoting the number of shares of the unsecured funding lending cash account $\Bfk{l}$. 

\item[(vi)] $\psi^{c,b}$ is a $\GG$-adapted process, denoting the number of shares of the collateral borrowing cash account $\Bck{b}$ for 
			the received cash collateral. 
			
\item[(vii)] $\psi^{c,l}$ is a $\GG$-adapted process, denoting the number of shares of the collateral lending cash account $\Bck{l}$ for the 
			posted cash collateral. 
			
\item[(viii)] $\psi^{I,b}$ is a $\GG$-adapted process, denoting the number of shares of the initial margin borrowing cash account $\BIk{b}$ for 
			the initial margin received from the counterparty.
			
\item[(ix)] $\psi^{I,l}$ is a $\GG$-adapted process, denoting the number of shares of the initial margin lending cash account $\BIk{l}$ for 
			the initial margin posted to the counterparty.
\end{itemize}
All processes introduced above are such that the stochastic integrals in the sequel are well defined.
\end{definition}

Given a dynamic portfolio, we associate it to a financial contract, known in the literature as \emph{Credit Support Annex (CSA),} 
see e.g. \cite{BCBS2014}.  

\begin{definition} 
A CSA between the bank and the counterparty is represented \textcolor{black}{by} the pair $(C,I),$ where $C$ is the variation margin and 
$I$ is the initial margin.  
\end{definition}

\begin{definition} \label{def:coll_hedger_strategy}
A \emph{collateralized hedger's trading strategy} associated to the collateralized contract $\bar{A}$ and the CSA $(C,I)$ is a 
quintuplet $\left(x,\varphi,\bar{A},C,I\right),$ where $x \,\in\,\RR$ is the initial endowment, $\varphi$ is a dynamic portfolio 
\textcolor{black}{as in Definition \ref{def:dynamic_ptf}} and $I$ is the initial margin.
\end{definition}

\begin{definition}
Given the initial endowment $x,$ a collateralized hedger's trading strategy $\left(x,\varphi,\bar{A},C,I\right)$ associated 
to the collateralized contract $\bar{A}$ and the CSA $(C,I)$ is said to be \emph{self-financing} if, for any $t \,\in\, [0,T],$ 
the \emph{wealth process} $V_t(\varphi)$ given by 
\begin{align}\label{def:wealth_prox}
\begin{aligned}
V_t(\varphi) & := \psi^{f,b}_tB^{f,b}_t + \psi^{f,l}_tB^{f,l}_t 
- \left(\psi^{c,b}_tB^{c,b}_t+\psi^{c,l}_tB^{c,l}_t+\psi^{I,l}_t\BIk{l}_t \right) \;, 
\end{aligned}
\end{align}
satisfies 
\begin{align}
\label{eq:sefFinancing}
\begin{aligned}
V_t(\varphi) & = x + \sum_{i=1}^d\int_{(0,t]}\xi^i_u\left(\mu^i(u,S_u)du + \sum_{k=1}^d\sigma^{i,k}(u,S_u)dW^{k,\PP}_u 
+ \kappa^i(u,S_u)du\right)\\
& + \sum_{i=1}^{d}\int_0^t\psi^i_u dB^i_u + \sum_{j\in\{B,C\}}\int_0^t\left(\xi^j_udP^j_u + \psi^j_udB^j_u\right) 
- \bar{A}_t\\
& + \int_0^t\psi^{f,b}_u dB^{f,b}_u + \int_0^t\psi^{f,l}_udB^{f,l}_u - \int_0^t\psi^{c,b}_udB^{c,b}_u 
- \int_0^t\psi^{c,l}_udB^{c,l}_u - \int_0^t\psi^{I,b}_u dB^{I,b}_u-\int_0^t\psi^{I,l}_udB^{I,l}_u. 
\end{aligned}
\end{align}
\end{definition}

The last two terms in \eqref{eq:sefFinancing} represent the cash for the received initial margin. 
In general, we assume zero initial endowment, $x=0$, i.e., $V_t(\varphi)=V_t\left(0,\varphi,\bar{A},C,I\right)$ 
for the sake of simplicity. \textcolor{black}{Note that the sign minus in front of the last term in \eqref{def:wealth_prox} 
depends on our convention on the collateral. Moreover, we are not including the cash account for the received initial margin. 
This is due to the fact that the received initial margin is posted in a segregated account and, hence, is not available 
as a funding asset to the trading desk.}

\begin{definition}\label{def:admissibility}
A collateralized hedger's trading strategy is \emph{admissible} if it is self-financing and the associated value process $V(\varphi)$ is bounded \textcolor{black}{from below}.  
\end{definition} 

We provide the following definition of \emph{arbitrage-free strategy.} 
\begin{definition} \label{def:ArbitrageFree}
\textcolor{black}{The} market \textcolor{black}{model} is \emph{arbitrage-free} if for
\textcolor{black}{any admissible hedger's trading strategy of the form}
$\left(0,\varphi,0,0,0 \right)$ we have either 
$$\PP\left[V_{\textcolor{black}{\tau}}\left(0,\varphi,0,0,0 \right) = 0 \right] = 1 \,\mbox{ or } \, 
\PP\left[V_{\textcolor{black}{\tau}}\left(0,\varphi,0,0,0 \right) < 0 \right] > 0 \textcolor{black}{.}$$ 
\end{definition}

\begin{remark}
In \cite[Definition 3.3]{BieRut15} the authors introduce the concept of a market which is said to be \textit{arbitrage-free for the hedger} 
with respect to a class of contingent claims. Their definition is formulated in terms of a \textit{netted wealth process}, which 
corresponds to a long-short strategy involving the claim $\bar{A}$, where the first position is hedged and the second is unhedged.
On the other hand, in \cite{BiCaStu2018} the question concerning absence of arbitrage is first answered in a setting where only the 
basic traded assets are considered. This is also referred to as \emph{absence of arbitrage with respect to the null contract} in 
\cite{biciarut2018}. In our setting, the two approaches coincide. 
\end{remark}

\begin{assumption}\label{assumptionRates}
We assume $r^{f,l}_t\leq r^{f,b}_t$, $d\PP\otimes dt$-a.s.
\end{assumption}

\textcolor{black}{We now} prove absence of arbitrage for \textcolor{black}{the market model in our setting.} 

\begin{proposition}\label{pro:noArbitrageBasicMarket}
Let Assumption~\ref{assumptionRates} hold. Moreover, assume there exists a probability 
measure $\QQ\sim \PP$ such that the discounted asset price processes
\begin{align} \label{eq:SPQ}
\tilde{S}^{i,cld}_t & := \frac{S^{i,cld}_t}{B^{i}_t}, \ i=1,\ldots d, \quad\quad 
\tilde{P}^j_t := \frac{P^{j}_t}{B^{j}_t},  \ j\in\{B,C\}, 
\end{align}
are local martingales. 
Then, the market \textcolor{black}{model} is free of arbitrage opportunities.
\end{proposition}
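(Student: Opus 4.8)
The plan is the classical one: exhibit a numéraire under which the discounted wealth of every admissible strategy of the form $(0,\varphi,0,0,0)$ is a $\QQ$-supermartingale, and then read off Definition~\ref{def:ArbitrageFree} from the supermartingale inequality at the bounded stopping time $\tau$. I will take the trading desk's lending account $B^{f,l}$ as numéraire and set $\tilde V_t:=V_t(\varphi)/B^{f,l}_t$.

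First I would specialise the self-financing identity~\eqref{eq:sefFinancing} to $\bar A\equiv0$, $C\equiv0$, $I\equiv0$. The collateral and margin relations~\eqref{eq:coll_bank}--\eqref{IM_FC2} then force $\psi^{c,b}=\psi^{c,l}=\psi^{I,b}=\psi^{I,l}\equiv0$, so \eqref{def:wealth_prox} reduces to $V_t=\psi^{f,b}_tB^{f,b}_t+\psi^{f,l}_tB^{f,l}_t$ and all collateral/margin integrals disappear from \eqref{eq:sefFinancing}. Using \eqref{eq:asset_prox} and \eqref{eq:prox} to identify $\mu^i\,du+\sum_k\sigma^{i,k}\,dW^{k,\PP}_u+\kappa^i\,du=dS^i_u+dD^i_u$, the repo constraint~\eqref{eq:repoConstraint} (and its analogue $\xi^j_tP^j_{t-}+\psi^j_tB^j_t=0$ for the risky bonds) to eliminate $\psi^i\,dB^i$ and $\psi^j\,dB^j$, and the elementary identities $dB^x=r^xB^x\,dt$ together with the definition~\eqref{eq:S_cld} of $S^{i,cld}$, each asset block collapses to $\xi^i_t\,(dS^i_t+dD^i_t)+\psi^i_t\,dB^i_t=\xi^i_tB^i_t\,d\tilde S^{i,cld}_t$ and $\xi^j_t\,dP^j_t+\psi^j_t\,dB^j_t=\xi^j_tB^j_t\,d\tilde P^j_t$. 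Hence $dV_t=\sum_{i=1}^d\xi^i_tB^i_t\,d\tilde S^{i,cld}_t+\sum_{j\in\{B,C\}}\xi^j_tB^j_t\,d\tilde P^j_t+\bigl(\psi^{f,b}_tr^{f,b}_tB^{f,b}_t+\psi^{f,l}_tr^{f,l}_tB^{f,l}_t\bigr)\,dt$.

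Next I would linearise the funding drift. The sign conventions of Section~\ref{repo_trading} together with \eqref{eq:bl_sim} give $\psi^{f,b}_tB^{f,b}_t=-V_t^-$ and $\psi^{f,l}_tB^{f,l}_t=V_t^+$, so the drift above equals $-r^{f,b}_tV_t^-+r^{f,l}_tV_t^+=r^{f,l}_tV_t+(r^{f,l}_t-r^{f,b}_t)V_t^-$. As $B^{f,l}$ is continuous of finite variation, the product rule yields $d\tilde V_t=\tfrac{1}{B^{f,l}_t}\bigl(\sum_i\xi^i_tB^i_t\,d\tilde S^{i,cld}_t+\sum_j\xi^j_tB^j_t\,d\tilde P^j_t\bigr)+\tfrac{r^{f,l}_t-r^{f,b}_t}{B^{f,l}_t}V_t^-\,dt$. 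Under $\QQ$ the processes $\tilde S^{i,cld}$, $\tilde P^j$ are local martingales and the integrands $\xi^i_tB^i_t/B^{f,l}_t$, $\xi^j_tB^j_t/B^{f,l}_t$ are predictable and locally bounded, so the bracketed term is a $\QQ$-local martingale, while by Assumption~\ref{assumptionRates} the remaining drift is non-positive; thus $\tilde V$ is a $\QQ$-local supermartingale. Since $r^{f,l}$ is bounded and $T<\infty$, $B^{f,l}$ is bounded away from $0$ and $\infty$, so admissibility (Definition~\ref{def:admissibility}) forces $\tilde V$ to be bounded from below, and a local supermartingale bounded from below is a genuine supermartingale, with $\tilde V_0=x/B^{f,l}_0=0$. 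Optional sampling at $\tau\le T$ gives $\EE^\QQ[\tilde V_\tau]\le0$; if $\PP[V_\tau\ge0]=1$, equivalently $\QQ[V_\tau\ge0]=1$ since $\QQ\sim\PP$, then $\tilde V_\tau\ge0$ $\QQ$-a.s., which forces $\EE^\QQ[\tilde V_\tau]=0$ and $V_\tau=0$ $\PP$-a.s. Consequently no admissible $(0,\varphi,0,0,0)$ can have both $\PP[V_\tau\ge0]=1$ and $\PP[V_\tau>0]>0$, i.e. the market model is arbitrage-free in the sense of Definition~\ref{def:ArbitrageFree}.

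The only real work is the first step: unwinding \eqref{eq:sefFinancing} through the repo constraints so that the risky-asset and bond terms close up exactly into the differentials of the $\QQ$-local martingales $\tilde S^{i,cld}$ and $\tilde P^j$, with no leftover drift. One has to be a little careful with the bond positions across the jump of $P^j$ at $\tau^j$ (which is why the repo constraint for bonds should be written with $P^j_{t-}$) and to check that the cumulative-dividend rescaling in \eqref{eq:S_cld} is precisely what makes the dividend stream and the repo financing combine into $B^i\,d\tilde S^{i,cld}$. Once the wealth equation is in this form, the supermartingale/optional-sampling conclusion is routine.
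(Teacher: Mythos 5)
Your proposal is correct and follows essentially the same route as the paper's own proof in Appendix~\ref{app:A1}: discount by the lending account $B^{f,l}$, use the repo constraints to collapse the asset and bond terms into integrals against the $\QQ$-local martingales $\tilde S^{i,cld}$ and $\tilde P^j$, invoke $r^{f,l}\le r^{f,b}$ to get a supermartingale, and conclude via the lower bound from admissibility. The only cosmetic difference is that you keep an exact decomposition with an explicitly non-positive drift $(r^{f,l}-r^{f,b})V^-/B^{f,l}$ and spell out the optional-sampling step, whereas the paper works with a pathwise inequality and leaves the final step to ``the usual lines.''
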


\begin{proof} 
See Section \ref{app:A1} in Appendix \ref{app}. 
\end{proof}

From now on, we assume the following. 
\begin{assumption} \label{ass:true_martingale}
There exists an equivalent martingale probability measure $\QQ\sim\PP$ under which 
the processes $\tilde{S}^{i,cld}_t, \, \tilde{P}^j_t$ in \eqref{eq:SPQ} are local martingales with dynamics 
\begin{align*}
d\tilde{S}^{i,cld}_t & = \frac{1}{B^{i}_t}\left(dS^i_t-r^i_tS^i_tdt+dD^i_t\right) 
= \sum_{k=1}^d\frac{\sigma^{i,k}(t,S_t)}{B^{i}_t}dW^{k,\QQ}_t, \ i=1,\ldots, d,\\
d\tilde{P}^j_t & = \frac{1}{B^{j}_t}\left(dP^j_t-r^j_tP^j_{t-}dt\right) = -\tilde{P}^j_{t-}dM^{j,\QQ}_t, \ j\in\{B,C\}, 
\end{align*}
\textcolor{black}{
where 
$$M^{j,\QQ}_t := H^{j,\QQ}_t - \int_0^{t\wedge\tau^j} \lambda^{j,\QQ}_s ds,$$ 
with 
$$\lambda^{j,\QQ}_t := r^f_t - \lambda^{j,\PP}_t - r^j_t, \; t \,\in\, [0,T], \, j\,\in\, \{B,C\}.$$ 	
}
\end{assumption} 

\textcolor{black}{
By \cite{AkFon2019} we obtain that the predictable representation property in $\GG$ still holds after the change of measure to $\QQ$ with respect to $W^{\QQ}, \, M^{j,\QQ}, \, j \,\in\, \{B,C\}.$} 
\textcolor{black}{
\begin{definition} \label{def:spaces} 
	Let	$\beta \geq 0.$ 
	The subspace of all $\RR^d$-valued, $\FF$-adapted processes $X$ such that  
	\begin{align} \label{eq:spaceH2}
		\Ex{\QQ}{\int_0^T e^{\beta t} \left|X_t\right|^2 dt}<\infty 
	\end{align}
	is denoted by $\cH^{2,d}_{\beta,T}(\QQ).$ We set $\cH^{2\textcolor{black}{,d}}(\QQ) := \cH^{2,\textcolor{black}{,d}}_{0,T}(\QQ)$ \textcolor{black}{and}
	\begin{equation} \label{def:NormH}
		\left\| X \right\|_{\cH^{2,d}_{\beta,T}} \textcolor{black}{:} = \sqrt{\Ex{\QQ}{\int_0^T e^{\beta t} \left|X_t\right|^2 dt}}. 
	\end{equation}	
	The subspace of all $\RR^d$-valued, $\FF$-adapted processes $X$ such that 
	\begin{align} \label{eq:spaceS2b}
		\EE^{\QQ}\left[\sup_{t \,\in\, [0,T]} e^{\beta t} \left|X_t\right|^2 \right] < \infty 
	\end{align}
	is denoted by $\cS^{2,d}_{\beta,T}(\QQ).$ We set $\cS^{2}(\QQ) := \cS^{2,1}_{\textcolor{black}{0},T}(\QQ)$ \textcolor{black}{and}
	\begin{equation} \label{def:NormS}
		\left\| X \right\|_{\cS^{2,d}_{\beta,T}} \textcolor{black}{:}= \sqrt{\EE^{\QQ}\left[\sup_{t \,\in\, [0,T]} e^{\beta t} \left|X_t\right|^2 \right]}. 
	\end{equation}
We denote by $\cH^{2,2}_{\lambda}(\QQ)$ the space of $\FF$-adapted processes $X$ with values in $\mathbb{R}^2$ such that 
\begin{align} \label{eq:spaceHN}
	\EE^{\QQ}\left[\int_0^T\left|X^1_t\right|^2 \lambda^{B,\QQ}_t+\left|X^2_t\right|^2 \lambda^{C,\QQ}_t dt \right] < \infty 
\end{align}
with 
\begin{equation} \label{def:NormS_HN}
	\left\| X \right\|_{\cH^{2,2}_{\lambda}} = \sqrt{\EE^{\QQ}\left[\int_0^T\left|X^1_t\right|^2 \lambda^{B,\QQ}_t+\left|X^2_t\right|^2 \lambda^{C,\QQ}_t dt \right]}. 
\end{equation}
\end{definition} 
}

\textcolor{black}{
\begin{assumption} \label{ass:proxA} 
	Assume that \textcolor{black}{the payment stream} $A\in \cS^2(\QQ).$ 
\end{assumption}
}

\section{xVA framework}\label{Sect:single} 
\textcolor{black}{ 
The main contribution of this Section is manifold. On the one hand, in an arbitrage-free framework we determine the unique solution to the $\mathbb{G}$-BSDE for the valuation of the hedger portfolio $(x, \varphi,\bar{A},C,I),$ consisting of risk-less and risky assets, defaultable bonds and contingent claims. 
On the other hand, we provide an arbitrage-free framework for clean values  and the market practice of using contingent claim specific discounting regimes (CSA discounting) by the introduction of a new value adjustment. 
Finally, we investigate the shape of the $\GG$-BSDE in the realistic situation where the portfolio has a hierarchical structure of sub-portfolios, and consider non-linear effects in portfolio's valuation.}

Under Assumption \ref{ass:true_martingale} the dynamics of a self-financing collateralized trading strategy 
$\left(x,\varphi,\bar{A},C,I\right)$ is 
\begin{align} \nonumber
dV_t(\varphi)&=\sum_{i=1}^d\xi^i_tB^i_td\tilde{S}^{i,cld}_t+\sum_{j\in\{B,C\}}\xi^j_tB^j_td\tilde{P}^j_t-d\bar{A}_t\\ \label{eq:coll_trad}
&\quad+\psi^{f,l}_tdB^{f,l}_t+\psi^{f,b}_tdB^{f,b}_t-\psi^{c,l}_tdB^{c,l}_t-\psi^{c,b}_tdB^{c,b}_t-\psi^{I,l}_tdB^{I,l}_t 
- \psi^{I,b}_tdB^{I,b}_t \;.  
\end{align}
By using the repo constraints \eqref{eq:repoConstraint}, \eqref{eq:coll_bank}, \eqref{eq:coll_counterparty} and \eqref{IM_TC2}, the portfolio value satisfies 
$$V_t(\varphi)=\psi^{f,l}_tB^{f,l}_t+\psi^{f,b}_tB^{f,b}_t+C_t+I^{TC}_t, \, t\,\in\,[0,T],$$ 
since $I^{FC}$ is segregated. 

Thanks to \eqref{eq:bl_sim} we obtain the identities 
\begin{align} \label{eq:psiFL}
\psi^{f,l}_t & = \left(V_t(\varphi)-C_t-I^{TC}_t\right)^+\left(B^{f,l}_t\right)^{-1}, \\ \label{eq:psiFB}
\psi^{f,b}_t & = -\left(V_t(\varphi)-C_t-I^{TC}_t\right)^-\left(B^{f,b}_t\right)^{-1}
\end{align}
for $t\,\in\,[0,T].$ 
Observe that by \eqref{eq:coll_bank} and \eqref{eq:coll_counterparty}
\begin{align} \label{Coll:condb2}
-\psi^{c,l}_tdB^{c,l}_t & = -\psi^{c,l}_t r^{c,l}_t B^{c,l}_t dt = +r^{c,l}_tC^+_tdt , \\ \label{Coll:condl2}
-\psi^{c,b}_tdB^{c,b}_t & = -\psi^{c,b}_t r^{c,b}_t B^{c,b}_t dt = -r^{c,b}_tC^-_tdt ,
\end{align}
respectively. 
By \eqref{eq:psiFL}, \eqref{eq:psiFB}, \eqref{Coll:condb2}, \eqref{Coll:condl2}, \eqref{IM_TC2} and \eqref{IM_FC2}, we can rewrite 
the wealth dynamics as follows
\begin{align}
\label{eq:dynamics} 
\begin{aligned}
dV_t(\varphi)& = \sum_{i=1}^d\xi^i_tB^i_td\tilde{S}^{i,cld}_t + \sum_{j\in\{B,C\}}\xi^j_tB^j_td\tilde{P}^j_t-d\bar{A}_t \\  
 & +\left[r^{f,l}_t\left(V_t(\varphi)-C_t-I^{TC}_t\right)^+ - r^{f,b}_t\left(V_t(\varphi)-C_t-I^{TC}_t\right)^-\right.\\ 
 & \left.+ r^{c,l}_tC^+_t - r^{c,b}_tC^-_t + r^{I,l}_tI^{TC}_t - r^{I,b}_tI^{FC}_t\right]dt.
\end{aligned}
\end{align}

We now introduce for convenience an auxiliary artificial interest rate process $r=\left(r_t\right)_{t\,\in\,[0,T]}$,
 assumed to be right-continuous, bounded and $\FF$-adapted. This rate is not necessarily linked to a traded asset, 
but it can be interpreted as an interest rate level, used to express all other rates as spreads over this artificial rate. 
When needed, we will explicitly state when the rate $r$ becomes a market rate. Using the artificial rate $r,$ we can conveniently 
rewrite the portfolio dynamics as follows 
\begin{align} \nonumber 
dV_t(\varphi) & = \sum_{i=1}^d\xi^i_tB^i_td\tilde{S}^{i,cld}_t+\sum_{j\in\{B,C\}}\xi^j_tB^j_td\tilde{P}^j_t-d\bar{A}_t\\ \nonumber 
& +\left[(r^{f,l}_t-r_t)\left(V_t(\varphi)-C_t-I^{TC}_t\right)^+-(r^{f,b}_t-r_t)\left(V_t(\varphi) - 
C_t-I^{TC}_t\right)^-\right.\\ \nonumber 
& \left.+(r^{c,l}_t-r_t)C^+_t-(r^{c,b}_t-r_t)C^-_t+(r^{I,l}_t-r_t)I^{TC}_t-r^{I,b}_tI^{FC}_t+r_tV_t(\varphi)\right]dt \\ \label{dynamics2}
& = \textcolor{black}{\sum_{i,k=1}^d \xi^i_t \sigma^{i,k}(t,S_t) d W_t^{k,\mathbb{Q}} - \sum_{j\in\{B,C\}}\xi^j_tB^j_t 
 \tilde{P}^j_{t-} d M_t^{j,\mathbb{Q}}} - d\bar{A}_t\\  \nonumber 
& +\left[(r^{f,l}_t-r_t)\left(V_t(\varphi)-C_t-I^{TC}_t\right)^+-(r^{f,b}_t-r_t)\left(V_t(\varphi) - 
C_t-I^{TC}_t\right)^-\right.\\ \nonumber 
& \left.+(r^{c,l}_t-r_t)C^+_t-(r^{c,b}_t-r_t)C^-_t+(r^{I,l}_t-r_t)I^{TC}_t-r^{I,b}_tI^{FC}_t+r_tV_t(\varphi)\right]dt 
\;,  
\end{align}
where we added and subtracted the term $r_tV_t(\varphi) dt$ \textcolor{black}{and used Assumption \ref{ass:true_martingale}.} 
\begin{remark}The term $(r^{I,l}_t-r_t)I^{TC}_t$ measures a funding benefit from the posted initial margin over the reference rate 
level $r$. We would like to stress that, in general, spreads over $r$ can be negative, representing that we may have funding costs, 
even when the bank is collateral provider. Such a situation is faced by banks, which clear swaps with the \emph{London Clearing House} 
(LCH). If $r$ is chosen to represent the EONIA overnight rate, then the rate applied by LCH is $r^{I,l}=r-58bps,$ where bps stands for
basis points\footnote{see https://www.lch.com/risk-collateral-management/ltd-collateral-management/ltd-fees-collateral}. 
On top of such a negative benefit, the bank needs to take into account the cost of raising the amount $I^{TC}$, hence initial margin can generate funding costs in both directions, from the point of view of fund-raising and from the point of view of collateral remuneration, hence representing a significant source of costs for the bank.
\end{remark}

\textcolor{black}{ 
In case of default, cashflows are exchanged between the surviving agent and the liquidators of the defaulted agent.
Here we use the term \emph{agent} as a placeholder for the bank or for the counterparty. Due to the exchange of 
cashflows at default time, agents need to perform a valuation of the position at a random time 
\textcolor{black}{represented by the close-out condition, see \cite{BiCaStu2018} Section 3.4}. 
The object of the analysis can be the value in the absence of counterparty risk (referred to in the literature as 
\emph{risk-free close-out}) or the value of the trade including the price adjustments due to counterparty risk and 
funding (\emph{risky close-out}), see e.g. \cite{BriMor2018}. 
A risky close-out condition guarantees that the surviving counterparty can ideally 
fully substitute the transaction with a new trade entered with another counterparty with the same credit quality. 
This comes at the price of a significant increase of the complexity of the valuation equations. Current 
market practice and the existing literature mainly focus on the case of risk-free close-out value. 
\begin{definition}Let $0<R^j<1, \; j\,\in\,\{B,C\},$ be the recovery rates
of the bank and the counterparty, respectively.
The close-out condition $\theta_\tau(\cV,C,I^{TC},I^{FC}),$ expressed from the bank's perspective, is defined by
\begin{equation} \label{eq:finalCloseOut} 
\begin{aligned}
	\theta_\tau(\cV,C,I^{TC},I^{FC}) & := \cV_\tau + \Ind{\tau^C< \tau^B}(1-R^C)\left(\cV_\tau - C_{\tau-} + I^{FC}_{\tau-}\right)^-\\
	&\quad-\Ind{\tau^B< \tau^C}(1-R^B)\left(\cV_\tau - C_{\tau-} - I^{TC}_{\tau-}\right)^+. 
\end{aligned}
\end{equation} 
\end{definition} 
}

We restate the portfolio dynamics \textcolor{black}{under $\mathbb{Q}$} in the form of a BSDE.  
We set 
\begin{subequations}
\begin{align} 
\label{eq:Zk}
Z^k_t & :=\frac{\sum_{i=1}^d\xi^i_t\sigma^{i,k}(t,S_t)}{B_t^i}, \\ \label{eq:Uj}
U^j_t & :=-\xi^j_t \tilde{P}^j_{t-},\\ \label{eq:driver}
f(t,V,C,I^{TC},I^{FC}) & :=-\left[(r^{f,l}_t-r_t)\left(V_t(\varphi)-C_t-I^{TC}_t\right)^+-(r^{f,b}_t-r_t)\left(V_t(\varphi)-C_t-I^{TC}_t\right)^-
\right.\\ \nonumber
&\left.+(r^{c,l}_t-r_t)C^+_t-(r^{c,b}_t-r_t)C^-_t+(r^{I,l}_t-r_t)I^{TC}_t-r^{I,b}_tI^{FC}_t\right]. 
\end{align}
\end{subequations}
\textcolor{black}{Under the close-out condition \eqref{eq:finalCloseOut},}
the $\GG$-BSDE for the portfolio's dynamics \textcolor{black}{in \eqref{dynamics2} under $\mathbb{Q}$} has then the form 
\begin{align} \label{eq:GBSDEdiff}
\begin{cases}
-dV_t(\varphi) = d\bar{A}_t + \left(f(t,V,C,I^{TC},I^{FC}) - r_tV_t(\varphi)\right)dt 
-\sum_{k=1}^dZ^k_tdW^{k,\QQ}_t - \sum_{j\in\{B,C\}}U^j_tdM^{j,\QQ}_t\\
V_\tau(\varphi) = \theta_\tau(\cV,C,I^{TC},I^{FC})\textcolor{black}{,}
\end{cases}
\end{align}
\textcolor{black}{for $t \leq \tau.$}
We prove in Theorem \ref{thm:solGbsde} that there exists a unique solution $(V,Z,U)$ for the $\GG$-BSDE \eqref{eq:GBSDEdiff}, 
and the process $V$ assumes the following form on $\{\tau > t\}$
\begin{align}\label{eq:GBSDEint} 
   V_t(\varphi) = B^r_t\Excond{\QQ}{\int_{(t,\tau \wedge T]}\frac{d\bar{A}_u}{B^r_u} 
		+ \int_t^{\tau \wedge T}\frac{f(u,V,C,I^{TC},I^{FC})}{B^r_u}du 
		+ \Ind{\tau\leq T}\frac{\theta_\tau(\cV,C,I^{TC},I^{FC})}{B^r_\tau}}{\cG_t}, 
\end{align} 
where $B_t^r := \exp\left(\int_0^t r_u du \right), \, t\,\in\, [0,T].$ 
{\textcolor{black}{
\begin{remark}
The $\GG$-BSDE \eqref{eq:GBSDEdiff} is in line with the current market practice on xVA. Regulatory changes in the bankruptcy procedures can be encoded via changes in the terminal condition (i.e. the close-out condition) whereas new funding policies can be captured by suitable adjustments of the driver.
\end{remark}
In order to solve \eqref{eq:GBSDEdiff} we first have to specify our choice for $\mathcal{V}$ in the close-out condition \textcolor{black}{\eqref{eq:finalCloseOut}}. To this purpose, we need the results of the following Section.} 

\subsection{Clean Value under $\FF$} \label{sect:clean}
A financial product can be traded between any two counterparties. Since every agent has a different credit quality and different 
funding costs, this means in general that a single product (e.g. a 10 year EUR swap) has as many potential values as the number of 
possible combinations of agents in the market. It would be highly impractical for a broker to publish all possible market quotes for all possible counterparties. 
\textcolor{black}{
Publicly observable market quotes provided by data providers do not take into account xVA frictions and are typically given by a single value (more precisely a bid and 
offer price) called \emph{clean price}. We now provide an arbitrage-free valuation for clean prices consistent with the current market practice for market quotes\footnote{Some authors criticize  the idea of a price decomposition in terms of clean value and an adjustment process, see e.g. \cite{biciarut2018}. A situation where the decomposition is not justified is given e.g. by a contingent claim whose dividend process depends in a non-linear way on the strategy of the hedger in a fully non-linear market, where we have bid offer spread in all rates, including the repo rates of the assets (which are zero in our case). However, payoffs traded on the market do not usually  feature such non-linear effects. Furthermore, publicly available market quotes  are not influenced by counterparty credit risk.}. 
}

A \emph{clean price} is an ideal value process that would be acceptable between two agents entering a perfectly collateralized 
transaction. Perfect collateralization however is not enough to produce a clean price: we also need to explicitly assume that 
the two agents entering the transaction are default-free. This is necessary because, even in the presence of a perfect ideal 
collateral agreement, counterparty risk is not perfectly annihilated: when a counterparty defaults, she stops posting collateral. 

However, default is not automatically legally recognized: typically, bankruptcy procedures require some days (e.g. 10 or 
20 days) before the close-out payments are exchanged. This creates a period of time where the counterparty is not officially 
defaulted but without any collateral adjustment. Such period of time is known as \emph{margin period of risk.} 
During such interval of time the value of the claim deviates from the value of the collateral account thus creating a 
credit exposure. 

Hence, to preclude margin period of risk and obtain the ideal clean price process, we consider a parallel market model with perfect collateralization but no default risk. 

\begin{assumption}[Clean market]\label{cleanMarket} 
A clean market under $\FF$ without bid-offer spreads is defined by 
\begin{itemize}
\item[(i)] no bid-offer spread in the funding accounts, i.e., $r^{f,l}_t=r^{f,b}_t=r^f$; 
\item[(ii)] no bid-offer spread in the collateral accounts, i.e.,  $r^{c,l}_t=r^{c,b}_t=r^c$; 
\item[(iii)] the collateral rate is equal to the fictious rate, i.e., $r^c=r$; 
\item[(iv)] there is no default, i.e. $\tau^B=\tau^C=\infty$ 
and risky bonds are excluded from the market; 
\item[(v)] there is no exchange of initial margin; 
\item[(vi)] perfect collateralization, i.e., $\hat{V}_t\equiv C_t$, for all $t \,\in\, [0,T],$ where we use $\hat{V}$ 
			to denote the value process of a collateralized hedging strategy in the fictious market without default-risk.
\end{itemize}
\end{assumption}

Note that \emph{(vi)} in Assumption \ref{cleanMarket} implies that the portfolio weights in the cash accounts are of 
the form
\begin{align*}
\psi^c_t=-\frac{\hat{V}_t}{B^c_t}, \quad \psi^f_t\equiv 0,  \,\mbox{ for all }\, t\,\in\,[0,T], 
\end{align*}
meaning that the position is totally funded by the collateralization scheme, and $\hat{V} = (\hat{V}_t)_{[0,T]}$ 
is an $\FF$-adapted process.

\textcolor{black}{The dynamics of the clean portfolio value $\hat{V}$ resulting from \eqref{dynamics2} and Assumption~\ref{cleanMarket} 
are given by}

\begin{align}\label{cleanPortfolio}
\begin{aligned}
d\hat{V}_t(\varphi) & = \sum_{k=1}^d\hat{Z}^k_tdW^{k,\QQ}_t-dA_t+r_t\hat{V}_t(\varphi)dt, \,\mbox{ where } \, 
\hat{Z}^k_t :=\sum_{i=1}^d\hat{\xi}^i_t\sigma^{i,k}(t,S_t). 
\end{aligned}
\end{align}

Inserting the terminal condition $\hat{V}_T=0,$ we can rewrite \textcolor{black}{the $\mathbb{F}$-dynamics \eqref{cleanPortfolio} for $\hat{V}$ under $\mathbb{\QQ}$} in the classical $\FF$-BSDE form 

\begin{align} \label{cleanBSDE}
\begin{cases}
-d\hat{V}_t(\varphi) = dA_t-r_t\hat{V}_t(\varphi)dt -\sum_{k=1}^d\hat{Z}^k_tdW^{k,\QQ}_t\\
\hat{V}_T(\varphi) = 0.
\end{cases}
\end{align}

We now prove existence and uniqueness for the solution of \eqref{cleanBSDE}. 

\begin{theorem}\label{thm:solFbsde}
Under Assumption~\ref{ass:proxA} on $A,$ 
there exists a unique solution $\left(\hat{V},\hat{Z} \right) \,\in\, \cS^{2}(\QQ) \times \cH^{2,d}(\QQ)$
to the clean \textcolor{black}{$\mathbb{F}$}-BSDE \eqref{cleanBSDE}.
\end{theorem}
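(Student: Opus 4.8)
The plan is to remove the exogenous payment stream $A$ from the clean BSDE \eqref{cleanBSDE} by a shift, reducing it to a standard Lipschitz BSDE driven only by $W^{\QQ}$, to which the classical existence and uniqueness theory recalled in Appendix~\ref{app} applies directly. Concretely, I would set $Y_t := \hat{V}_t(\varphi) + A_t$. Since $A_0 = 0$, adding $dA_t$ to both sides of \eqref{cleanBSDE} cancels the finite-variation driving term and produces a continuous process $Y$ (the jumps $\Delta\hat{V}_t = -\Delta A_t$ of $\hat{V}$ cancel those of $A$) solving
\begin{align*}
Y_t = A_T - \int_t^T r_u\left(Y_u - A_u\right) du - \int_t^T \sum_{k=1}^d \hat{Z}^k_u\, dW^{k,\QQ}_u, \qquad t \,\in\, [0,T],
\end{align*}
i.e. a BSDE with terminal datum $A_T$, driver $g(u,y) := -r_u(y - A_u)$, and no payment stream. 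Conversely, $\hat{V}_t(\varphi) := Y_t - A_t$ recovers a solution of \eqref{cleanBSDE} from a solution of the reduced equation, so the two problems are equivalent and it suffices to solve the reduced one.

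Next I would check that $(g, A_T)$ satisfies the hypotheses of the standard theorem under $\QQ$, with respect to the $(\FF,\QQ)$-Brownian motion $W^{\QQ}$. The driver $g$ is progressively measurable and uniformly Lipschitz in $y$ with constant $\|r\|_\infty < \infty$, since $r$ is bounded and $\FF$-adapted. The process $u \mapsto g(u,0) = -r_u A_u$ is square-integrable, because
\begin{align*}
\Ex{\QQ}{\int_0^T \left|r_u A_u\right|^2 du} \;\le\; \|r\|_\infty^2\, T\, \Ex{\QQ}{\sup_{u \,\in\, [0,T]} \left|A_u\right|^2} \;<\; \infty
\end{align*}
by Assumption~\ref{ass:proxA}; and the terminal value satisfies $\Ex{\QQ}{|A_T|^2} \le \Ex{\QQ}{\sup_{u\in[0,T]}|A_u|^2} < \infty$ for the same reason. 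The classical result then yields a unique pair $(Y,\hat{Z}) \,\in\, \cS^2(\QQ) \times \cH^{2,d}(\QQ)$ solving the reduced BSDE.

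Finally I would transfer the conclusion back: $\hat{V}(\varphi) := Y - A$ solves \eqref{cleanBSDE}, and $\hat{V}(\varphi) \,\in\, \cS^2(\QQ)$ since $Y,\,A \,\in\, \cS^2(\QQ)$ (triangle inequality in the $\cS^2$-norm, using Assumption~\ref{ass:proxA} again), while the control process $\hat{Z}$ is unchanged and hence in $\cH^{2,d}(\QQ)$. Uniqueness of $(\hat{V},\hat{Z})$ is inherited from uniqueness of $(Y,\hat{Z})$, because $\hat{V} \mapsto \hat{V} + A$ is a bijection between solutions of the two BSDEs in the relevant spaces.

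I do not expect a real obstacle: \eqref{cleanBSDE} is linear with bounded coefficients, so the only point requiring a little care is the presence of the \cadlag finite-variation payment stream $A$, and this is exactly what the shift $Y = \hat{V} + A$ disposes of. As an alternative one could bypass the reduction and directly verify that $\hat{V}_t(\varphi) = B^r_t\,\Excond{\QQ}{\int_{(t,T]} (B^r_u)^{-1}\, dA_u}{\cF_t}$, with $B^r_t = \exp(\int_0^t r_u\, du)$, solves \eqref{cleanBSDE} — extracting $\hat{Z}$ from the martingale representation property in $\FF$ under $\QQ$ — but the reduction to the classical theorem is shorter and I would favour it.
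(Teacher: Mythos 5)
Your proof is correct, but it takes a more elementary and self-contained route than the paper. The paper disposes of the payment stream $A$ by invoking \cite[Theorem 4.1]{nr2016} (recalled as Theorem~\ref{th:existenceUniquenessNR}), a result tailored to BSDEs of the form $dY_t = Z_t^\top dM_t - h(t,Y_t,Z_t)\,dQ_t + dU_t$ with a general square-integrable martingale driver $M$ and an exogenous finite-variation stream $U$; with $M=W^{\QQ}$, $Q_t=t$, $U=A$ and $h=-r\hat{V}$ this yields $\hat{V}\in\cH^{2}(\QQ)$ and $\hat{V}-A\in\cS^{2}(\QQ)$, and Assumption~\ref{ass:proxA} then upgrades $\hat{V}$ to $\cS^{2}(\QQ)$. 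Your shift $Y=\hat{V}+A$ performs by hand exactly the manoeuvre that the cited theorem packages (its conclusion $\Ex{\QQ}{\sup_t|Y_t-U_t|^2}<\infty$ is precisely the statement that the shifted process is the well-behaved one), reducing to the classical Lipschitz BSDE theory of El Karoui--Peng--Quenez with terminal datum $A_T$ and driver $g(u,y)=-r_u(y-A_u)$; the hypotheses you check (Lipschitz constant $\|r\|_\infty$, $g(\cdot,0)\in\cH^{2}(\QQ)$, $A_T\in L^2$) all follow from boundedness of $r$ and $A\in\cS^{2}(\QQ)$, and the final triangle-inequality step to recover $\hat{V}\in\cS^{2}(\QQ)$ mirrors the paper's last line. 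What your approach buys is transparency and independence from the Nie--Rutkowski machinery (including the $\XX$-Lipschitz and ellipticity apparatus, which is genuinely superfluous for this linear, Brownian-driven equation); what the paper's approach buys is uniformity of reference, since the same framework is reused elsewhere. Two small points you handled correctly but that deserve emphasis: the cancellation $\Delta\hat{V}_t=-\Delta A_t$ is what makes $Y$ continuous and hence eligible for the classical theorem, and the bijection between solution sets in $\cS^{2}(\QQ)\times\cH^{2,d}(\QQ)$ (both directions) is needed for uniqueness to transfer, which again uses $A\in\cS^{2}(\QQ)$.
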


\begin{proof}
We note that the clean BSDE \eqref{cleanBSDE} is similar to the linear BSDE studied e.g. in  \cite{elpq1997}, 
where the driver is the multidimensional Brownian motion $\left(W^{1,\QQ},\ldots,W^{d,\QQ}\right)^\top$. 

We can apply \cite[Theorem 4.1]{nr2016} by observing that $M = W^\QQ$, $Q_t = t$, $U = A$, 
$\hat{V} = Y$ and $h(t,Y_t,Z_t) = -r_t\hat{V}_t$, which clearly fulfills the uniform Lipschitz condition. 
Also the condition $h(\cdot,0,0)\in\cS^2(\QQ)$ is trivially satisfied. 
We also observe that $\XX = S = diag(S^1,\ldots,S^d)$, hence we have $m_t = \sigma(t,S_t),$ so that 
$\gamma_t = S^{-1}_1\sigma(t,S_t),$ for $\gamma$ satisfying the ellipticity condition \eqref{eq:ellipticity}.
According to Theorem~\ref{th:existenceUniquenessNR} we have $\hat{V} \,\in\, \cH^{2}(\QQ)$ and $\hat{V}-A \,\in\, \cS^{2}(\QQ)$. 
Now, Assumption~\ref{ass:proxA} allows us to conclude that also $\hat{V}\in\cS^{2}\textcolor{black}{(\mathbb{Q})}$.
\end{proof}

Next we show that the process $\hat{V}$ in Theorem~\ref{thm:solFbsde} provides the arbitrage-free price for the contract 
with cashflow stream $A$.

\begin{theorem}\label{th:cleanPriceTh}
Let $\QQ\sim\PP$ be an equivalent  probability measure such that all processes $\tilde{S}^{i,cld}$, $i=1,\ldots,d,$ are local 
$\QQ$-martingales. Let $\left(\hat{V},\hat{Z} \right)$ be the unique solution of 
\eqref{cleanBSDE}. 
Then, under Assumption \ref{ass:proxA} on $A,$ we have 
\begin{align}\label{fair_price:clean_mkt}
\hat{V}_t(\varphi):=\Excond{\QQ}{B^r_t\int_{(t,T]} \frac{dA_u}{B^r_u}}{\cF_t}, \,\mbox{for all } t \,\in\, [0,T] \;.
\end{align}
\end{theorem}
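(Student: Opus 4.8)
The plan is to exploit the fact that the clean BSDE \eqref{cleanBSDE} is \emph{linear} in $\hat{V}$, with the elementary driver $h(t,y,z)=-r_t y$, so that the classical explicit representation for linear BSDEs (see e.g. \cite{elpq1997}) applies: discount $\hat{V}$ by the account $B^r$, check that the discounted value plus the discounted accumulated cashflows is a true $\QQ$-martingale, and then read off \eqref{fair_price:clean_mkt} by conditioning on $\cF_t$. Note that for the identity \eqref{fair_price:clean_mkt} only the dynamics \eqref{cleanBSDE} under $\QQ$ (with $W^\QQ$ a $\QQ$-Brownian motion) and the solution from Theorem~\ref{thm:solFbsde} are needed; the martingale-measure hypothesis enters only in interpreting $\hat{V}$ as an arbitrage-free price.

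First I would perform the integration by parts. Since $B^r$ is continuous, of finite variation, strictly positive, with $dB^r_t=r_t B^r_t\,dt$ and $B^r_0=1$, the product rule (the covariation term vanishes because $1/B^r$ is continuous and of finite variation) gives
\begin{equation*}
d\!\left(\frac{\hat{V}_t(\varphi)}{B^r_t}\right)=\frac{1}{B^r_t}\,d\hat{V}_t(\varphi)-\frac{r_t\hat{V}_t(\varphi)}{B^r_t}\,dt .
\end{equation*}
Substituting $d\hat{V}_t(\varphi)=-dA_t+r_t\hat{V}_t(\varphi)\,dt+\sum_{k=1}^d\hat{Z}^k_t\,dW^{k,\QQ}_t$ from \eqref{cleanBSDE}, the drift terms cancel and, integrating on $(0,t]$,
\begin{equation*}
\frac{\hat{V}_t(\varphi)}{B^r_t}+\int_{(0,t]}\frac{dA_u}{B^r_u}=\hat{V}_0(\varphi)+\sum_{k=1}^d\int_0^t\frac{\hat{Z}^k_u}{B^r_u}\,dW^{k,\QQ}_u ,
\end{equation*}
so that $N_t:=\hat{V}_t(\varphi)/B^r_t+\int_{(0,t]}dA_u/B^r_u$ is a local $\QQ$-martingale.

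Next I would upgrade $N$ to a genuine martingale. Because $r$ is bounded, $1/B^r$ is bounded on $[0,T]$; together with $\hat{Z}\in\cH^{2,d}(\QQ)$ from Theorem~\ref{thm:solFbsde} this yields $\EE^\QQ\!\big[\int_0^T|\hat{Z}_u/B^r_u|^2\,du\big]<\infty$, hence the stochastic integral, and therefore $N$, is a true $\QQ$-martingale. Conditioning on $\cF_t$, using the terminal condition $\hat{V}_T(\varphi)=0$ and the $\cF_t$-measurability of $\int_{(0,t]}dA_u/B^r_u$, gives
\begin{equation*}
\frac{\hat{V}_t(\varphi)}{B^r_t}=\Excond{\QQ}{\int_{(t,T]}\frac{dA_u}{B^r_u}}{\cF_t},
\end{equation*}
which is \eqref{fair_price:clean_mkt} after multiplying by $B^r_t$. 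The right-hand side is well defined: integration by parts for the càdlàg finite-variation process $A$ against the continuous finite-variation process $1/B^r$ yields $\int_{(t,T]}dA_u/B^r_u=A_T/B^r_T-A_t/B^r_t+\int_t^T r_u A_u/B^r_u\,du$, and, since Assumption~\ref{ass:proxA} gives $\EE^\QQ\big[\sup_{s\in[0,T]}|A_s|\big]<\infty$, the boundedness of $r$ and of $1/B^r$ makes this quantity $\QQ$-integrable (and $\hat{V}\in\cS^2(\QQ)$ makes $N_t$ integrable for each $t$).

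There is no real obstacle here; the only points requiring care are the passage from the local to the true martingale — controlled by the $\cH^{2,d}$-estimate on $\hat{Z}$ together with the boundedness of the discount factor — and the integrability of the cashflow integral, which is reduced, via integration by parts, to the bound $\EE^\QQ[\sup_t|A_t|]<\infty$. Finally, formula \eqref{fair_price:clean_mkt} exhibits $\hat{V}$ as the $B^r$-discounted conditional expectation of the future payment stream $A$ under the martingale measure $\QQ$ of Proposition~\ref{pro:noArbitrageBasicMarket}, which is precisely the arbitrage-free price of the perfectly collateralized, default-free contract with cashflow stream $A$.
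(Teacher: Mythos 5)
Your argument is correct and is exactly the argument the paper's one-line proof is invoking: discount by $B^r$, integrate by parts, use $\hat{Z}\in\cH^{2,d}(\QQ)$ together with the boundedness of $r$ (hence of $1/B^r$) to upgrade the local martingale to a true one, and close with the terminal condition $\hat{V}_T=0$ and Assumption~\ref{ass:proxA} for integrability of the cashflow integral. You have simply written out in full the steps the paper leaves implicit.
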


\begin{proof}
The proof follows because $\hat{Z} \,\in\, \cH^{2,d}(\QQ)$ by Theorem \ref{thm:solFbsde}, $B$ is bounded and thanks to 
Assumption \ref{ass:proxA}.  
\end{proof}

\textcolor{black}{
From now on, we assume to work with the \cadlag version of $\hat{V}.$ }
\begin{remark}  
Here we introduce the concept of clean value by means of a replicating strategy in a fictious idealized market. Our constructive approach is in line with the \textcolor{black}{current} market standard \textcolor{black}{and the concept of \emph{third-party valuation} of \mbox{\cite{BiCaStu2018}}}. 
Formula \eqref{fair_price:clean_mkt} encodes the idea of \emph{CSA discounting.} 
Since the rate $r$ is the remuneration of collateral in a stylized perfect collateral agreement, we do not need to 
postulate the existence of a risk-free rate. \cite{BiCaStu2018} define the clean value by introducing an additional 
valuation measure different from $\QQ.$ 
Working with the pricing measure $\QQ$ also avoids the issue of estimating parameters under different measures. 
\end{remark}

So far, our discussion of the clean market focused on a dividend process specified under the reference filtration $\FF$. 
As stressed e.g. in \cite{crepey2015b}, this assumption is too restrictive to e.g. cover credit derivatives or \emph{wrong-way risk.} 
Though, our objective is to focus on multiple aggregation levels and different discounting regimes, hence we choose to avoid the technicalities that are involved in generalizations of the immersion hypothesis.

\subsection{\textcolor{black}{Portfolio valuation via $\GG$-BSDE}}
\textcolor{black}{
We now discuss existence and uniqueness of the solution for the \textcolor{black}{$\mathbb{G}$}-BSDE \eqref{eq:GBSDEdiff} by following the approach of \cite{crepey2015b}. 
To this purpose we use the results on the clean value of Section \ref{sect:clean} and the following Assumption.}

\begin{assumption}\label{assumptionCloseOut}
We assume a risk-free close-out valuation under $\FF$, namely we set $\cV_t=\hat{V}_t(\varphi)$ in \textcolor{black}{the close-out condition}  \eqref{eq:finalCloseOut}. 
\end{assumption}

\begin{definition} \label{def:XVA}
We define the following valuation adjustments:
\begin{align*}
CVA_t& := B^r_t\Excond{\QQ}{
	\Ind{\tau^C< \tau^B}(1-R^C)\frac{1}{B^r_\tau}\left(\hat{V}_\tau(\varphi)-C_{\tau-} + I^{FC}_{\tau-}\right)^-}{\cG_t}, \\
DVA_t& := B^r_t\Excond{\QQ}{
	\Ind{\tau^B< \tau^C}(1-R^B)\frac{1}{B^r_\tau}\left(\hat{V}_\tau(\varphi)-C_{\tau-} - I^{TC}_{\tau-}\right)^+}{\cG_t}, \\
FVA_t& :=B^r_t \Excond{\QQ}{\int_t^{\textcolor{black}{\tau}} 
	\frac{(r^{f,l}_u-r_u)\left(V_u(\varphi)-C_u-I^{TC}_u\right)^+ - (r^{f,b}_u-r_u)\left(V_u(\varphi)-C_u-I^{TC}_u\right)^-}{B^r_u}du}{\cG_t}, \\
ColVA_t& :=B^r_t \Excond{\QQ}{\int_t^{\textcolor{black}{\tau}} 
	 \frac{(r^{c,l}_u-r_u)C^+_u-(r^{c,b}_u-r_u)C^-_u}{B^r_u}du}{\cG_t}, \\
MVA_t& := B^r_t\Excond{\QQ}{\int_t^{\textcolor{black}{\tau}} 
	\frac{(r^{I,l}_u-r_u)I^{TC}_u-r^{I,b}_uI^{FC}_u}{B^r_u}du}{\cG_t}.
\end{align*}
On $\left\{\tau > t \right\},$ we define 
\begin{align}\label{eq:XVA}
XVA_t := -CVA_t + DVA_t + FVA_t + ColVA_t + MVA_t,  
\end{align}
and set 
\begin{equation} \label{eq:XVA_tau}
	XVA_{\tau} = -\theta_{\tau} + \hat{V}_{\tau} \,\mbox{ on } \left\{\tau \leq t \right\},
\end{equation} 
where $\theta_{\tau}$ is defined in \eqref{eq:finalCloseOut}. 
\end{definition}

\textcolor{black}{Note that} the FVA term in Definition~\ref{def:XVA} has a recursive nature, \textcolor{black}{see also \mbox{\cite{pit10}}}. 
The exposure is proportional to the full value of the transaction $V$ and not only to the clean value $\hat{V}$. 
This implies a high complexity of the numerical scheme. 
Some practitioner's papers, such as \mbox{\cite{bj2013}}, avoid the recursivity issue by means of ad-hoc choices of the funding 
strategies, such as the funding strategy called \emph{semi-replication with no shortfall on default.} 
However, the bank usually needs to fund the clean value and the value adjustments. 
Hence, this feature cannot be ignored in a comprehensive mathematical model. 

\begin{example}
Set $I^{TC}_t=I^{FC}_t=0$, $r^{f,b}=r^{f,l}=r^f$, $r^{c,b}=r^{c,l}=r^c$ and $\tau^C=\tau^B=\infty$. 
Then the driver of the full BSDE is given by
\begin{equation} \label{ex_Piterbarg} 
f(t,V,C,0) := -\left((r^{f}_t-r_t)\left(V_t(\varphi)-C_t\right)+(r^{c}_t-r_t)C_t\right), \, t\,\in\,[0,T].
\end{equation}
In this case, the integral representation \eqref{eq:GBSDEint} of $V$ is of the form
\begin{align}
V_t(\varphi) = B^r_t\Excond{\QQ}{\int_{(t,T]}\frac{dA_u}{B^r_u}+\int_t^{T}\frac{f(u,V,C,0)}{B^r_u}du}{\cF_t}, \, t\,\in\,[0,T].
\end{align} 
If we set $r_t=r^f_t$ $d\PP \otimes dt$-a.s. then we obtain by   \eqref{ex_Piterbarg} that 
\begin{align}
V_t(\varphi)=B^{r^f}_t\Excond{\QQ}{\int_{(t,T]}\frac{dA_u}{B^{r^f}_u}+\int_t^{T}(r^{f}_u-r^c_u)\frac{C_u}{B^{r^f}_u}du}{\cF_t}, 
\, t\,\in\,[0,T].
\end{align}
This corresponds to equation (3) in \cite{pit10}. If we set $r_t=r^c_t$ $d\PP \otimes dt$-a.s. in \eqref{ex_Piterbarg}, we obtain
\begin{align}
V_t(\varphi)=B^{r^c}_t\Excond{\QQ}{\int_{(t,T]}\frac{dA_u}{B^{r^c}_u}-\int_t^{T}(r^{f}_u-r^c_u)\frac{\left(V_t(\varphi)-C_u\right)}{B^{r^c}_u}du}{\cF_t}, \, t\,\in\,[0,T], 
\end{align}
which corresponds to equation (5) in \cite{pit10}.
\end{example}

From Assumption~\ref{assumptionCloseOut} we have $\cV=\hat{V}$. Since $\hat{V}$ is an $\FF$-adapted \textcolor{black}{\cadlag} process, we know from \textcolor{black}{\mbox{\cite[Lemma 2.1]{crepey2015b}}} that $\Delta \hat{V}_\tau=0$ 
\textcolor{black}{
under the Hypothesis \ref{hp:H} between $\mathbb{F}$ and $\mathbb{G}$.} 
The same argument holds true for the collateral process $C,$ which we assumed to be a Lipschitz function of the clean value, and for the initial margin $I$, be it posted or received. 
\textcolor{black}{From now on we identify the clean value $\hat{V},$ the collateral process $C$ and the initial margin $I$ with their left limits.} 
We set 
\begin{align}
\label{eq:redefineEAD}
\begin{aligned}
\theta^C_t&:=(1-R^C)\left(\hat{V}_t-C_t+I^{FC}_t\right)^-, \\
\theta^B_t&:=(1-R^B)\left(\hat{V}_t-C_t-I^{TC}_t\right)^+, \\ 
\end{aligned}
\end{align}

\begin{definition}
We call \emph{pre-default xVA-BSDE} the following $\FF$-BSDE on $[0,T]$ with null terminal condition in $T:$ 
\begin{align} \label{eq:XVApreDef}
\begin{cases}
-d\overline{XVA}_t = \bar{f}(\hat{V}_t-\overline{XVA}_t)dt-\sum_{k=1}^d\bar{Z}^k_tdW^{k,\QQ}_t \\
\overline{XVA}_T = 0, 
\end{cases}
\end{align}
where 
\begin{align}\label{eq:driverXVApreDef} 
	\bar{f}(\hat{V}_t-\overline{XVA}_t)& := -f(t,\hat{V}-\overline{XVA},C,I^{TC},I^{FC}) - 
	(r_t+\lambda^{C,\QQ}_t+\lambda^{B,\QQ}_t)\overline{XVA}_t 
	- \lambda^{C,\QQ}_t\theta^C_t+\lambda^{B,\QQ}_t\theta^B_t, 
\end{align}
for $\theta^B,\,\theta^C$ defined as in \eqref{eq:redefineEAD} and $\lambda_t^{j,\QQ} = r_t^f - 
\lambda_u^{j,\PP} - r_t^j, \; t \,\in\, [0,T], \, j \,\in\, \{B,C\}.$ 
\end{definition}

We now discuss existence and uniqueness for the solution to \eqref{eq:XVApreDef}. \textcolor{black}{In the following, we use subscripts of the form $_{t:T}$ to indicate the dependence on the path of a process from $t$ up to $T$.}
First, we observe that the driver \eqref{eq:driverXVApreDef} also depends on the initial margins. We set 
\begin{equation} \label{def:IM}
I^i_{\textcolor{black}{s}} := \rho_{\textcolor{black}{s}}(\hat{V}_{{\textcolor{black}{s}}:T} 
- \overline{XVA}_{{\textcolor{black}{s}}:T})_{{\textcolor{black}{s}} \,\in\, [t,T]}, \; i \,\in\, \{TC, \, FC \}, 
\end{equation}
where $\overline{XVA}_{t:T} := (\overline{XVA}_s)_{s \,\in\, [t,T]}$ is the process defining the pre-default value adjustment 
and $\hat{V}_{t:T} := (\hat{V}_s)_{s \,\in\, [t,T]}$ is the clean value assumed to be a given exogenous process in $\cS^2(\QQ),$ 
both evaluated up to the contract's maturity, since they are used to measure the potential future exposure. \textcolor{black}{For each 
$t \,\in\, [0.T],$} 
$\rho_t = \rho(\omega,t;x)$, is a risk measure. 
For the sake of simplicity, we assume the same $\rho$ for both $i \,\in\, \{TC,FC\}$. This hypothesis can be easily generalized. 

We also assume the following 
\begin{assumption} \label{ass:A}  
\begin{enumerate}
	\item For any $X, \, Y \,\in\, \cS^2(\QQ),$ the process $(\rho_{\textcolor{black}{t}} (X_{t:T} - Y_{t:T}))_{\textcolor{black}{t}\,\in\,[0,T]}$ is in $\cH^{2\textcolor{black}{,1}}(\QQ).$ There exists a constant 	\textcolor{black}{$C_{\rho} > 0$} and a family of measures $(\nu_s)_{s\,\in\,[0,T]}$ on $\RR$ such that 
				$\nu_t([t; T]) = 1,$ for every $t\,\in\,[0, T],$ and, for any $x,\, y^1,\, y^2 \,\in\, \cS^2(\QQ),$ we have 
				\begin{equation}\label{eq:Lip_IM}
					\big|\rho_t(x_{t:T} - y^1_{t:T}) - \rho_t(x_{t:Y} - y^2_{t:T}) \big| \leq C_{\rho} \EE\left[\int_t^T |y^1_s - y^2_s| 
					\nu_t(ds) \big| \mathcal{F}_t \right], \; dt\otimes d\PP \mbox{ a.e. }
				\end{equation}
	\item we assume that $f$ satisfies Assumption \ref{ass:S_gobet} with  
				$y = \hat{V} - \overline{XVA}^i, \, z = C$ and $\lambda = \rho^i,$ for $i = 1,2.$ 
\end{enumerate} 
\end{assumption} 

\begin{lemma} \label{lemma:aprioriEst} 
Let $(\overline{XVA}^i,\overline{Z}^i) \,\in\, \cS^2(\QQ) \times \cH^{2,d}(\QQ), \, i=1,2,$ be solutions to the $\FF$-BSDE 
\eqref{eq:XVApreDef}, with $f^i(t,\hat{V} - \overline{XVA}^i, C, \rho^i)$ 
satisfying Assumption \ref{ass:A}, for $i=1,2.$  \\
Moreover, define 
\begin{align*}
& \delta\overline{XVA} := \overline{XVA}^1 - \overline{XVA}^2, \\ 
& \delta \overline{Z} := \overline{Z}^1 - \overline{Z}^2, \\ 
& \delta f := f^1(\hat{V} - \overline{XVA}^2) - f^2(\hat{V} - \overline{XVA}^2), \\ 
& \delta \rho := \rho_s^1(\hat{V}_{s:T} - \overline{XVA}_{s:T}^2) - \rho_s^2(\hat{V}_{s:T} - \overline{XVA}_{s:T}^2),  
\end{align*}
for any $s \,\in\, [t,T].$ Then, there exists a constant $\textcolor{black}{c} >0$ such that for $\mu >0$, we have for $\beta$ large enough
\begin{align} \label{eq:apriori1}
\|\delta \overline{XVA}\|_{\cS^{2}_{\beta,T}}^{2} & \leq \textcolor{black}{c} \left[e^{\beta T} \mathbb{E}\left[|\delta \xi|^{2}\right] + 
\frac{1}{\mu^{2}}\left(\left\|\delta_{2} f\right\|_{\cH^{2,d}_{\beta,T}}^{2} + C_{f^1}\left\|\delta_{2} 
\rho\right\|_{\cH^{2,d}_{\beta,T}}^{2}\right)\right],  \\ \label{eq:apriori2} 
\|\delta \overline{Z}\|_{\cH^{2,d}_{\beta,T}}^{2} & \leq \textcolor{black}{c} \left[e^{\beta T} \mathbb{E}\left[|\delta \xi|^{2}\right] 
+ \frac{1}{\mu^{2}}\left(\left\|\delta_{2} f\right\|_{\cH^{2,d}_{\beta,T}}^{2} + 
C_{f^{1}}|| \delta \rho \|_{\cH^{2,d}_{\beta,T}}^{2}\right)\right], 
\end{align} 
where $\|\delta \overline{XVA}\|_{\cS^{2}_{\beta,T}}$ and $\|\delta \overline{Z}\|_{\cH^{2,d}_{\beta,T}}$ are defined in 
\eqref{def:NormS} and \eqref{def:NormH}, respectively. 
\end{lemma}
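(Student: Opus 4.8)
The plan is to derive the two a priori estimates \eqref{eq:apriori1}--\eqref{eq:apriori2} by a now-standard linearization-and-It\^o argument for Lipschitz BSDEs, carefully tracking how the dependence of the driver on the \emph{whole path} of the initial margin through the risk measure $\rho$ enters the Gr\"onwall-type bound. First I would write the BSDE satisfied by the difference process $(\delta\overline{XVA},\delta\overline{Z})$: subtracting the two equations \eqref{eq:XVApreDef} for $i=1,2$ gives
\begin{align*}
-d\,\delta\overline{XVA}_t = \Big(\bar f^1(\hat V_t - \overline{XVA}^1_t) - \bar f^2(\hat V_t - \overline{XVA}^2_t)\Big)dt - \sum_{k=1}^d \delta\overline Z^k_t\, dW^{k,\QQ}_t,
\end{align*}
with $\delta\overline{XVA}_T = 0$. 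I would then decompose the driver difference as
\[
\bar f^1(\hat V_t - \overline{XVA}^1_t) - \bar f^2(\hat V_t - \overline{XVA}^2_t) = \big(\bar f^1(\hat V_t - \overline{XVA}^1_t) - \bar f^1(\hat V_t - \overline{XVA}^2_t)\big) + \delta f_t,
\]
absorbing the first bracket into a linear term plus a contribution controlled by the Lipschitz property of $f$ in its collateral/initial-margin arguments together with the bound \eqref{eq:Lip_IM} on $\rho$, and the second bracket is exactly the source term $\delta f$ (plus the $\delta\rho$ contribution weighted by $C_{f^1}$). The boundedness of the rates $r, r^{f}, r^{c}, r^{I}, \lambda^{B,\QQ}, \lambda^{C,\QQ}$ guarantees that the genuinely linear part has bounded coefficients.

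Next I would apply It\^o's formula to $e^{\beta t}|\delta\overline{XVA}_t|^2$ on $[t,T]$, take $\QQ$-expectations conditional on $\cF_t$ so the stochastic integral vanishes, and obtain the identity
\[
e^{\beta t}|\delta\overline{XVA}_t|^2 + \Ex{\QQ}{\int_t^T e^{\beta s}\Big(\beta|\delta\overline{XVA}_s|^2 + |\delta\overline Z_s|^2\Big)ds} = 2\,\Ex{\QQ}{\int_t^T e^{\beta s}\, \delta\overline{XVA}_s\big(\bar f^1(\cdot) - \bar f^2(\cdot)\big)ds}.
\]
On the right-hand side I would bound the linear-in-$\delta\overline{XVA}$ part using the global Lipschitz constant, say $L$, of $\bar f^1$ in its $y$-argument, and use Young's inequality $2ab \le \mu^2 a^2 + \mu^{-2}b^2$ to split the $\delta f$ and $\delta\rho$ terms; the path-dependent piece coming from $\rho$ is handled by \eqref{eq:Lip_IM}, which after Fubini and Jensen turns $\EE\big[\int_t^T|\delta\overline{XVA}_s|\,\nu_t(ds)\mid\cF_t\big]$ into something controlled by $\|\delta\overline{XVA}\|_{\cH^{2}_{\beta,T}}$ up to a constant depending only on $C_\rho$ and $\sup_{t}\nu_t([t,T])=1$. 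Choosing $\beta$ large enough to dominate $2L$ plus the constants generated by the Lipschitz-in-margin and $\rho$ terms absorbs all the $\|\delta\overline{XVA}\|^2_{\cH^2_{\beta,T}}$ contributions into the left-hand side, leaving
\[
\|\delta\overline{XVA}\|^2_{\cH^2_{\beta,T}} + \|\delta\overline Z\|^2_{\cH^{2,d}_{\beta,T}} \le \frac{c}{\mu^2}\Big(\|\delta_2 f\|^2_{\cH^{2,d}_{\beta,T}} + C_{f^1}\|\delta_2\rho\|^2_{\cH^{2,d}_{\beta,T}}\Big),
\]
which immediately yields \eqref{eq:apriori2} and the $\cH^2$-part of \eqref{eq:apriori1}; note the $\mathbb E[|\delta\xi|^2]$ term is present in the statement for the general template but here vanishes since both solutions share the null terminal condition. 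To upgrade the $\cH^2$-bound on $\delta\overline{XVA}$ to the $\cS^2$-bound \eqref{eq:apriori1}, I would go back to the difference BSDE, take the supremum over $t\in[0,T]$ inside $e^{\beta t}|\delta\overline{XVA}_t|$, use the Burkholder--Davis--Gundy inequality on the martingale part $\int e^{\beta s}\delta\overline Z_s dW^{\QQ}_s$, and bound the drift part by the same Lipschitz/Young estimates, closing the loop with the already-established control of $\|\delta\overline Z\|_{\cH^{2,d}_{\beta,T}}$.

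The main obstacle I anticipate is the bookkeeping around the path-dependent initial-margin term: because $\rho_s$ depends on the future trajectory $\overline{XVA}_{s:T}$, the naive Gr\"onwall argument does not close, and one must be careful to pass the factor $e^{\beta s}$ \emph{inside} the $\nu_t(ds)$ integral correctly — this is exactly why the measures $\nu_t$ are normalized to $\nu_t([t,T])=1$, so that $\int_t^T e^{\beta s}|\delta\overline{XVA}_s|^2\nu_t(ds) \le \sup_{s\ge t} e^{\beta s}|\delta\overline{XVA}_s|^2$ is not what one wants; instead one keeps it under the $ds\,d\QQ$ integral and recognizes it as bounded by $\|\delta\overline{XVA}\|^2_{\cH^2_{\beta,T}}$ after swapping the order of integration, which is licit precisely because $\nu_t([t,T])=1$ uniformly. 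A secondary technical point is verifying that $\bar f^1$ genuinely satisfies a global Lipschitz condition in $y$ uniformly in $(\omega,t)$ — this follows from the boundedness of all the rates and intensities together with Assumption~\ref{ass:A}(ii) (i.e. Assumption~\ref{ass:S_gobet}) — and that all the right-hand-side norms $\|\delta_2 f\|_{\cH^{2,d}_{\beta,T}}$, $\|\delta_2\rho\|_{\cH^{2,d}_{\beta,T}}$ are finite, which is guaranteed by $\hat V, \overline{XVA}^i \in \cS^2(\QQ)$ and the first part of Assumption~\ref{ass:A}. Once these points are in place the constant $c$ is explicit (depending on $L$, $C_\rho$, $C_{f^1}$, the BDG constant and the time horizon $T$) and independent of $\mu$, as required.
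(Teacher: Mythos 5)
Your argument follows essentially the same route as the paper: the paper verifies the Lipschitz structure of the driver, performs the same add-and-subtract decomposition of $f^1(\hat V-\overline{XVA}^1)-f^2(\hat V-\overline{XVA}^2)$ into a Lipschitz part plus the source terms $\delta f$ and $\delta\rho$, and then delegates the It\^o/Young/Gr\"onwall/BDG machinery to the a priori estimates of \cite[Lemma 3.1]{Crpey2020}, which you simply carry out explicitly. Your remark that the terminal-condition term $\mathbb{E}\left[|\delta\xi|^2\right]$ vanishes matches the paper's invocation of that lemma with $\xi=0$.
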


\begin{proof} 
First, we observe that $f^i(\hat{V} - \overline{XVA}^i), \, i=1,2,$ given in \eqref{eq:driverXVApreDef}, consists 
of three terms. The first one is the full $\GG$-BSDE driver $f,$ given in \eqref{eq:driver} and expressed in terms of the collateral 
$C,$ which is a Lipschitz function of the clean value by definition, and the (posted/received) initial margin $I,$ which is Lipschitz 
by \eqref{def:IM} and \eqref{eq:Lip_IM}. 
The second term depends on the short rate $r$ and the jump intensities $\lambda^{B,\QQ}, \, \lambda^{C,\QQ},$ which are bounded 
by definition. The last term relies upon the close-out conditions $\theta^B, \, \theta^C$ given in \eqref{eq:redefineEAD}, 
which are Lipschitz functions, by following the same arguments as before. Therefore, the driver satisfies Assumption \ref{ass:S_gobet}. 
Moreover, we observe that, for any $(\overline{XVA}^1,\overline{Z}^1), \, (\overline{XVA}^1,\overline{Z}^1) \,\in\, \cS^2(\QQ) 
\times \cH^{2,d}(\QQ),$ we have 
\begin{align*}
|f^1(\hat{V} & - \overline{XVA}^1) - f^2(\hat{V} - \overline{XVA}^2)| \\ 
& \leq |f^1(\hat{V} - \overline{XVA}^1) - f^1(\hat{V} - \overline{XVA}^2)| + |f^1(\hat{V} - \overline{XVA}^2) - f^2(\hat{V} - \overline{XVA}^2)| \\ 
& \leq |f^1(s,\hat{V} - \overline{XVA}^1,C, \rho^1(\hat{V}_{s:T} - \overline{XVA}_{s:T}^1)) 
- f^1(s,\hat{V} - \overline{XVA}^2,C, \rho^1(\hat{V}_{s:T} - \overline{XVA}_{s:T}^2))| \\ 
& + |(r_s + \lambda_s^{C,\QQ} + \lambda_s^{B,\QQ}) (\overline{XVA}^1 - \overline{XVA}^2)| + |\delta f| \\ 
& \leq  (\bar{C}_{f^1} + \bar{C}_s) \left( |\delta\overline{XVA}|\right)
+ \bar{C}_{f^1} \left(|\rho^1(\hat{V}_{s:T} - \overline{XVA}_{s:T}^1) - \rho^1(\hat{V}_{s:T} - \overline{XVA}_{s:T}^2)| + |\delta \rho| \right) + |\delta f|
\end{align*} 
where the last inequality holds true thanks to Assumption \ref{ass:S_gobet}, adding and subtracting 
$\rho^1(\hat{V}_{s:T} - \overline{XVA}_{s:T}^2)$ and choosing a suitable constant 
$\bar{C}_s \geq r_s + \lambda_s^{C,\QQ} + \lambda_s^{B,\QQ}.$ 

\textcolor{black}{The result follows by applying the argument provided in the proof of \cite[Lemma 3.1]{Crpey2020} with terminal condition 
$\xi = 0.$
}
\end{proof}
\begin{proposition} \label{prop:XVApreDef}
Under Assumptions \ref{ass:proxA}, \ref{ass:S_gobet} and \ref{ass:A}, the $\FF$-BSDE \eqref{eq:XVApreDef} is well posed and 
has a unique solution $(\overline{XVA},\overline{Z}) \,\in\, \cS^2(\QQ) \times \cH^{2,d}(\QQ).$ 
\end{proposition}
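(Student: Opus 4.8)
The plan is to read \eqref{eq:XVApreDef} as an \emph{anticipated} (path-dependent) $\FF$-BSDE: through the margins $I^{TC}=I^{FC}=\big(\rho_s(\hat V_{s:T}-\overline{XVA}_{s:T})\big)_{s\in[t,T]}$ as in \eqref{def:IM}, the generator \eqref{eq:driverXVApreDef} at time $t$ depends on the whole future trajectory $(\overline{XVA}_s)_{s\in[t,T]}$, so classical Lipschitz-BSDE theory does not apply directly. I would therefore solve it by a Picard/contraction argument. Define a map $\Phi$ on $\cS^2(\QQ)\times\cH^{2,d}(\QQ)$ by: given $(\Upsilon,\zeta)$, freeze the anticipating part, i.e. set $I^i[\Upsilon]_s:=\rho_s(\hat V_{s:T}-\Upsilon_{s:T})$, substitute it into \eqref{eq:driverXVApreDef}, and let $(\overline{XVA},\overline{Z}):=\Phi(\Upsilon,\zeta)$ be the solution with null terminal condition of the BSDE whose generator is now a genuine function of $(t,\overline{XVA}_t,\overline{Z}_t)$ only.

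First I would check that $\Phi$ is well defined and maps the space into itself. For fixed $\Upsilon$ the frozen generator is uniformly Lipschitz in $(y,z)$: its $f$-part is Lipschitz by Assumption~\ref{ass:S_gobet} (invoked via Assumption~\ref{ass:A}(ii)), and the term $-(r_t+\lambda^{B,\QQ}_t+\lambda^{C,\QQ}_t)\overline{XVA}_t$ has a bounded coefficient since $r,\lambda^{j,\QQ}$ are bounded. Its value at $(0,0)$ is square-integrable in the $\cH^2$-sense: indeed $\hat V\in\cS^2(\QQ)$ by Assumption~\ref{ass:proxA} and Theorem~\ref{thm:solFbsde}, $C$ is Lipschitz in $\hat V$, $I^i[\Upsilon]\in\cH^{2,1}(\QQ)$ by Assumption~\ref{ass:A}(i), and $\theta^B,\theta^C$ in \eqref{eq:redefineEAD} are Lipschitz in $(\hat V,C,I)$. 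Hence the standard existence/uniqueness theorem for Lipschitz BSDEs driven by $W^\QQ$ (the result recalled in Appendix~\ref{app}, cf. Theorem~\ref{th:existenceUniquenessNR}) yields a unique $(\overline{XVA},\overline{Z})\in\cS^2(\QQ)\times\cH^{2,d}(\QQ)$, so $\Phi$ is well defined.

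Next I would show $\Phi$ is a strict contraction in the equivalent weighted norms $\|\cdot\|_{\cS^2_{\beta,T}}$, $\|\cdot\|_{\cH^{2,d}_{\beta,T}}$ for well-chosen parameters. For two inputs $\Upsilon^1,\Upsilon^2$ the images $(\overline{XVA}^i,\overline{Z}^i)$ solve \eqref{eq:XVApreDef}-type BSDEs with the same $f$ and the same $\rho$ but with margins frozen at $\rho_\bullet(\hat V_{\bullet:T}-\Upsilon^i_{\bullet:T})$; their generators therefore differ only by a term whose modulus is bounded, using the Lipschitz-in-$\lambda$ part of Assumption~\ref{ass:S_gobet} followed by \eqref{eq:Lip_IM}, by $C_{f^1}C_{\rho}\,\EE[\int_s^T|\Upsilon^1_u-\Upsilon^2_u|\,\nu_s(du)\mid\cF_s]$. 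A conditional Jensen inequality together with $e^{-\beta u}\le e^{-\beta s}$ for $u\ge s$ and $\nu_s([s,T])=1$ then gives a bound of the discrepancy in $\cH^{2,d}_{\beta,T}$ by $C_{\rho}^2 T\,\|\Upsilon^1-\Upsilon^2\|^2_{\cS^2_{\beta,T}}$ with a constant independent of $\beta$. Feeding this into the a priori estimates \eqref{eq:apriori1}--\eqref{eq:apriori2} of Lemma~\ref{lemma:aprioriEst} (applied with $\delta\xi=0$, the only discrepancy being of $\delta\rho$-type) bounds $\|\overline{XVA}^1-\overline{XVA}^2\|^2_{\cS^2_{\beta,T}}+\|\overline{Z}^1-\overline{Z}^2\|^2_{\cH^{2,d}_{\beta,T}}$ by $(c\,C_{f^1}C_{\rho}^2 T/\mu^2)\,\|\Upsilon^1-\Upsilon^2\|^2_{\cS^2_{\beta,T}}$; choosing first $\mu$ large enough that this prefactor is $<1$, and then $\beta$ large enough (depending on that $\mu$ and on the Lipschitz constants) for the a priori estimate to be valid, makes $\Phi$ a contraction in the first component. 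The Banach fixed point theorem then produces a unique fixed point $(\overline{XVA},\overline{Z})\in\cS^2(\QQ)\times\cH^{2,d}(\QQ)$, which is the unique solution of \eqref{eq:XVApreDef}; uniqueness also follows directly from Lemma~\ref{lemma:aprioriEst} by taking $\delta\xi=0$, $\delta f=0$ and $\delta\rho=0$.

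I expect the main obstacle to be this contraction step: one must verify that the future-averaging measure $\nu_s$ in \eqref{eq:Lip_IM} does not destroy contractivity, which is precisely why the estimates have to be run in the $\beta$-weighted norms with $\mu$ and $\beta$ tuned in that order (first $\mu$, then $\beta$), rather than in the plain $\cS^2/\cH^{2,d}$ norms. A secondary, more routine point is checking at each iteration that the frozen generator has $\cH^2$-integrable data, so that the classical Lipschitz-BSDE solver genuinely applies at each step — this is where Assumptions~\ref{ass:proxA} and \ref{ass:A}(i) and the Lipschitz structure of $C,\theta^B,\theta^C$ enter.
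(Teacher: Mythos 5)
Your proposal is correct and follows essentially the same route as the paper: the paper's proof simply invokes the contraction-mapping argument of \cite[Theorem 3.1]{Crpey2020} for anticipated BSDEs together with the a priori estimates \eqref{eq:apriori1}--\eqref{eq:apriori2} of Lemma~\ref{lemma:aprioriEst}, which is precisely the freeze-the-margin Picard iteration in $\beta$-weighted norms that you spell out. Your write-up just makes explicit the details that the paper delegates to the cited reference.
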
 

\begin{proof} 
\textcolor{black}{
The proof is analogous to the one in \mbox{\cite[Theorem 3.1]{Crpey2020}} by using the estimates \textcolor{black}{\eqref{eq:apriori1} and \eqref{eq:apriori2}}}. 
\end{proof}

Now, given the uniqueness of the solution to \eqref{eq:XVApreDef} we have the following result.

\textcolor{black}{
\begin{proposition} \label{prop:XVAbar}
Let $\left(\overline{XVA}, \bar{Z}\right)$ be the unique solution of the pre-default XVA-BSDE \eqref{eq:XVApreDef}. Define
\begin{align} \label{eq:XVAsolutionFromFtoG}
		X_t := \overline{XVA}_t J_t + \Ind{\tau \leq t}\vartheta_\tau, \ t\in[0,\tau \wedge T], 
\end{align}
where $J_t:=\Ind{t<\tau} = 1-H_t$ and $\vartheta_t := -\theta_t + \hat{V}_t, \, t \,\in\, [0,T].$ Then, under Under Assumptions \ref{ass:proxA} and \ref{ass:A}, the process $\left(X, \bar{Z}, \tilde{U} \right)$ solves the $\GG$-BSDE on $\{\tau > t\}$
\begin{align} \label{eq:XVAGBSDE}
		\begin{cases}
			-dX_t & = -\left[f(t,\hat{V}-XVA,C,I^{TC},I^{FC}) + r_tXVA_t \right]dt 
			- \sum_{k=1}^d\tilde{Z}^k_tdW^{k,\QQ}_t-\sum_{j\in\{B,C\}}\tilde{U}^j_tdM^{j,\QQ}_t\\
			X_\tau & = 
			\left(\hat{V}_\tau(\varphi)-\theta_\tau(\hat{V},C,I^{TC},I^{FC})\right), 
		\end{cases}
\end{align}
where $\tilde{U}^j, \, j \,\in\, \{B,C\},$ are $\GG$-adapted processes in $\cH^{2,2}_{\lambda}(\QQ)$ such that 
\begin{align} \label{eq:pureJumpMartingale}
		\sum_{j\in\{B,C\}}\tilde{U}^j_tdM^{j,\QQ}_t & = -\left((\vartheta_t-\overline{XVA}_t)dJ_t + 
		\lambda^{C,\QQ}_t(-\theta^C_t-\overline{XVA}_t)dt + \lambda^{B,\QQ}_t(\theta^B_t-\overline{XVA}_t)dt\right).  
\end{align}
In particular, $X_t = XVA_t, \, t \,\in\, [0,T],$ where $XVA$ is introduced in Definition~\ref{def:XVA}.
\end{proposition}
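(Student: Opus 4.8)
The plan is to obtain the $\GG$-dynamics of the process $X$ defined in \eqref{eq:XVAsolutionFromFtoG} by applying Itô's product rule to $\overline{XVA}_t J_t$, to recast the resulting pure-jump term through the compensated default martingales $M^{B,\QQ},M^{C,\QQ}$, and finally to read off the identity $X=XVA$ from the conditional-expectation representation of the $\GG$-BSDE so obtained. Throughout, $(\overline{XVA},\bar{Z})\in\cS^2(\QQ)\times\cH^{2,d}(\QQ)$ is the unique solution of the pre-default $\FF$-BSDE \eqref{eq:XVApreDef} provided by Proposition \ref{prop:XVApreDef}; note that $\overline{XVA}$ is continuous (it solves a Brownian BSDE) and recall that $\Delta\hat{V}_\tau=0$ under Hypothesis \ref{hp:H} (by \cite[Lemma 2.1]{crepey2015b}), so that $\hat{V}$, $C=f(\hat{V})$ and $I^{TC},I^{FC}$ may be identified with their left limits at $\tau$.

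First I would write $X_t=\overline{XVA}_tJ_t+\int_{(0,t]}\vartheta_u\,dH_u$. Since $\overline{XVA}$ is continuous and $J=1-H$ is of finite variation with $J_{t-}=1$ on $[0,\tau]$, the covariation $[\overline{XVA},J]$ vanishes, so on $\{\tau>t\}$
\[
dX_t=d\overline{XVA}_t-\overline{XVA}_{t-}\,dH_t+\vartheta_t\,dH_t=-\bar{f}(\hat{V}_t-\overline{XVA}_t)\,dt+\sum_{k=1}^d\bar{Z}^k_t\,dW^{k,\QQ}_t+(\vartheta_t-\overline{XVA}_{t-})\,dH_t ,
\]
where I inserted the pre-default dynamics of $\overline{XVA}$ from \eqref{eq:XVApreDef}. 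Under $\QQ\sim\PP$ the two default times do not jump simultaneously, so $dH_t=dH^B_t+dH^C_t$ on $[0,\tau]$, and at the jump time $\tau$ Assumption \ref{assumptionCloseOut} together with the close-out \eqref{eq:finalCloseOut} give $\vartheta_\tau=-\theta_\tau+\hat{V}_\tau=\theta^B_\tau$ on $\{\tau^B<\tau^C\}$ and $\vartheta_\tau=-\theta^C_\tau$ on $\{\tau^C<\tau^B\}$, with $\theta^B,\theta^C$ as in \eqref{eq:redefineEAD}. Hence, setting $\tilde{U}^B:=\theta^B-\overline{XVA}$ and $\tilde{U}^C:=-\theta^C-\overline{XVA}$, one has $(\vartheta_t-\overline{XVA}_{t-})\,dH_t=\tilde{U}^B_t\,dH^B_t+\tilde{U}^C_t\,dH^C_t$ on $[0,\tau]$; compensating each $dH^j_t=dM^{j,\QQ}_t+\lambda^{j,\QQ}_t\,dt$ (valid on $\{t\le\tau\}\subseteq\{t\le\tau^j\}$, cf.\ Assumption \ref{ass:true_martingale}) reproduces exactly the pure-jump martingale \eqref{eq:pureJumpMartingale} together with the drift $[\lambda^{B,\QQ}_t(\theta^B_t-\overline{XVA}_t)+\lambda^{C,\QQ}_t(-\theta^C_t-\overline{XVA}_t)]\,dt$. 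One checks $\tilde{U}=(\tilde{U}^B,\tilde{U}^C)\in\cH^{2,2}_{\lambda}(\QQ)$ from $\overline{XVA},\hat{V},C,I^{TC},I^{FC}\in\cS^2(\QQ)$, the Lipschitz property of $\theta^B,\theta^C$ and boundedness of the intensities; and since $J_{t-}=1$ on $[0,\tau]$ the Brownian integrand is $\tilde{Z}=\bar{Z}$ there.

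It then remains to collect the drift. On $\{\tau>t\}$ we have $X_t=\overline{XVA}_t$; inserting the definition \eqref{eq:driverXVApreDef} of $\bar{f}$ into $-\bar{f}(\hat{V}_t-X_t)+\lambda^{B,\QQ}_t(\theta^B_t-X_t)+\lambda^{C,\QQ}_t(-\theta^C_t-X_t)$, the terms proportional to $\lambda^{B,\QQ}_tX_t$, $\lambda^{C,\QQ}_tX_t$, $\lambda^{B,\QQ}_t\theta^B_t$ and $\lambda^{C,\QQ}_t\theta^C_t$ cancel in pairs and one is left with $f(t,\hat{V}-X,C,I^{TC},I^{FC})+r_tX_t$, where $f$ is the full driver \eqref{eq:driver}; this is precisely the generator of \eqref{eq:XVAGBSDE}, while the terminal condition is immediate from $X_\tau=0\cdot J_\tau+\vartheta_\tau=\hat{V}_\tau-\theta_\tau(\hat{V},C,I^{TC},I^{FC})$. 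For the identity $X=XVA$: on $\{\tau\le t\}$ both sides equal $\hat{V}_\tau-\theta_\tau$ by construction of $X$ and by \eqref{eq:XVA_tau}; on $\{\tau>t\}$, since $\bar{Z}\in\cH^{2,d}(\QQ)$, $\tilde{U}\in\cH^{2,2}_{\lambda}(\QQ)$ and $r$ is bounded, the stochastic integrals in \eqref{eq:XVAGBSDE} are true $\QQ$-martingales, so conditioning on $\cG_t$ yields
\[
X_t=B^r_t\,\Excond{\QQ}{\int_t^{\tau}\frac{-f(u,\hat{V}-X,C,I^{TC},I^{FC})}{B^r_u}\,du+\frac{\hat{V}_\tau-\theta_\tau}{B^r_\tau}}{\cG_t}.
\]
Expanding $-f$ via \eqref{eq:driver} with $V=\hat{V}-X$ splits the integral into the $FVA$, $ColVA$ and $MVA$ integrands of Definition \ref{def:XVA}, and expanding $\hat{V}_\tau-\theta_\tau$ via \eqref{eq:finalCloseOut} on $\{\tau^C<\tau^B\}$ and $\{\tau^B<\tau^C\}$ gives $-CVA_t+DVA_t$; summing and comparing with \eqref{eq:XVA} gives $X_t=XVA_t$.

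The main obstacle is the bookkeeping of the jump at $\tau$: one has to evaluate $\vartheta_\tau-\overline{XVA}_{\tau-}$ separately on the two disjoint default events to identify $\tilde{U}^B,\tilde{U}^C$ with the correct signs and exposures — this is exactly what makes the $CVA$ and $DVA$ terms appear as in Definition \ref{def:XVA} — and to apply the left-limit identifications for $\hat{V},C,I^{TC},I^{FC}$ consistently so that the product rule generates no spurious continuous contributions; the integrability check $\tilde{U}\in\cH^{2,2}_{\lambda}(\QQ)$, which legitimises passing to conditional expectations, is routine but must be carried out.
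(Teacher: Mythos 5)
Your proof is correct and follows the same skeleton as the paper's: apply the product rule to $\overline{XVA}_tJ_t$, substitute the pre-default dynamics \eqref{eq:XVApreDef}, and compensate the jump at $\tau$. The one genuine difference lies in how the jump-martingale integrands are obtained. The paper deduces from \cite[Lemma 5.2.9]{crepeyBook14} that the compensated jump term \eqref{eq:jumpMart} is a $(\GG,\QQ)$-local martingale, verifies square integrability, and then invokes the predictable representation property in $\GG$ to assert the \emph{existence} of $\tilde{U}^B,\tilde{U}^C\in\cH^{2,2}_{\lambda}(\QQ)$ satisfying \eqref{eq:pureJumpMartingale}, without identifying them. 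You instead construct them explicitly, $\tilde{U}^B=\theta^B-\overline{XVA}$ and $\tilde{U}^C=-\theta^C-\overline{XVA}$, by evaluating $\vartheta_\tau$ separately on the disjoint events $\{\tau^B<\tau^C\}$ and $\{\tau^C<\tau^B\}$ and compensating $H^B$ and $H^C$ directly (using that the default times are a.s.\ distinct). This is more elementary --- it bypasses the representation theorem --- and yields closed-form integrands that make the $\cH^{2,2}_{\lambda}(\QQ)$ check immediate and the cancellation of the $\lambda^{j,\QQ}\overline{XVA}$ and $\lambda^{j,\QQ}\theta^j$ terms in the drift transparent. You also spell out the final conditional-expectation identification $X=XVA$, which the paper leaves implicit at this stage (it is essentially re-derived inside the proof of Theorem \ref{thm:solGbsde}). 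Both arguments are valid; yours buys explicitness, the paper's generalizes more readily to settings where the jump integrands cannot be read off by hand.
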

}

\begin{proof} 
We start from \eqref{eq:XVAsolutionFromFtoG} and apply the product rule, hence
\begin{align*}
dX_t & = d\left(\overline{XVA}_tJ_t\right)+d\left(\Ind{\tau<T}\vartheta_\tau\right)\\
 & = d\overline{XVA}_{t\wedge \tau}+\overline{XVA}_tdJ_t-\vartheta_tdJ_t. 
\end{align*}
By \eqref{eq:XVApreDef} we obtain 
\begin{align*}
	dX_t & = \left[f(t,\hat{V}-\overline{XVA},C,I^{TC},I^{FC}) + (r_t+\lambda^{C,\QQ}_t+\lambda^{B,\QQ}_t)\overline{XVA}_t 
	+ \lambda^{C,\QQ}_t\theta^C_t - \lambda^{B,\QQ}_t\theta^B_t\right]dt\\
	&\quad + \sum_{i=1}^d\bar{Z}^k_t\Ind{t<\tau}dW^{k,\QQ}_t-\left(\vartheta_t-\overline{XVA}_t\right)dJ_t. 
\end{align*}
We note that the process $\sum_{i=1}^d\int_0^\cdot\bar{Z}^k_u\Ind{u<\tau}dW^{k,\QQ}_u$ is a $(\GG,\QQ)$-martingale, since $\bar{Z}$ 
is in $\cH^{2,d}(\QQ)$ due to the immersion hypothesis. 
From Lemma 5.2.9 in \cite{crepeyBook14} we deduce that the process, expressed in differential form
\begin{align}
\label{eq:jumpMart}
-\left((\vartheta_t-\overline{XVA}_t)dJ_t+\lambda^{C,\QQ}_t(-\theta^C_t-\overline{XVA}_t)dt 
+ \lambda^{B,\QQ}_t(\theta^B_t-\overline{XVA}_t)dt\right)
\end{align}
is also a $(\GG,\QQ)$-local martingale. 
Moreover, we observe that, since $\hat{V}\,\in\,\cS^2(\QQ),$ also $C \,\in\,\cS^2(\QQ),$ $C$ being a Lipschitz function of 
$\hat{V}.$ Additionally, the initial margin, be it posted or received, lies in $\cH^2(\QQ)$ by assumption.  
Summing up, both $\theta^B$ and $\theta^C$, and hence $\vartheta$ belong to the space $\cH^2(\QQ)$. 
On the other hand, $\overline{XVA}\in\cS^2(\QQ)$. Recalling that both $\lambda^{C,\QQ}$ and $\lambda^{B,\QQ}$ are bounded, it 
follows that the compensated jump term \eqref{eq:jumpMart} is a square integrable martingale.
\textcolor{black}{
Since the predictable representation property holds in $\GG$ with respect to $W^{j,\QQ}, \, M^{j,\QQ}, \, j \,\in\, \{B,C\},$ under the measure $\QQ,$ we obtain that there exists $\tilde{U}^B, \, \tilde{U}^C \,\in\, \cH^{2,2}_{\lambda}(\QQ)$ satisfying \eqref{eq:pureJumpMartingale}. 
}
We conclude that the process $XVA$ solves the xVA-BSDE \eqref{eq:XVAGBSDE} under the filtration $\GG$. 
\end{proof}

We can finally combine the solution of the BSDE \eqref{eq:XVApreDef} for the clean value with the result \textcolor{black}{of Proposition \ref{prop:XVAbar}} to solve 
the $\GG$-BSDE \eqref{eq:GBSDEdiff}.

\begin{theorem} \label{thm:solGbsde}
Let $V_t := \hat{V}_t - XVA_t, \, t \,\in\, [0,T],$ on $\{\tau > t \},$ 
where $\hat{V}$ and $XVA$ are defined in \eqref{fair_price:clean_mkt} and \eqref{eq:XVA}, respectively. 
Then, under Assumptions \ref{ass:proxA} and \ref{ass:A}, the triplet $\left(V, Z, U \right) \,\in\, \cS^{2}(\QQ) 
\times \cH^{2,d}(\QQ) \times \cH^{2,2}_{\textcolor{black}{\lambda}}(\QQ)$ 
solves the $\GG$-BSDE \eqref{eq:GBSDEdiff} \textcolor{black}{with $\mathcal{V} = \hat{V}$}, where $Z$ and $U$ are given by 
\begin{align} \label{def:Z}
Z_t^{k} = \hat{Z}_t^k - \tilde{Z}_t^k, \, k = 1, \ldots, d, \\ \label{def:U} 
U_t^{j} = - \tilde{U}_t^{j}, \, j \,\in\, \{B,C \}. 
\end{align}
Moreover, the process $V$ satisfies \eqref{eq:GBSDEint}.
\end{theorem}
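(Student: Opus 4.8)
The plan is to realize $V$ as the difference of two objects already constructed — the clean value $\hat V$ of Theorem~\ref{th:cleanPriceTh} and the adjustment process $XVA$ of Proposition~\ref{prop:XVAbar} — and then to verify \eqref{eq:GBSDEdiff} by a direct stochastic calculus computation. First I would record the inputs. By Theorems~\ref{thm:solFbsde} and \ref{th:cleanPriceTh}, $\hat V\in\cS^2(\QQ)$ is $\FF$-adapted with $\hat Z\in\cH^{2,d}(\QQ)$ and $-d\hat V_t=dA_t-r_t\hat V_t\,dt-\sum_{k=1}^d\hat Z^k_t\,dW^{k,\QQ}_t$; by Proposition~\ref{prop:XVAbar} the process $X=XVA$ solves the $\GG$-BSDE \eqref{eq:XVAGBSDE} with $\tilde Z\in\cH^{2,d}(\QQ)$ and $\tilde U\in\cH^{2,2}_{\lambda}(\QQ)$. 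Under Hypothesis~\ref{hp:H}, $W^{\QQ}$ is a $\GG$-Brownian motion, so the clean BSDE, stopped at $\tau$, holds verbatim in the enlarged filtration $\GG$; moreover $\Delta\hat V_\tau=0$ by \cite[Lemma 2.1]{crepey2015b}, and on $\{\tau>t\}$ one has $\bar A_t=A_t$ by \eqref{A_bar}, so $d\bar A_t=dA_t$ on $[0,\tau)$.

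Next I would set $V:=\hat V-XVA$ on $\{\tau>t\}$ and subtract \eqref{eq:XVAGBSDE} from the lifted clean BSDE. Using $\hat V-XVA=V$ and $r_t\hat V_t-r_tXVA_t=r_tV_t$, the drift terms combine and one is left with
\[
-dV_t=d\bar A_t+\bigl(f(t,V,C,I^{TC},I^{FC})-r_tV_t\bigr)\,dt-\sum_{k=1}^d(\hat Z^k_t-\tilde Z^k_t)\,dW^{k,\QQ}_t+\sum_{j\in\{B,C\}}\tilde U^j_t\,dM^{j,\QQ}_t,
\]
which is precisely \eqref{eq:GBSDEdiff} once one reads off $Z^k=\hat Z^k-\tilde Z^k$ and $U^j=-\tilde U^j$, i.e.\ \eqref{def:Z}--\eqref{def:U}. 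For the terminal datum, at $t=\tau$ the indicator $J_\tau$ in \eqref{eq:XVAsolutionFromFtoG} vanishes, so $XVA_\tau=\vartheta_\tau=-\theta_\tau(\hat V,C,I^{TC},I^{FC})+\hat V_\tau$, in agreement with \eqref{eq:XVA_tau}; hence $V_\tau=\hat V_\tau-XVA_\tau=\theta_\tau(\hat V,C,I^{TC},I^{FC})$, which is the close-out condition \eqref{eq:finalCloseOut} with $\cV=\hat V$, as required by Assumption~\ref{assumptionCloseOut}.

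Then I would collect the integrability claims. Since $\hat V\in\cS^2(\QQ)$ and $X\in\cS^2(\QQ)$ (from the proof of Proposition~\ref{prop:XVAbar}), we get $V\in\cS^2(\QQ)$; likewise $Z=\hat Z-\tilde Z\in\cH^{2,d}(\QQ)$ and $U=-\tilde U\in\cH^{2,2}_{\lambda}(\QQ)$, which is the announced membership $\cS^2(\QQ)\times\cH^{2,d}(\QQ)\times\cH^{2,2}_{\lambda}(\QQ)$. Consequently both stochastic integrals above are true $(\GG,\QQ)$-martingales. Applying the product rule to $V_t/B^r_t$ the drift $r_tV_t\,dt$ cancels; integrating from $t$ to $\tau\wedge T$ and taking $\cG_t$-conditional expectations on $\{\tau>t\}$ — the martingale part vanishing, using $A\in\cS^2(\QQ)$ (Assumption~\ref{ass:proxA}, hence $\bar A\in\cS^2(\QQ)$), boundedness of $r$, and the terminal identification above — yields exactly \eqref{eq:GBSDEint}. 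Uniqueness of $(V,Z,U)$, if desired, follows by reversing the construction: any solution of \eqref{eq:GBSDEdiff} produces, via $XVA:=\hat V-V$, a solution of \eqref{eq:XVAGBSDE}, which forces its pre-default version to coincide with the unique solution of \eqref{eq:XVApreDef} (Proposition~\ref{prop:XVApreDef}) and $\hat V$ with the unique clean value (Theorem~\ref{thm:solFbsde}).

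The computation itself is routine It\^o bookkeeping; the only genuinely delicate points are the passage from the $\FF$-BSDE for $\hat V$ to a $\GG$-BSDE, which rests on Hypothesis~\ref{hp:H} and on the absence of a jump of $\hat V$ at $\tau$, and the verification that the local-martingale terms are true martingales so that \eqref{eq:GBSDEint} can be obtained by conditioning. Both are already secured by the integrability built into Theorems~\ref{thm:solFbsde}, \ref{th:cleanPriceTh} and Proposition~\ref{prop:XVAbar}, so I do not expect a serious obstacle here — the substantive work has been done in those statements and in Proposition~\ref{prop:XVApreDef}.
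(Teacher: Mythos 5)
Your proof is correct and follows essentially the same route as the paper: both realize $V=\hat V-XVA$, lift the clean $\FF$-BSDE to $\GG$ via Hypothesis~\ref{hp:H} (together with $\bar A=A$ and $\Delta\hat V_\tau=0$ before $\tau$), subtract the $\GG$-BSDE \eqref{eq:XVAGBSDE} for $XVA$, and read off $Z=\hat Z-\tilde Z$, $U=-\tilde U$ and the terminal condition $V_\tau=\theta_\tau$. The only cosmetic difference is in deriving \eqref{eq:GBSDEint}: you integrate the discounted BSDE and take $\cG_t$-conditional expectations, whereas the paper substitutes the explicit xVA decomposition of Definition~\ref{def:XVA} and uses the tower property; both are routine and equivalent.
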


\begin{proof}
By Hypothesis \ref{hp:H}, on $\{t<\tau\}$ we have 
\begin{align*}
\hat{V}_t(\varphi)=B^r_t\Excond{\QQ}{\int_{(t,T]}\frac{dA_u}{B^r_u}}{\cF_t}=B^r_t\Excond{\QQ}{\int_{(t,T]}\frac{dA_u}{B^r_u}}{\cG_t}.
\end{align*}
So we consider $\hat{V}$ under $\GG$. We also observe that, on  $\{t < \tau\}$, we have $\bar{A}_t=A_t.$ 
Using \eqref{cleanBSDE} and \eqref{eq:XVAGBSDE}, we write the dynamics of $V$ on $\{t < \tau\}$ 
\begin{align*} 
-dV_t & = d \bar{A}_t + \left[f(t,\hat{V}-XVA,C,I^{TC},I^{FC})dt - r_t \left(\hat{V}_t - XVA_t \right) \right]dt \\ 
 & - \sum_{k=1}^d\left(\hat{Z}^k_t - \tilde{Z}^k_t\right) dW^{k,\QQ}_t - \sum_{j\in\{B,C\}}\left(-\tilde{U}^j_t\right) 
 dM^{j,\QQ}_t
\end{align*}
with terminal condition at $\tau$
\begin{align*}
V_\tau & = \hat{V}_{\tau} - XVA_{\tau} = \hat{V}_{\tau} - \left(\theta_{\tau} - \hat{V}_{\tau} \right) = \theta_{\tau}.
\end{align*}
Since $Z = \hat{Z} - \tilde{Z} \,\in\, \cH^{2,d}(\QQ)$ and $U = - \tilde{U} \,\in\, \textcolor{black}{\cH^{2,2}_{\lambda}(\QQ)},$ by Theorem~\ref{thm:solFbsde} and Proposition~\ref{prop:XVAbar}, we obtain that $\left(V, Z, U \right)$ solves the 
$\GG$-BSDE \eqref{eq:GBSDEdiff} and satisfies the required integrability conditions. 

Finally, we are now able to prove that \eqref{eq:GBSDEint} is equivalent to \eqref{eq:GBSDEdiff}.  

Here we assume to work only on $\{ \tau > t\}.$ Since $V_t = \hat{V} - XVA_t$ and thanks to Definition~\ref{def:XVA} we have  
\begin{align*} 
V_t(\varphi) = \hat{V}_t(\varphi) & + B^r_t\mathbb{E}^{\mathbb{Q}}\left[
\Ind{\tau^C< \tau^B} \frac{(1-R^C)\left(\hat{V}_\tau(\varphi)-C_{\tau-} + I^{FC}_{\tau-}\right)^-}{B^r_\tau} \right. \\
& \left. - 
\Ind{\tau^B< \tau^C}\frac{(1-R^B)\left(\hat{V}_\tau(\varphi)-C_{\tau-} 
- I^{TC}_{\tau-}\right)^+}{B^r_\tau} \right. \\ 
& \left. - \int_t^{\textcolor{black}{\tau}} 
\frac{(r^{f,l}_u-r_u)\left(V_u(\varphi)-C_u-I^{TC}_u\right)^+ 
- (r^{f,b}_u-r_u)\left(V_u(\varphi)-C_u-I^{TC}_u\right)^-}{B^r_u}du \right. \\ 
& \left. - \int_t^{\textcolor{black}{\tau}} 
\frac{(r^{c,l}_u-r_u)C^+_u-(r^{c,b}_u-r_u)C^-_u}{B^r_u}du 
- \int_t^{\textcolor{black}{\tau}} 
\frac{(r^{I,l}_u-r_u)I^{TC}_u-r^{I,b}_uI^{FC}_u}{B^r_u}du \Big|\cG_t\right]. 
\end{align*}

By \eqref{eq:driver} we obtain 
\begin{align*} 
V_t(\varphi) = \hat{V}_t(\varphi) & + B^r_t\mathbb{E}^{\mathbb{Q}}\left[
\Ind{\tau^C< \tau^B} 
\frac{(1-R^C)\left(\hat{V}_\tau(\varphi)-C_{\tau-} + I^{FC}_{\tau-}\right)^-}{B^r_\tau} \right. \\
& \left. - 
\Ind{\tau^B< \tau^C}\frac{(1-R^B)\left(\hat{V}_\tau(\varphi)-C_{\tau-} 
- I^{TC}_{\tau-}\right)^+}{B^r_\tau} \right. \\ 
& \left. + \int_t^{\textcolor{black}{\tau}} 
\frac{f(u,V,C,I^{TC},I^{FC})}{B^r_u}du \Big|\cG_t\right]. 
\end{align*} 

Assumption~\ref{assumptionCloseOut} and \eqref{A_bar} ensure that 
\begin{align*} 
V_t(\varphi) = \hat{V}_t(\varphi) & + B^r_t\mathbb{E}^{\mathbb{Q}}\left[
\int_t^{\textcolor{black}{\tau}} 
\frac{f(u,V,C,I^{TC},I^{FC})}{B^r_u}du + 
\frac{\theta_{\tau}(\hat{V}(\varphi),C,I^{TC},I^{FC}) 
- \hat{V}_{\tau}(\varphi)}{B^r_\tau} \Big|\cG_t\right]. 
\end{align*} 
Now, we apply \eqref{fair_price:clean_mkt}, the tower property and Hypothesis~\ref{hp:H}, so that 
\begin{align*} 
V_t(\varphi) & = B^r_t\mathbb{E}^{\mathbb{Q}}\left[\int_t^{\textcolor{black}{\tau}} 
\frac{f(u,V,C,I^{TC},I^{FC})}{B^r_u}du 
+ \Ind{\tau \leq T}\frac{\theta_{\tau}(\hat{V}(\varphi),C,I^{TC},I^{FC})}{B^r_\tau} \Big|\cG_t\right] \\ 
& + B^r_t\mathbb{E}^{\mathbb{Q}}\left[\int_{(t,T]} \frac{d A_u}{B_u^r} - 
\int_{(t,T]} \frac{d A_u}{B^r_\tau} 
\Big|\cG_t\right]. 
\end{align*} 
Finally, again by \eqref{A_bar}, we have 
\begin{align*}
V_t(\varphi) = B^r_t\mathbb{E}^{\mathbb{Q}}\left[ \int_{(t,T]} \frac{d \bar{A}_u}{B_u^r} + \int_t^{\textcolor{black}{\tau}} 
\frac{f(u,V,C,I^{TC},I^{FC})}{B^r_u}du + 
\frac{\theta_{\tau}(\hat{V}(\varphi),C,I^{TC},I^{FC})}{B^r_\tau} \Big|\cG_t\right].  
\end{align*}
\end{proof}

We now provide an explicit formula for the value adjustments under the filtration $\FF$. 
This representation is particularly useful from a computational point of view: risk factors can be simulated under 
the smaller filtration $\FF$ and the computation of value adjustments does not require the simulation of default times. 
It is an immediate consequence of Proposition~\ref{prop:XVApreDef}.

\begin{corollary}\label{eq:corollary}
Let $\left(\overline{XVA}, \bar{Z}\right)$ be the unique solution to 
the pre-default xVA-BSDE under $\FF$ \eqref{eq:XVApreDef}. Define the process $\tilde{r}=\left(\tilde{r}_t\right)_{t\in[0,T]}$ 
by setting $\tilde{r}:=r+\lambda^{C,\QQ}+\lambda^{B,\QQ}$. Under Assumptions \ref{ass:proxA} and \ref{ass:A} the stochastic process 
$\overline{XVA}$ admits the following representation.
\begin{align}\label{def:XVAPredef}
\overline{XVA}_t = -\overline{CVA}_t + \overline{DVA}_t + \overline{FVA}_t + \overline{ColVA}_t + \overline{MVA}_t,  
\end{align}
where 
\begin{align*}
\overline{CVA}_t& := B^{\tilde{r}}_t\Excond{\QQ}{(1-R^C)\int_t^{T}\frac{1}{B^{\tilde{r}}_u}\left(\hat{V}_u(\varphi)-C_{u} 
+ I^{FC}_{u}\right)^-\lambda^{C,\QQ}_udu}{\cF_t}, \\
\overline{DVA}_t& := B^{\tilde{r}}_t\Excond{\QQ}{(1-R^B)\int_t^{T}\frac{1}{B^{\tilde{r}}_u}\left(\hat{V}_u(\varphi)-C_{u} 
- I^{TC}_{u}\right)^+\lambda^{B,\QQ}_u du}{\cF_t}, \\
\overline{FVA}_t& :=B^{\tilde{r}}_t \Excond{\QQ}{\int_t^{T}\frac{(r^{f,l}_u-r_u)\left(V_u(\varphi)-C_u-I^{TC}_u\right)^+ 
- (r^{f,b}_u-r_u)\left(V_u(\varphi)-C_u-I^{TC}_u\right)^-}{B^{\tilde{r}}_u}du}{\cF_t}, \\
\overline{ColVA}_t& :=B^{\tilde{r}}_t \Excond{\QQ}{\int_t^{T}\frac{(r^{c,l}_u-r_u)C^+_u-(r^{c,b}_u-r_u)C^-_u}{B^{\tilde{r}}_u}du}{\cF_t}, \\
\overline{MVA}_t& := B^{\tilde{r}}_t\Excond{\QQ}{\int_t^{T}\frac{(r^{I,l}_u-r_u)I^{TC}_u-r^{I,b}_uI^{FC}_u}{B^{\tilde{r}}_u}du}{\cF_t}.
\end{align*}
\end{corollary}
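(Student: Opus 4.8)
The strategy is to recognise, once the unique solution of \eqref{eq:XVApreDef} provided by Proposition~\ref{prop:XVApreDef} has been fixed, that the pre-default xVA-BSDE is a \emph{linear} $\FF$-BSDE with bounded discount coefficient $\tilde r=r+\lambda^{C,\QQ}+\lambda^{B,\QQ}$ and null terminal value, and to read off the corresponding Feynman--Kac representation; formula \eqref{def:XVAPredef} then follows by matching the resulting integrand, term by term, with the five value-adjustment integrands in the statement.

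Concretely, let $(\overline{XVA},\bar Z)\in\cS^2(\QQ)\times\cH^{2,d}(\QQ)$ be the unique solution of \eqref{eq:XVApreDef} and set $V_t:=\hat V_t-\overline{XVA}_t$. Inserting the explicit form of $f$ from \eqref{eq:driver} and of $\theta^B,\theta^C$ from \eqref{eq:redefineEAD} into the driver \eqref{eq:driverXVApreDef}, one checks that $\bar f(\hat V_t-\overline{XVA}_t)=\phi_t-\tilde r_t\,\overline{XVA}_t$, where
\begin{align*}
\phi_t &:= (r^{f,l}_t-r_t)\big(V_t-C_t-I^{TC}_t\big)^+-(r^{f,b}_t-r_t)\big(V_t-C_t-I^{TC}_t\big)^-+(r^{c,l}_t-r_t)C^+_t-(r^{c,b}_t-r_t)C^-_t\\
&\quad+(r^{I,l}_t-r_t)I^{TC}_t-r^{I,b}_tI^{FC}_t-\lambda^{C,\QQ}_t(1-R^C)\big(\hat V_t-C_t+I^{FC}_t\big)^-+\lambda^{B,\QQ}_t(1-R^B)\big(\hat V_t-C_t-I^{TC}_t\big)^+.
\end{align*}
Since $\hat V,\overline{XVA}\in\cS^2(\QQ)$, $C$ is Lipschitz in $\hat V$, $I^{TC},I^{FC}\in\cH^2(\QQ)$ and all the rates and intensities are bounded, each of the five groups of summands of $\phi$ (the funding, collateral, margin, counterparty-default and own-default groups) lies in $\cH^2(\QQ)$; in particular $\phi\in\cH^2(\QQ)$ and each of the conditional expectations below is well posed.

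Thus $(\overline{XVA},\bar Z)$ solves the linear BSDE $-d\overline{XVA}_t=(\phi_t-\tilde r_t\overline{XVA}_t)\,dt-\sum_{k=1}^d\bar Z^k_t\,dW^{k,\QQ}_t$ with $\overline{XVA}_T=0$. Because $\tilde r$ is bounded and $\phi\in\cH^2(\QQ)$, the classical linear-BSDE representation (see \cite{elpq1997} and the results recalled in Appendix~\ref{app}) gives, for every $t\in[0,T]$,
\begin{align*}
\overline{XVA}_t=B^{\tilde r}_t\,\Excond{\QQ}{\int_t^{T}\frac{\phi_u}{B^{\tilde r}_u}\,du}{\cF_t},\qquad B^{\tilde r}_t:=\exp\Big(\int_0^t\tilde r_s\,ds\Big).
\end{align*}
Substituting the expression for $\phi_u$ and using the linearity of the conditional expectation, collecting terms by type yields exactly $\overline{FVA}_t+\overline{ColVA}_t+\overline{MVA}_t-\overline{CVA}_t+\overline{DVA}_t$ as defined in the statement, which is \eqref{def:XVAPredef}.

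There is no genuine obstacle here; the one point worth stressing --- and the reason the representation is formulated along the unique solution rather than as a truly closed-form expression --- is that $\phi$ still depends on $\overline{XVA}$, both through the funding term $(\hat V-\overline{XVA}-C-I^{TC})^\pm$ and, via \eqref{def:IM}, through the initial margins. This is precisely why the $\overline{FVA}$ integrand is written in terms of $V=\hat V-\overline{XVA}$; the linear-BSDE argument remains legitimate because, once the solution furnished by Proposition~\ref{prop:XVApreDef} has been fixed, $\phi$ is a fixed element of $\cH^2(\QQ)$.
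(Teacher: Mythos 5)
Your proposal is correct and follows the same route the paper intends: the paper gives no written proof, asserting only that the corollary is an immediate consequence of Proposition~\ref{prop:XVApreDef}, and the intended argument is precisely your rewriting of the driver \eqref{eq:driverXVApreDef} as $\phi_t-\tilde r_t\,\overline{XVA}_t$ followed by the standard integral representation of a linear BSDE with bounded discount rate $\tilde r$ and null terminal condition. Your closing remark that $\phi$ only becomes a fixed element of $\cH^{2}(\QQ)$ after the unique solution has been fixed (because of the recursive $\overline{FVA}$ and initial-margin terms) is exactly the right point to make explicit.
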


In the literature there has been an intense debate regarding the problem given by the possible overlap between FVA and DVA, see e.g. \cite{hw2012}, \cite{ads2019}, \cite{bfp2019} and references therein. This problem, however, is due to accounting inconsistencies that do not affect our pricing equations.  We limit ourselves to mention that a sound treatment of the issue is provided by \cite{bfp2019} and that their solution can be 
embedded in our framework at the cost of further notations.

\subsection{\textcolor{black}{The xVA-CSA consistency problem}}\label{Sect:cons_prob}
\textcolor{black}{We here provide an arbitrage-free framework for the CSA discounting practice, i.e. the practice of using contingent claim specific discounting regimes.}  

In Section \ref{sect:clean} we assumed that the clean value refers to an idealized fully collateralized transaction where the 
collateral rate is simply $r$. The situation in practice is more complicated. The market practice adopted for the computation 
of clean prices involves a multitude of discount curves. Possible examples from the \textcolor{black}{current market practice} are 
\begin{itemize}
\item The (clean) value of a perfectly uncollateralized derivative might be discounted by a bank by means of a bank-specific 
			funding curve with associated short rate $r^f$ (this could correspond to the Libor rate for a bank belonging to the Libor 
			panel), see e.g. \cite{pit10}.
\item The (clean) value of a derivative collateralized in a foreign currency is discounted on the market at a rate depending on 
			cross currency bases, see the formulas and derivations in \cite{gnoatto2020}. 
\end{itemize}

It is quite natural to ask why banks employ \textcolor{black}{many} discount regimes for clean values and, on top of that, xVA corrections. 
The main reason is purely pragmatic and non-mathematical: from the perspective of a trading desk it is convenient to treat CSAs by means of different discount regimes, because this allows to deal with portfolio market risk via traditional 
trading-desk techniques, such as curve trades (i.e. e.g. buying/selling interest rate swaps on different buckets/maturities 
along the curve). Hedging the expectation of an integral such as the FVA term in practice is much more complicated. 
A possible approximate treatment involves discretizing the time integral and treating the resulting Riemann sum over time 
as a portfolio of claims. In view of the aforementioned difficulty, market operators prefer to obtain an additive price 
representation, where discount curves are used to reduce the magnitude of the (funding related) xVA terms, which are more 
difficult to hedge.

From now on, we shall assume that the bank has two internal desks, dubbed the \emph{front-office desk} and the \emph{xVA desk,} 
respectively. The front-office desk is responsible for the calculation of clean values and for the trading activity required to 
hedge market risk of the clean values. The xVA desk instead computes and hedges all the value adjustments and is forced, 
according to internal rules of the bank, to adopt for each transaction the clean value dictated by the front-office desk. 
The fact that the xVA desk is a \textit{clean-value-taker} implies that care is needed when computing xVAs, in order to avoid 
double counting effects. 

The xVA desk has to deal with two different clean values for the same transaction. 
On the one side, \textcolor{black}{it performs an arbitrage-free pricing by computing the clean value by means of the $\FF$-BSDE \eqref{cleanBSDE}.}
On the other side, \textcolor{black}{it has to use the} clean value prescribed by the front office function, which constitutes the official clean value 
accepted within the bank. The xVA desk is then faced with the following challenge:

\begin{problem}[xVA-CSA consistency problem] \label{prob:consistency}
Produce a price decomposition of $V$ in terms of clean value and xVA such that
\begin{itemize}
\item[\emph{(i)}] the representation of $V$ is coherent with the $\GG$-BSDE \eqref{eq:GBSDEdiff}, and 
\item[\emph{(ii)}] the clean price corresponds to the one prescribed by the front-office function.
\end{itemize}
\end{problem}

We now provide a solution for Problem \ref{prob:consistency} by using \textcolor{black}{the} results of Section \ref{Sect:single}. 

\begin{remark}\label{rem:ass_multi}
To provide a concrete example, 
\textcolor{black}{consider} the situation where the trading desk of the 
bank enters into two perfectly collateralized transactions with two different counterparties, the first one being e.g. a 
clearing house such as LCH, the other one being another clearing house such as Eurex. Although the dividend process of the 
claim is the same for both transactions, the collateral remuneration provided by the trade with Eurex and the trade with 
LCH is different. The spread in the collateral remuneration between EUREX and LCH is called \emph{Eurex-LCH basis,} see 
e.g. \cite{rms17} for a more detailed discussion. 
This will result in the two clean values being computed by means of different discounting rates.
\end{remark} 

In summary, the market practice of discounting cashflows according to trade-specific collateral rates implies that, within the bank, 
a single transaction will be discounted at least according to two different regimes. 
Initially, the front-office determines the clean value by discounting cash flows through an ideal market collateral rate $\hat{r}.$ 
Hence the front-office clean value $\textcolor{black}{\hat{\mathcal{P}} = (\hat{\mathcal{P}}_t)_{t \,\in\, [0,T]}}$ is obtained from the $\FF$-BSDE
\begin{align}\label{ex:price}
\begin{cases}
-d\textcolor{black}{\hat{\mathcal{P}}}_t = -\sum_{k=1}^d \textcolor{black}{\hat{Z}^{*,k}_t} d W^{k,\QQ}_t + d A_t - \hat{r} \textcolor{black}{\hat{\mathcal{P}}}_t dt, \\
\textcolor{black}{\hat{\mathcal{P}}}_T = 0.  
\end{cases}
\end{align}
On the other side, the xVA desk first computes the clean value $\hat{V} = (\hat{V}_t)_{t \,\in\, [0,T]}$ as the solution to the $\FF$-BSDE \eqref{cleanBSDE}, 
i.e. by solving
\begin{equation}\label{end:price} 
\begin{cases}
-d\hat{V}_t = -\sum_{k=1}^d\hat{Z}^{k}_t dW^{k,\QQ}_t  + d A_t - r_t\hat{V}_t dt, \\
\hat{V}_T = 0.   
\end{cases}
\end{equation}
From a valuation perspective, if clean values represented the prices of real transactions, the presence of \textcolor{black}{different} discounting 
rules would immediately imply the presence of trivial arbitrage opportunities in the market. Only the endogenous price 
\textcolor{black}{$\hat{V}$ in} \eqref{end:price} 
is compatible with the arbitrage-free setting of Section~\ref{Sect:single}. On the other hand, the xVA desk is forced to provide results 
in terms of the discounting regimes imposed by the front-office. 
The two approaches can be combined in an arbitrage-free setting by means of the following invariance property of linear BSDEs. 

\begin{lemma}\label{lemmaFundingInvariance}
Let $(\hat{V}, \textcolor{black}{\hat{Z}^{1},\ldots, \hat{Z}^{d}})$ be the unique solution of the $\FF$-BSDE \eqref{end:price}. 
Under Assumption \ref{cleanMarket} for $A,$ the value process $\hat{V}$ admits the two equivalent 
representations
\begin{itemize}
\item[i)] xVA-discounting representation
\begin{align}
\label{XVADiscCleanValue}
\hat{V}_t = B^{r}_t \EE^{\QQ}\left[ \int_{(t,T]} \frac{d A_t}{B^{r}_u} \Big| \cF_t \right], 
\end{align}
\item[ii)] CSA-discounting representation
\begin{equation}\label{end:price2} 
\hat{V}_t = \textcolor{black}{\hat{\mathcal{P}}}_t - DiscVA_t, 
\end{equation}
where $DiscVA_t$ represents the \emph{discounting valuation adjustment,} defined as
\begin{equation}\label{discVA:def}
DiscVA_t := B_t^{\hat{r}} \EE^{\QQ}\left[ \int_t^T (r_u - \hat{r}) \frac{\hat{V}_u}{B_u^{\hat{r}}} du \Big| \cF_t \right],
\end{equation}
and $\textcolor{black}{\hat{\mathcal{P}}}$ is the value process in the solution $(\textcolor{black}{\hat{\mathcal{P}}}, \hat{Z}^{*,1},\ldots, \hat{Z}^{*,d})$ of the $\FF$-BSDE \eqref{ex:price}
\begin{equation}
\textcolor{black}{\hat{\mathcal{P}}}_t = B_t^{\hat{r}} \EE^{\QQ}\left[ \int_{(t,T]} \frac{d A_t}{B_u^{\hat{r}}} \Big| \cF_t \right]. 
\end{equation}
\end{itemize}
\end{lemma}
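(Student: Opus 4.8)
The plan is to read the two $\FF$-BSDEs \eqref{end:price} and \eqref{ex:price} as linear BSDEs of exactly the type already handled in Theorem~\ref{th:cleanPriceTh}, and to exploit the freedom in the choice of the discounting rate in the associated linear representation. Part (i) is essentially a restatement of Theorem~\ref{th:cleanPriceTh}, and part (ii) follows by rewriting \eqref{end:price} as a linear BSDE with discount rate $\hat r$ and a modified cumulative cashflow, then reading off its $\hat r$-discounted representation and splitting it into the front-office price and the correction term.

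\emph{Part (i).} The $\FF$-BSDE \eqref{end:price} coincides with the clean BSDE \eqref{cleanBSDE} with $r$ as short rate, so Theorem~\ref{thm:solFbsde} gives the unique solution $(\hat V,\hat Z^1,\dots,\hat Z^d)\in\cS^2(\QQ)\times\cH^{2,d}(\QQ)$ and Theorem~\ref{th:cleanPriceTh} yields \eqref{XVADiscCleanValue} directly. Concretely one obtains it by applying It\^o's formula to $\hat V_t/B^r_t$: the drift cancels and $\hat V_t/B^r_t+\int_{(0,t]}(B^r_u)^{-1}\,dA_u$ is a local $\QQ$-martingale, in fact a true $\QQ$-martingale since $\hat Z\in\cH^{2,d}(\QQ)$, $A\in\cS^2(\QQ)$ by Assumption~\ref{ass:proxA} and $B^r$ is bounded; taking $\Excond{\QQ}{\,\cdot\,}{\cF_t}$ and using $\hat V_T=0$ gives \eqref{XVADiscCleanValue}. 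The identical argument applied to \eqref{ex:price}, with the bounded $\FF$-adapted CSA rate $\hat r$ in place of $r$, produces the unique solution $(\hat{\mathcal{P}},\hat Z^{*,1},\dots,\hat Z^{*,d})$ and the representation $\hat{\mathcal{P}}_t=B^{\hat r}_t\Excond{\QQ}{\int_{(t,T]}(B^{\hat r}_u)^{-1}\,dA_u}{\cF_t}$.

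\emph{Part (ii).} The key observation is that adding and subtracting $\hat r\,\hat V_t\,dt$ in the driver of \eqref{end:price} rewrites it as the linear $\FF$-BSDE with discount rate $\hat r$ and ``cashflow'' process $A_t-\int_{(0,t]}(r_u-\hat r)\hat V_u\,du$, namely
\begin{align*}
-d\hat V_t = d\Bigl(A_t-\int_{(0,t]}(r_u-\hat r)\hat V_u\,du\Bigr)-\hat r\,\hat V_t\,dt-\sum_{k=1}^d\hat Z^k_t\,dW^{k,\QQ}_t,\qquad \hat V_T=0.
\end{align*}
Applying It\^o's formula to $\hat V_t/B^{\hat r}_t$ now makes $\hat V_t/B^{\hat r}_t+\int_{(0,t]}(B^{\hat r}_u)^{-1}\bigl(dA_u-(r_u-\hat r)\hat V_u\,du\bigr)$ a local $\QQ$-martingale; it is a true $\QQ$-martingale since, on top of the integrability used in part (i), $\hat V\in\cS^2(\QQ)$ and $r,\hat r$ are bounded, so $(r_u-\hat r)\hat V_u$ is $dt\otimes d\QQ$-integrable on $[0,T]$. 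Taking $\Excond{\QQ}{\,\cdot\,}{\cF_t}$, using $\hat V_T=0$ and splitting the resulting expectation gives
\begin{align*}
\hat V_t=B^{\hat r}_t\Excond{\QQ}{\int_{(t,T]}\frac{dA_u}{B^{\hat r}_u}}{\cF_t}-B^{\hat r}_t\Excond{\QQ}{\int_t^T(r_u-\hat r)\frac{\hat V_u}{B^{\hat r}_u}\,du}{\cF_t}=\hat{\mathcal{P}}_t-DiscVA_t,
\end{align*}
which is \eqref{end:price2} with $DiscVA_t$ as in \eqref{discVA:def}.

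I do not expect a genuine obstacle here: the statement is a routine invariance property of linear BSDEs. The only points deserving care are (a) justifying the passage from a local to a true $\QQ$-martingale for the re-discounted process, which is exactly where $\hat V\in\cS^2(\QQ)$ together with the boundedness of $r$ and $\hat r$ is used, so that the extra term $(r_u-\hat r)\hat V_u$ does not spoil integrability; and (b) recording that uniqueness of the linear-BSDE solution (Theorem~\ref{thm:solFbsde}) guarantees that the xVA-discounting and CSA-discounting formulas describe one and the same process $\hat V$, so that $DiscVA$ is well defined as the discrepancy $\hat{\mathcal{P}}-\hat V$.
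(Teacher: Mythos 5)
Your proposal is correct and follows essentially the same route as the paper: part (i) is read off from Theorem~\ref{th:cleanPriceTh}, and part (ii) is obtained by adding and subtracting $\hat r\,\hat V_t\,dt$ in the driver of \eqref{end:price} and splitting the resulting $\hat r$-discounted representation into $\hat{\mathcal{P}}$ and $DiscVA$. You merely spell out the It\^o/martingale integrability details that the paper leaves implicit.
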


\begin{proof}
The integral representation \eqref{XVADiscCleanValue} is immediate. 
To obtain \eqref{end:price2} we rewrite the $\FF$-BSDE \eqref{end:price} adding and subtracting the term 
$\hat{r} \hat{V}_t,$ i.e., 
\begin{align*}
\begin{cases}
-d\hat{V}_t = -\sum_{k=1}^d\hat{Z}^{k}_t dW^{k,\QQ}_t  + d A_t - (r_t-{\hat{r}}_t)\hat{V}_t dt 
- {\hat{r}}_t\hat{V}_t dt \\
\hat{V}_T = 0.
\end{cases}
\end{align*}
The value process of the solution is given by
\begin{align*}
\hat{V}_t = B_t^{\hat{r}} \EE^{\QQ}\left[ \int_{(t,T]} \frac{d A_t}{B_u^{\hat{r}}} \Big| \cF_t \right] 
- B_t^{\hat{r}} \EE^{\QQ}\left[ \int_t^T (r_u - {\hat{r}}_u) \frac{\hat{V}_u}{B_u^{\hat{r}}} du \Big| \cF_t \right], 
\end{align*}
where we recognize the first expectation as $\hat{P}$, whereas the second one provides $DiscVA$.
\end{proof}
The $\GG$-BSDE \textcolor{black}{for the dynamics of $V$ under $\mathbb{Q}$}
can be written as
\begin{align}\label{fullBSDE:multi}
\begin{cases}
-d V_t(\varphi) & = d \bar{A}_t + \left(f(t,V,C,I^{TC},I^{FC}) - r_t V_t(\varphi) \right) dt \\
&\quad- \sum_{k=1}^d Z_t^k d W_t^{k,\QQ} - \sum_{j\,\in\,\{B,C\}} U_t^j dM_t^{j,\QQ}, \\ 
V_{\tau}(\varphi) & = \theta_{\tau}\left(\hat{V},C,I\right),  
\end{cases}
\end{align}
\textcolor{black}{for $t \leq \tau,$}
where $\bar{A}_t$ is defined in \eqref{A_bar}, $Z_t = \left(Z_t^{1}, \ldots, Z_t^{d} \right),$ 
and $U_t = \left(U_t^{B}, U_t^{C} \right),$ represent the control processes given by 
$\GG$-predictable processes, and $f(t,V,C,I^{TC},I^{FC})$ is the $\GG$-BSDE driver given by \eqref{eq:driver}. 
The close-out condition is 
\begin{align} \label{eq:closeout_multi}
\begin{aligned}
V_{\tau}(\varphi) & = \textcolor{black}{\hat{\mathcal{P}}}_{\tau} + \Ind{\tau^C< \tau^B}(1-R^C)\left(\textcolor{black}{\hat{\mathcal{P}}}_{\tau} - C_{\tau-} 
+ I^{FC}_{\tau-} - DiscVA_{\tau} \right)^-\\
 & - \Ind{\tau^B< \tau^C}(1-R^B)\left( \textcolor{black}{\hat{\mathcal{P}}}_{\tau} - C_{\tau-} - I^{TC}_{\tau-} -  DiscVA_{\tau}\right)^+. 
\end{aligned}
\end{align}

By using the same arguments given for Theorem~\ref{thm:solGbsde} 
and taking into account Definition \ref{def:XVA}, we obtain the following result, with the help of 
Lemma~\ref{lemmaFundingInvariance}.

\begin{proposition}\label{fullGG:multi}
Under Assumption~\ref{cleanMarket} and ~\ref{ass:A}, 
the $\GG$-BSDE \eqref{fullBSDE:multi} \textcolor{black}{admits a unique 
solution $(V,Z,U) \,\in\, \cS^2(\mathbb{Q}) \times \cH^{2,d}(\mathbb{Q}) \times \cH^{2,2}_{\lambda}(\mathbb{Q})$ with} 
\begin{align} 
\label{eq:xvaMultiDiscount}
\begin{aligned}
V_t(\varphi) & =  B_t^{\hat{r}} \mathbb{E}^{\mathbb{Q}}\left[\int_{(t,T]} \frac{d A_u}{B_u^{\hat{r}}} \Big| 
\cF_t \right] - XVA_t =  \textcolor{black}{\hat{\mathcal{P}}}_t -\widehat{XVA}_t,
\end{aligned}
\end{align}
\textcolor{black}{for $t < \tau,$} where 
\begin{align}
\widehat{XVA}:= XVA+DiscVA, 
\end{align}
with
\begin{align} \label{eq:XVA_t}
\begin{aligned}
XVA_t & := FVA_t + ColVA_t + MVA_t - CVA_t +DVA_t\\ 
 & =FVA_t + ColVA_t + MVA_t \\ 
 & - B_t^r \EE^{\QQ}\left[\left. \Ind{\tau < T}\Ind{\tau^C< \tau^B}(1-R^C)\frac{1}{B^r_\tau} 
   \left( \textcolor{black}{\hat{\mathcal{P}}}_{\tau} - C_{\tau-} + I^{FC}_{\tau-} 
   -  DiscVA_{\tau}\right)^- \right| \cG_t\right] \\ 
 & + B_t^r \EE^{\QQ}\left[\left. \Ind{\tau < T}\Ind{\tau^B< \tau^C}(1-R^B)\frac{1}{B^r_\tau} 
\left(\textcolor{black}{\hat{\mathcal{P}}}_{\tau}  -C_{\tau-}+I^{TC}_{\tau-} -  DiscVA_{\tau}\right)^+ \right| \cG_t\right] 
 \end{aligned}
\end{align}
and
\begin{align*}
  DiscVA_t &:=  B_t^{\hat{r}} \EE^{\QQ}\left[ \int_t^T (r_u - {\hat{r}}_u) 
	\frac{\hat{V}_u}{B_u^{\hat{r}}} du \Big| \cF_t \right], 
\end{align*}
where FVA, ColVA, MVA are defined in line with Definition~\ref{def:XVA}. 
\end{proposition}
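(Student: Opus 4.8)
The plan is to recognise that, once one uses the change of clean value supplied by Lemma~\ref{lemmaFundingInvariance}, the $\GG$-BSDE \eqref{fullBSDE:multi}--\eqref{eq:closeout_multi} is \emph{literally} the $\GG$-BSDE \eqref{eq:GBSDEdiff} already solved in Theorem~\ref{thm:solGbsde} with the choice $\cV=\hat V$. Indeed, by Lemma~\ref{lemmaFundingInvariance} the endogenous clean value satisfies $\hat V_t=\hat{\mathcal{P}}_t-DiscVA_t$ for every $t\,\in\,[0,T]$, $\QQ$-a.s.; since $\hat V$, $\hat{\mathcal{P}}$ and $DiscVA$ are all $\FF$-adapted \cadlag processes and $\tau\wedge T$ takes values in $[0,T]$, the identity also holds at $\tau\wedge T$. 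Substituting it into the risk-free close-out $\theta_\tau(\hat V,C,I^{TC},I^{FC})$ of Assumption~\ref{assumptionCloseOut} produces exactly \eqref{eq:closeout_multi}, while the driver of \eqref{fullBSDE:multi} is the one of \eqref{eq:GBSDEdiff} with the same specification \eqref{def:IM} of the initial margins. Hence the two $\GG$-BSDEs coincide.

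I would then invoke Theorem~\ref{thm:solGbsde}: under Assumptions~\ref{ass:proxA} (on $A$) and \ref{ass:A} — which are what the clean-market setup of Assumption~\ref{cleanMarket} and Assumption~\ref{ass:A} amount to here — the equation \eqref{fullBSDE:multi} admits a unique solution $(V,Z,U)\,\in\,\cS^2(\QQ)\times\cH^{2,d}(\QQ)\times\cH^{2,2}_{\lambda}(\QQ)$, and on $\{\tau>t\}$ one has $V_t=\hat V_t-XVA_t$ with $XVA$ as in Definition~\ref{def:XVA}, together with the integral representation \eqref{eq:GBSDEint}. Using once more $\hat V_t=\hat{\mathcal{P}}_t-DiscVA_t$ gives, for $t<\tau$,
\[
V_t(\varphi)=\hat{\mathcal{P}}_t-DiscVA_t-XVA_t=\hat{\mathcal{P}}_t-\widehat{XVA}_t ,
\]
where $\widehat{XVA}:=XVA+DiscVA$; this is \eqref{eq:xvaMultiDiscount}, and the front-office clean value $\hat{\mathcal{P}}_t=B_t^{\hat r}\EE^{\QQ}[\int_{(t,T]}dA_u/B_u^{\hat r}\mid\cF_t]$ is the explicit solution of the linear $\FF$-BSDE \eqref{ex:price}, obtained exactly as in Theorem~\ref{thm:solFbsde} and Theorem~\ref{th:cleanPriceTh}.

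For the explicit expression \eqref{eq:XVA_t} I would start from the five adjustment terms of Definition~\ref{def:XVA}. The $FVA$, $ColVA$ and $MVA$ terms involve only $V$, $C$ and the initial margins, hence carry over verbatim. In the $CVA$ and $DVA$ terms the only occurrence of the endogenous clean value is $\hat V_\tau$ inside the close-out exposures; replacing it by $\hat{\mathcal{P}}_\tau-DiscVA_\tau$ (Lemma~\ref{lemmaFundingInvariance} evaluated at $\tau\wedge T$) turns them into the $CVA$ and $DVA$ expressions displayed in \eqref{eq:XVA_t}. To make sense of the decomposition $V=\hat{\mathcal{P}}-\widehat{XVA}$ termwise, one checks that $\hat{\mathcal{P}}\,\in\,\cS^2(\QQ)$ (same argument as for $\hat V$ in Theorem~\ref{thm:solFbsde}, using Assumption~\ref{ass:proxA}) and that $DiscVA\,\in\,\cS^2(\QQ)$, which is immediate from the boundedness of $r-\hat r$, $B^{\hat r}$ and $(B^{\hat r})^{-1}$, Doob's maximal inequality applied to the $\QQ$-martingale $\EE^{\QQ}[\int_0^T(r_u-\hat r_u)\hat V_u/B_u^{\hat r}\,du\mid\cF_t]$, and $\hat V\,\in\,\cS^2(\QQ)$.

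The only step requiring a little care — the ``main obstacle'', such as it is — is the passage from the pointwise-in-$t$ identity of Lemma~\ref{lemmaFundingInvariance} to its value at the random time $\tau$, and the attendant verification that with this substitution the close-out \eqref{eq:closeout_multi} is genuinely the risk-free close-out of Assumption~\ref{assumptionCloseOut}; once this identification is in place, the statement follows by a direct appeal to Theorem~\ref{thm:solGbsde} together with the linearity and explicit solvability of \eqref{ex:price}.
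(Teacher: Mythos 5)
Your proposal is correct and follows essentially the same route as the paper: the authors also obtain the result by identifying \eqref{fullBSDE:multi} with the $\GG$-BSDE of Theorem~\ref{thm:solGbsde} (so that existence, uniqueness and the decomposition $V=\hat V-XVA$ carry over), and then applying Lemma~\ref{lemmaFundingInvariance} to rewrite $\hat V=\hat{\mathcal{P}}-DiscVA$ in the value and in the close-out exposures. Your additional checks (validity of the identity at $\tau\wedge T$ and the $\cS^2(\QQ)$ bounds for $\hat{\mathcal{P}}$ and $DiscVA$) only make explicit what the paper leaves implicit.
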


\textcolor{black}{
\begin{proof} 
Existence and uniqueness for the solution of \eqref{fullBSDE:multi} follow along the lines of Section~\ref{Sect:single}.
\end{proof}
}

\textcolor{black}{
The impact of CSA discounting on derivative exposures is represented by the presence of $\hat{\mathcal{P}}$ in the CVA and DVA terms in \eqref{eq:XVA_t}. Furthermore, decomposition \eqref{eq:XVA_t} shows the exposure profile in the portfolio under $r$ or $\hat{r},$ respectively. 
}

\subsection{\textcolor{black}{xVA multiple aggregation level}} 
\textcolor{black}{In this section we provide a $\GG$-BSDE for the global portfolio of the bank given by the aggregation of several sub-portfolios of trades. 
Current market practice shows that 
}
the set of trades between the bank and the counterparty can be typically split into several subsets reflecting multiple 
aggregation levels. One can distinguish between funding/margin sets and netting sets. 
\textcolor{black}{
A margin (or funding) set $\cM$ is a set of claims whose aggregated clean values (exposures) are fully or partially 
covered by a CSA (collateral agreement). We denote by $N_\cM$ the number of margin sets in a portfolio $\cA$. 
A netting set $\cN$ is a set of margin sets whose post-margin exposures can be aggregated. 
We denote by $N_\cN$ the number of netting sets in a portfolio $\cA$. 
}

\begin{figure}[h]
\begin{tikzpicture}[node distance=1.5cm]
    \node (Item) [abstract, rectangle split, rectangle]
        {
            \textbf{Counterparty}
        };
 
    \node (AuxNode01) [text width=2cm, below=of Item] {};
    
    \node (Component) [abstract, rectangle split, rectangle split parts=2, left=of AuxNode01,yshift = -0.5cm]
        {
            \textbf{Netting Set 1}
            \nodepart{second} e.g. a first subsidiary
        };
        
    \node (System) [abstract, rectangle split, rectangle split parts=2, right=of AuxNode01,yshift = -0.5cm]
        {
            \textbf{Netting Set 2}
            \nodepart{second} e.g. a second subsidiary
        };

    \node (AuxNode02) [abstract, rectangle split, rectangle split parts=2, below=of Component, yshift = -0.5cm]
        {
            \textbf{Margin Set 2}
            \nodepart{second} USD collateralized Trades
        };
    
    \node (Sensor) [abstract, rectangle split, rectangle split parts=2, left=of AuxNode02, xshift = 0.5cm]
        {
            \textbf{Margin Set 1}
            \nodepart{second}Unsecured Trades
        };
        
    \node (Part) [abstract, rectangle split, rectangle split parts=2, right=of AuxNode02, xshift = -0.5cm]
        {
            \textbf{Margin Set 3}
            \nodepart{second} EUR collateralized Trades
        };
        
    \node (AuxNode05) [below=of System] {};
    \node (CoolingSystem) [abstract, rectangle split, rectangle split parts=2, left=of AuxNode05, xshift=2cm, yshift=-1.5cm]
        {
            \textbf{Legacy EUR Margin Set}
            \nodepart{second} Monthly margin calls.
        };
    \node (CoolingLoop) [abstract, rectangle split, rectangle split parts=2, right=of AuxNode05, xshift=-1cm, yshift=-1.5cm]
        {
            \textbf{New EUR Margin Set}
            \nodepart{second} Daily margin calls. 
        };
     
    \draw[myarrow] (Component.north) -- ++(0,0.8) -| (Item.south);
  \draw[line] (Component.north) -- ++(0,0.8) -| (System.north);   
    \draw[myarrow] (Sensor.north) -- ++(0,0.8) -| (Component.south);
  \draw[line] (Sensor.north) -- ++(0,0.8) -| (Part.north); 
    \draw[myarrow] (AuxNode02.north) -- ++(0,0.8) -| (Component.south);
    \draw[myarrow] (CoolingSystem.north) -- ++(0,0.8) -| (System.south);
    \draw[line] (CoolingSystem.north) -- ++(0,0.8) -| (CoolingLoop.north);
          
\end{tikzpicture}
\caption{A possible hierarchical structure of aggregation levels.\label{fig:hierarchy}}
\end{figure}
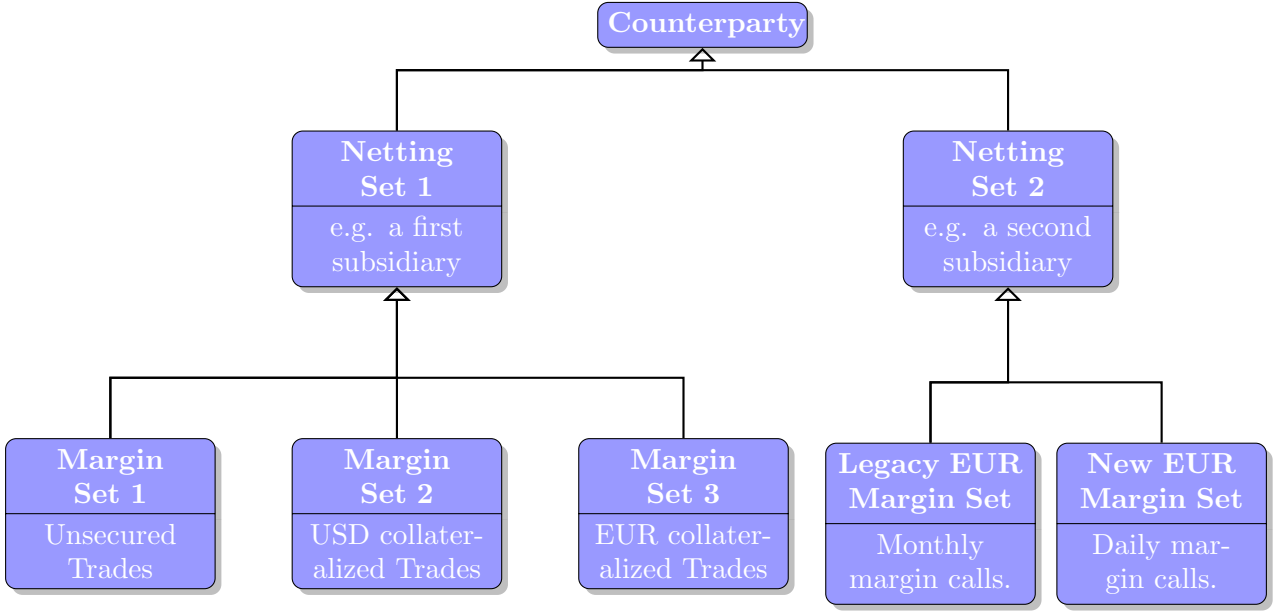 

\begin{example} \label{es:multilevel}
Funding/margin sets are traded between the bank and the 
counterparty that share the same funding policy. This corresponds to different CSAs: for example, 
one CSA (Margin Set 2) could group all trades for which collateral is exchanged in USD (e.g. for foreign exchange derivatives), 
whereas another CSA (Margin Set 3) could be relevant for all instruments collateralized in EUR. 
Finally, trades that are not collateralized, but whose exposures are netted among each other, can be also grouped in a separate 
margin/funding set, corresponding to Margin Set 1 in Figure~\ref{fig:hierarchy}. \\
The protection provided by collateralization agreements might however be imperfect, hence a legal agreement between the bank and the counterparty might allow for the netting of residual post collateral exposures arising from different margin sets. 
This corresponds to Netting Set 1 in Figure~\ref{fig:hierarchy}. \\
Another typical source of multiple aggregation levels is the historical stratification of legal agreements: 
in Figure~\ref{fig:hierarchy} we have a second netting set, corresponding to a second subsidiary of the parent counterparty, 
where legacy trades are covered by an old CSA agreement involving monthly margin calls, whereas all trades entered after a 
certain date are covered by a newer CSA agreement involving daily margin calls. 
\end{example} 
\begin{remark}
A further level of complexity could arise when the parent and the subsidiaries have different default times: 
this introduces further complications when modeling the close-out condition because one might have e.g. a situation 
where the default of a subsidiary is covered by the parent. Such issues are left for future research. 
From a practical point of view it is also difficult to find calibration instruments for default probabilities, 
since subsidiaries typically do not enjoy a liquid CDS market. 
\end{remark} 

We assume that the portfolio $\cA$ of trades between the bank and the counterparty 
consists of $K$ trades, that we identify by means of the respective payment processes, i.e., $\cA=\left\{A^1,\ldots A^K\right\}$. 
We use again $\hat{V}^m$ to denote the clean reference value of the claims $A^m, \, m=1,\ldots K$, representing also their 
credit exposure before collateral is applied. \\
For every claim $A^m\in\cA$, $m=1,\ldots K,$ we assume that the margin set for initial and variation margin coincide. \\
The structure of the portfolio is illustrated in Figure~\ref{fig:hierarchy}, where the first row depicts the composition of 
all margin sets as groups of claims and the second line describes the netting sets as groups of margin sets. Hence,  
\begin{align} \label{eq:ptf_A}
\begin{aligned}
\cA &=  \underbrace{\{A^1,\ldots, A^{N_1} \}}_{\cM_{1}} \cup \underbrace{\{A^{N_1+1}, \ldots, A^{N_2} \}}_{\cM_{2}} \cup \ldots \cup \underbrace{\{A^{N_{N_\cM-1}+1},\ldots, A^{N_{N_\cM}} \}}_{\cM_{N_{\cM}}}\\ 
&=\underbrace{\{\cM_1,\ldots, \cM_{M_{1}} \}}_{\cN_{1}} \cup \underbrace{\{\cM_{M_{1}+1}, \ldots, \cM_{M_{2}} \}}_{\cN_{2}} \cup \ldots \cup  \underbrace{\{\cM_{M_{N_{\cM}-1}+1},\ldots, \cM_{M_{N_{\cM}}} \}}_{\cN_{N_\cN}} 
\end{aligned}
\end{align}
where we have $N_{N_\cM}=K$. \\ 
\textcolor{black}{
We now provide a $\GG$-BSDE for the portfolio at the $m_2$-th netting set-level in \eqref{eq:ptf_A} by using the results of Section \ref{sect:clean}. 
The presence of different margin sets is represented by the introduction of different collateral accounts. Multiple netting sets are accounted for by summing value adjustments over all netting sets, each netting set possibly featuring several margin sets. } \\  
\textcolor{black}{
To this purpose, we resort to a multi-index notation for the processes involved. In particular, we denote by $\bar{A}_t^{m,m_1,m_2}$ 
the $m$-th contingent claim in the portfolio $\mathcal{A}$ belonging to the $m_1$-th netting set-level and at the $m_2$-th margin 
set-level. The same notation applies to the control processes $Z_t^{k,m_1,m_2}$ and $U_t^{j,m_1,m_2},$ for $k = 1, \ldots, d$ and 
$j \,\in\, \{B,C\}.$ 
Hence, the value process $V_t^{m_2}$ corresponding to the $m_2$-th netting set, $m_2 = 1, \ldots, N_{\mathcal{N}},$ satisfies 
the following \textcolor{black}{$m_2$-th netting set-level} BSDE 
}
\begin{align} \nonumber 
-d V_t^{m_2}(\varphi) & = \sum_{m_1=1}^{|\mathcal{N}_{m_2}|} \sum_{m=1}^{|\mathcal{M}_{m_1}|} d \textcolor{black}{\bar{A}}_t^{m,m_1,m_2} \\ \nonumber 
 & + \left[f\left(t, V^{m_2}, (C^{\mathcal{M}_{m_1},\mathcal{N}_{m_2}})_{m_1=1,\ldots,|\mathcal{N}_{m_2}|}, 
(I^{\mathcal{M}_{m_1},\mathcal{N}_{m_2}})_{m_1=1,\ldots,|\mathcal{N}_{m_2}|} \right) - r_t V_t^{m_2}(\varphi) \right] dt \\ 
\label{eq:GBSDE_netting} 
 & - \sum_{m_1=1}^{|\mathcal{N}_{m_2}|} \sum_{m=1}^{|\mathcal{M}_{m_1}|} \sum_{k=1}^d Z_t^{k,m_1,m_2} d W_t^{k,\mathbb{Q}}  
 - \sum_{m_1=1}^{|\mathcal{N}_{m_2}|} \sum_{m=1}^{|\mathcal{M}_{m_1}|} \sum_{j \,\in\, \{B,C\}} U_t^{j,m_1,m_2} d M_t^{j,\mathbb{Q}} \;, 
\end{align} 
for $m_2 = 1, \ldots, N_{\mathcal{N}}.$ \\ 
By applying Theorem \ref{fullGG:multi}, we obtain the following 
\begin{proposition} 
There exists a unique solution $(V^{m_2},Z^{m_1,m_2},U^{m_1,m_2}) \textcolor{black}{\,\in\, \cS^2(\QQ)\times\cH^{2,d}(\QQ)\times\cH^{2,2}_{\lambda}(\QQ)}$ to 
\textcolor{black}{the $\mathbb{G}$}-BSDE  
\eqref{eq:GBSDE_netting} with \textcolor{black}{close-out} condition 
\begin{align} \nonumber 
V_{\tau}^{m_2}(\varphi) & = \theta_{\tau}^{m_2} = \sum_{m_1=1}^{|\mathcal{N}_{m_2}|} \sum_{m=1}^{|\mathcal{M}_{m_1}|} 
\left(\hat{\cP}_{\tau}^{m,m_1,m_2} - DiscVA_{\tau}^{m,m_1,m_2} \right) \\ \nonumber 
& + 1_{\tau^C < \tau^B} (1 - R^C) \left(\sum_{m_1=1}^{|\mathcal{N}_{m_2}|} \sum_{m=1}^{|\mathcal{M}_{m_1}|} 
\left(\hat{\cP}_{\tau}^{m,m_1,m_2} - DiscVA_{\tau}^{m,m_1,m_2} \right) - C_{\tau-}^{\mathcal{M}_{m_1},\mathcal{N}_{m_2}} 
 - I_{\tau-}^{TC,\mathcal{M}_{m_1},\mathcal{N}_{m_2}} \right)^- \\ \label{eq:closeout_netting}
& - 1_{\tau^B < \tau^C} (1 - R^B) \left(\sum_{m_1=1}^{|\mathcal{N}_{m_2}|} \sum_{m=1}^{|\mathcal{M}_{m_1}|} 
\left(\hat{\cP}_{\tau}^{m,m_1,m_2} - DiscVA_{\tau}^{m,m_1,m_2} \right) - C_{\tau-}^{\mathcal{M}_{m_1},\mathcal{N}_{m_2}} 
 - I_{\tau-}^{FC,\mathcal{M}_{m_1},\mathcal{N}_{m_2}} \right)^+ \;.
\end{align}
\end{proposition}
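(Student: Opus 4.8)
The plan is to reduce the statement to the single-claim result of Proposition~\ref{fullGG:multi}, exploiting that $\cA$ contains only finitely many trades $A^1,\dots,A^K$, so that each aggregation step is a finite sum of processes whose existence, uniqueness and integrability have already been established. First, fix $m_2\in\{1,\dots,N_{\cN}\}$ and, for every claim $A^m$ in the margin set $\cM_{m_1}\subset\cN_{m_2}$, note that $A^m\in\cS^2(\QQ)$ by Assumption~\ref{ass:proxA}; hence Theorem~\ref{thm:solFbsde} and Lemma~\ref{lemmaFundingInvariance} give well-defined front-office clean values $\hat{\cP}^{m,m_1,m_2}$, xVA-desk clean values $\hat V^{m,m_1,m_2}$ and discounting adjustments $DiscVA^{m,m_1,m_2}$, all in $\cS^2(\QQ)$. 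Since $\cS^2(\QQ)$ is stable under finite sums, the netting-set aggregate
\[
\hat{\cP}^{m_2}_t:=\sum_{m_1=1}^{|\cN_{m_2}|}\sum_{m=1}^{|\cM_{m_1}|}\bigl(\hat{\cP}^{m,m_1,m_2}_t-DiscVA^{m,m_1,m_2}_t\bigr)
\]
belongs to $\cS^2(\QQ)$, and similarly for the analogous quantity built from the $\hat V^{m,m_1,m_2}$.

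Next I would check that the coefficients of \eqref{eq:GBSDE_netting}--\eqref{eq:closeout_netting} satisfy the hypotheses used in Proposition~\ref{fullGG:multi}. The collateral $C^{\cM_{m_1},\cN_{m_2}}$ and the initial margins $I^{TC,\cM_{m_1},\cN_{m_2}},\,I^{FC,\cM_{m_1},\cN_{m_2}}$ are, by construction, respectively a Lipschitz function and a risk measure of the margin-set aggregate of clean values, which is a finite sum of $\cS^2(\QQ)$ processes; they therefore inherit the Lipschitz and integrability properties of Assumption~\ref{ass:A}, with constants controlled by the corresponding sums of single-claim constants since $K<\infty$. The driver $f$ appearing in \eqref{eq:GBSDE_netting} is the driver \eqref{eq:driver} with its collateral and initial-margin slots filled by the margin sets of $\cN_{m_2}$; being a finite sum of compositions of \eqref{eq:driver} with Lipschitz maps, it is Lipschitz in $(V,Z,U)$ and fulfils Assumption~\ref{ass:S_gobet}. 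The close-out $\theta^{m_2}_\tau$ in \eqref{eq:closeout_netting} is a finite sum of $\hat{\cP}^{m_2}_\tau$ and of terms of the form $\bigl(\hat{\cP}^{m_2}_\tau-C^{\cM_{m_1},\cN_{m_2}}_{\tau-}-I^{TC,\cM_{m_1},\cN_{m_2}}_{\tau-}\bigr)^{+}$ and the analogous term with superscript $-$ and $I^{FC}$; since $x\mapsto x^{\pm}$ is $1$-Lipschitz, $\theta^{m_2}$ is a Lipschitz functional of the clean values and lies in the same space as $\theta_\tau$ in \eqref{eq:closeout_multi}.

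With these verifications in hand, the argument of Proposition~\ref{fullGG:multi} applies after the above substitutions. Concretely, one forms the pre-default $\FF$-BSDE associated with \eqref{eq:GBSDE_netting}, obtained by discounting the $A^m$ through the front-office rate as in \eqref{ex:price}--\eqref{end:price2} and replacing $\theta^B,\theta^C$ in \eqref{eq:driverXVApreDef} with the margin-set-wise close-out terms; this $\FF$-BSDE is well posed by the a priori estimates of Lemma~\ref{lemma:aprioriEst} and Proposition~\ref{prop:XVApreDef}. One then lifts its unique solution to $\GG$ via Proposition~\ref{prop:XVAbar}, using Hypothesis~\ref{hp:H} and the predictable representation property in $\GG$ with respect to $W^{\QQ},M^{B,\QQ},M^{C,\QQ}$, and combines it with $\hat{\cP}^{m_2}$ exactly as in Theorem~\ref{thm:solGbsde}. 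This yields the triplet $(V^{m_2},Z^{m_1,m_2},U^{m_1,m_2})\in\cS^2(\QQ)\times\cH^{2,d}(\QQ)\times\cH^{2,2}_\lambda(\QQ)$ solving \eqref{eq:GBSDE_netting} with close-out \eqref{eq:closeout_netting}, together with uniqueness in that class.

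The only point requiring genuine care, in my view, is that the risk-measure condition in Assumption~\ref{ass:A}(i) is preserved under margin-set aggregation: one must verify that the map sending the aggregated clean value to $I^{TC,\cM_{m_1},\cN_{m_2}}$ (resp. $I^{FC,\cM_{m_1},\cN_{m_2}}$) still satisfies \eqref{eq:Lip_IM}. This holds because \eqref{eq:Lip_IM} is linear in its argument and each margin set contains finitely many claims, so the aggregated map satisfies \eqref{eq:Lip_IM} with the constant $C_\rho$ replaced by $C_\rho\max_{m_1}|\cM_{m_1}|$; everything else is multi-index bookkeeping that does not alter the structure of the BSDE.
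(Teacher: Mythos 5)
Your proposal is correct and takes essentially the same route as the paper, which gives no written proof at all and simply asserts that the result follows by applying Proposition~\ref{fullGG:multi}: you reduce the multi-aggregation BSDE to the single-netting-set case and verify that finite sums over margin and netting sets preserve the $\cS^2(\QQ)$-integrability, the Lipschitz property of the driver and close-out, and condition \eqref{eq:Lip_IM} for the aggregated initial margins. If anything, your verification that the hypotheses of Assumption~\ref{ass:A} survive aggregation (with the constant $C_\rho$ inflated by the finite cardinality of the margin sets) supplies detail that the paper leaves implicit.
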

\begin{remark} 
It is easy to see that the \textcolor{black}{value $\mathscr{V}^K(\varphi)$ of a} portfolio consisting of $K$ claims is \textcolor{black}{given by}
\begin{equation}\label{eq:fullPortfolioMultipleLevelXva} 
\mathscr{V}^K_t(\varphi) = \sum_{m=1}^K \hat{\cP}^m_t-\widehat{XVA}^K_t, \, \textcolor{black}{t \leq \tau} \;, 
\end{equation}
where \textcolor{black}{each $\cP^m, \, m = 1, \ldots, K,$ satisfies a BSDE of the form \eqref{ex:price}, while }
\begin{equation} \label{eq:XVA_hat_K} 
\widehat{XVA}^K_t := XVA^K_t+\sum_{m_2=1}^{N_\cN}\sum_{m_1=1}^{|\cN_{m_{2}}|} \sum_{m=1}^{|\cM_{m_{1}}|}DiscVA^{m,m_1,m_2}_t, 
\end{equation} 
and 
\begin{equation} \label{eq:XVA_K} 
XVA^K_t := FVA^K_t + ColVA^K_t + MVA^K_t - CVA^K_t + DVA^K_t \;,  
\end{equation}
\textcolor{black}{for $t \leq \tau.$} 
We further observe that each $XVA^K$ 
in \eqref{eq:XVA_K} is of the form $\sum_{m_2 = 1}^{N_{\mathcal{N}}}XVA^{m_2},$ 
where $XVA^{m_2}$ are the valuation adjustments appearing in the solution to the $\mathbb{G}$-BSDE \eqref{eq:GBSDE_netting}. 
\end{remark}

\subsection{\textcolor{black}{Incremental xVA charge}}
\textcolor{black}{
The $\GG$-BSDE \eqref{eq:GBSDE_netting} encompasses two levels of complexity. The first is the presence of several sub-portfolios, which we have discussed above. 
Secondly, the presence of bid-offer spread in the rates, coupled with the recursive 
nature of the FVA terms implies that the XVA of a portfolio of $K$ trades does not coincide with the sum of the XVAs of the single $K$ trades. This determines non-linearity effects in the xVA valuation as we describe below.}

Let us assume now that the counterparty wishes to enter into a further $(K+1)$-th trade with the bank. 
If entered, the newly introduced $(K+1)$-th claim would contribute to the global riskiness of the portfolio between the bank 
and the counterparty. It is natural to ask then what is the price the bank should charge to the newly introduced $(K+1)$-th 
claim given the presence of the already existing $K$ claims. 
\textcolor{black}{The current market practice involves the computation of an \emph{incremental xVA charge,} where	two different scenarios are compared.
			\begin{enumerate} 
			\item \textit{Base scenario:} The value of the portfolio is given by $\mathscr{V}^K(\varphi)$ as in formula 
						\eqref{eq:fullPortfolioMultipleLevelXva}. This corresponds to the value of the portfolio before the inclusion 
						of the candidate new trade.
			\item \textit{Full scenario:} The value of the portfolio is given by $\mathscr{V}^{K+1}(\varphi)$, 
						computed in line with formula \eqref{eq:fullPortfolioMultipleLevelXva}. This corresponds to the value of the portfolio 
						after the inclusion of the candidate $(K+1)$-th contingent claim.
			\end{enumerate}
			The bank determines the price to be charged to the counterparty as the difference between the value of the portfolio under 
			the full and the base scenario given by the \emph{incremental value} 
			\begin{align} \label{eq:incrementalPrice} 
			\Delta V^{K+1}_t(\varphi) := \mathscr{V}^{K+1}_t(\varphi) - \mathscr{V}^{K}_t(\varphi). 
			\end{align}
	}		 
\textcolor{black}{
We can isolate the impact of the $(K+1)$-th trade by considering  }
\begin{align}\label{eq:incrementalCharge}
\begin{aligned}
\Delta V^{K+1}_t(\varphi) &= \mathscr{V}^{K+1}_t(\varphi) - \mathscr{V}^{K}_t(\varphi)\\ 
 & = \sum_{m=1}^{K+1} \textcolor{black}{\hat{\cP}}^m_t - \widehat{XVA}^{K+1}_t - \sum_{m=1}^K \textcolor{black}{\hat{\cP}}^m_t 
+ \widehat{XVA}^K_t\\ 
 & = \textcolor{black}{\hat{\cP}}^{K+1}_t - \left(XVA^{K+1}_t-XVA^K_t\right) - DiscVA^{K+1}_t\\ 
 & = \textcolor{black}{\hat{\cP}}^{K+1}_t - \Delta XVA_t-DiscVA^{K+1}_t, 
\end{aligned}
\end{align}
\textcolor{black}{for $t \leq \tau,$}
where, in the last step, we implicitly defined the \emph{incremental xVA charge} 
\begin{equation} \label{eq:deltaXVA} 
\Delta XVA_t := XVA^{K+1}_t - XVA^K_t 
\end{equation} 
as the adjustment to be charged on the $(K+1)$-th claim, given the presence of the already existing $K$ claims in the portfolio. 
\textcolor{black}{We obtain the} \emph{non-linearity effect} defined as 
\begin{align} \label{eq:NL}
NL_t\left(V^{K+1}\right) := V^{K+1}_t(\varphi) - \Delta V^{K+1}_t(\varphi), \, \textcolor{black}{t \leq \tau} \;, 
\end{align}
where \textcolor{black}{$V^{K+1}$ is the value of the portfolio consisting only of the $(K+1)$-th claim} and $\Delta V^{K+1}_t(\varphi)$ 
is the incremental charge as defined in \textcolor{black}{\eqref{eq:incrementalPrice}}. 
The non-linearity effect coincides with the difference of the incremental xVA charge and the \textcolor{black}{xVA associated only to the $(K+1)$-th claim}, in fact 
\begin{align}
\begin{aligned}
NL_t\left(V^{K+1}\right) & = V^{K+1}_t(\varphi) - \Delta V^{K+1}_t(\varphi)\\
 & = \left(\hat{P}^{K+1}_t - XVA_t - DiscVA^{K+1}_t\right) - \left(\hat{P}^{K+1}_t - \Delta XVA_t - DiscVA^{K+1}_t\right)\\
 & = \Delta XVA_t - XVA_t. 
\end{aligned}
\end{align}
\textcolor{black}{
In the present setting the clean valuation of the contingent claim is still linear, hence the clean value of the portfolio still corresponds to the sum of the clean values of the single claims,but in general $\Delta XVA_t - XVA_t\neq 0$. The stand-alone xVA of the $(K+1)$-th claim is higher than $\Delta XVA$. 
\textcolor{black}{Moreover,} $NL_t\left(V^{K+1}\right) = 0$ only when there are no portfolio/netting effects.}

\section{\textcolor{black}{Example and numerical illustration}}\label{sec:example}
We conclude the paper by presenting an example using a lognormal model for a single risky asset. 
Under the setting and assumptions of the previous sections we consider a single risky asset $S=(S_t)_{t\in[0,T]}$ 
that pays dividends at a rate $\kappa = (\kappa_t)_{t\,\in\,[0,T]},$ with dividend process $D_t = \int_0^t\kappa_s S_s ds, \, 
t \,\in\, [0,T].$  

Under the martingale measure $\QQ$ 
the risky asset evolves according to
\begin{align} \label{eq:asset_price}
dS_t=S_t\left((r^r_t-\kappa_t)dt+\sigma_t dW^{\textcolor{black}{\QQ}}_t\right), 
\end{align}
where $r^r=\left(r^r_t\right)_{t\in[0,T]}$ is the repo rate associated to the asset $S$. 
We now consider a simple contingent claim, namely a forward written on the asset $S$. 
The dividend process of the claim $A^1=(A^1_t)_{t\in[0,T]},$ is given by
\begin{align} \label{eq:forward}
A^1_t=\Ind{t=T} (S_{T}-K^1), 
\end{align}
for $K^1$ a positive constant. We recall that the clean value $\hat{V}$ satisfying \eqref{cleanBSDE} 
represents a fictious value process for the claim under the assumption of a perfect collateralization scheme that 
annihilates counterparty risk, see Assumption~\ref{cleanMarket}. 

According to Theorem~\ref{th:cleanPriceTh} the arbitrage free price of the forward is
\begin{align}
\label{eq:XVADeskFwdCleanPrice}
\hat{V}^1_t(\varphi)=\Excond{\QQ}{B^r_t\int_{(t,T]} \frac{dA_u^1}{B^r_u}}{\cF_t} 
= B^r_t\Excond{\QQ}{\frac{S_{T}-K^1}{B^r_{T}}}{\cF_t}. 
\end{align}
Assume now that the bank enters a forward with a counterparty without any collateral agreement and without any 
previous existing trade: there is no exchange of variation or initial margin, meaning that $C=I^{TC}=I^{FC}=0$, 
$d\QQ\otimes dt$-a.s. 
Exposures on such transactions are to be funded by the internal treasury desk of the bank, hence, due to internal 
rules of the bank, the front office desk decides to discount cashflows via a synthetic unsecured discount curve with 
associated short rate process $r^f=(r^f_t)_{t\in[0,T]}$ defined via $r^f=\frac{r^{f,l}+r^{f,b}}{2}$. 

Hence, the official clean price from the bank perspective is 
\begin{align}
\label{eq:officialPrice}
\textcolor{black}{\hat{\cP}}^1_t=\Excond{\QQ}{B^f_t\int_{(t,T]} \frac{dA_u^1}{B^f_u}}{\cF_t} 
= B^f_t\Excond{\QQ}{\frac{S_{T}-K^1}{B^f_{T}}}{\cF_t}.
\end{align}
The xVA desk is forced by the internal policy of the bank to employ \eqref{eq:officialPrice} as the official clean price 
for the transaction.  However, using Proposition \ref{fullGG:multi} it is possible to compute a consistent price which is 
then given by
\begin{align*}
\mathscr{V}^{1}_t(\varphi) = V^1_t(\varphi) = \textcolor{black}{\hat{\cP}}^1_t-XVA^1_t-DiscVA^1_t, 
\end{align*}
where
\begin{align}
\label{eq:firstXVA}
\begin{aligned}
XVA^1_t&=-CVA^1_t+DVA^1_t+FVA^1_t\\
&= -B^r_t\Excond{\QQ}{
	\Ind{\tau^C<\tau^B}(1-R^C)\frac{1}{B^r_\tau}\left(\textcolor{black}{\hat{\cP}}^1_\tau - DiscVA^1_{\tau}\right)^-}{\cG_t}, \\
& +B^r_t\Excond{\QQ}{
	\Ind{\tau^B< \tau^C}(1-R^B)\frac{1}{B^r_\tau}\left(\textcolor{black}{\hat{\cP}}^1_\tau - DiscVA^1_{\tau}\right)^+}{\cG_t}, \\
&+B^r_t \Excond{\QQ}{\int_t^{\textcolor{black}{\tau}} 
	\frac{(r^{f,l}_u-r_u)\left(\mathscr{V}^1_u(\varphi)\right)^+ 
- (r^{f,b}_u-r_u)\left(\mathscr{V}^1_u(\varphi)\right)^-}{B^r_u}du}{\cG_t},
\end{aligned}
\end{align}
while the discounting adjustment is
\begin{align}
\label{eq:discVAfirstClaim}
DiscVA^1_t:=B_t^{f} \EE^{\QQ}\left[ \int_t^T (r_u - r^f_u) \frac{\hat{V}^1_u}{B_u^{f}} du \Big| \cF_t \right]. 
\end{align}

The $\GG$-BSDE solved by \eqref{eq:firstXVA} is given by
\begin{align} \label{eq:XVAGBSDEfirstClaim}
\begin{cases}
-dXVA^1_t & = -\left[f(t,\hat{V}^1-XVA^1,0,0) + r_tXVA^1_t \right]dt \\
&\quad\quad\quad- \sum_{k=1}^d\tilde{Z}^k_tdW^{k,\QQ}_t-\sum_{j\in\{B,C\}}\tilde{U}^j_tdM^{j,\QQ}_t,\\
XVA^1_\tau & = 
\hat{V}^1_\tau(\varphi)-\theta_\tau(\hat{V}^1,0,0).
\end{cases}
\end{align}
We observe that the non-linearity effect $NL_t(V^{1})=0$ is of course zero, since the portfolio between the bank and the 
counterparty consists of a single contingent claim. 

Assume now that the counterparty is interested in a second product, e.g. a second forward contract on the risky asset $S$ 
with maturity $T$ and opposite direction, so that
\begin{align}
A^2_t=\Ind{t=T}(K^2-S_{T}).
\end{align}
In line with the previous reasoning, the clean values from the perspective of the xVA desk and the front-office desk are 
respectively
\begin{align}
\hat{V}^2_t(\varphi) = B^r_t\Excond{\QQ}{\frac{K^2-S_{T}}{B^r_{T}}}{\cF_t}, 
\quad\textcolor{black}{\hat{\cP}}^2_t = B^r_t\Excond{\QQ}{\frac{K^2-S_{T}}{B^f_{T}}}{\cF_t}. 
\end{align}
Given the presence of the first forward contract in the portfolio, the full value of the portfolio, now including the second 
claim, is
\begin{align*}
\mathscr{V}^{2}_t(\varphi) = \textcolor{black}{\hat{\cP}}^1_t + \textcolor{black}{\hat{\cP}}^2_t - XVA^2_t - DiscVA^1_t - DiscVA^2_t, 
\end{align*}
where
\begin{align}
\label{eq:fullXVAWithSecond}
\begin{aligned}
XVA^2_t&=-CVA^2_t+DVA^2_t+FVA^2_t\\
&= -B^r_t\Excond{\QQ}{
	\Ind{\tau^C< \tau^B}(1-R^C)\frac{1}{B^r_\tau}\left(\textcolor{black}{\hat{\cP}}^1_\tau + \textcolor{black}{\hat{\cP}}^2_\tau - DiscVA^1_{\tau} -  DiscVA^2_{\tau}\right)^-}{\cG_t}, \\
& +B^r_t\Excond{\QQ}{
	\Ind{\tau^B< \tau^C}(1-R^B)\frac{1}{B^r_\tau}\left(\textcolor{black}{\hat{\cP}}^1_\tau + \textcolor{black}{\hat{\cP}}^2_\tau - DiscVA^1_{\tau} - DiscVA^2_{\tau}\right)^+}{\cG_t}, \\
&+B^r_t \Excond{\QQ}{\int_t^{\textcolor{black}{\tau}} 
	\frac{(r^{f,l}_u-r_u)\left(\mathscr{V}^2_u(\varphi)\right)^+ 
- (r^{f,b}_u-r_u)\left(\mathscr{V}^2_u(\varphi)\right)^-}{B^r_u}du}{\cG_t}, \\
\end{aligned}
\end{align}
and $DiscVA^2$ is given by 
\begin{align*} 
DiscVA^2_t:=B_t^{f} \EE^{\QQ}\left[ \int_t^T (r_u - r^f_u) \frac{\hat{V}^2_u}{B_u^{f}} du \Big| \cF_t \right]. 
\end{align*}

The solution to the $\GG$-BSDE 
\begin{align}
\begin{cases}
-dXVA^2_t & = -\left[f(t,\hat{V}^2-XVA^2,0,0) + r_tXVA^2_t \right]dt \\
&\quad\quad\quad- \sum_{k=1}^d\tilde{Z}^k_tdW^{k,\QQ}_t-\sum_{j\in\{B,C\}}\tilde{U}^j_tdM^{j,\QQ}_t\\
XVA^2_\tau & = 
\left(\hat{V}^1_\tau(\varphi)+\hat{V}^2_\tau(\varphi) 
- \theta_\tau(\hat{V}^1+\hat{V}^2,0,0)\right) 
\end{cases}
\end{align}
is given by \eqref{eq:firstXVA}. \\ 
Given the presence of the first claim in the portfolio, the xVA charge on the second claim is $\Delta XVA = XVA^2-XVA^1$, 
whereas the non-linearity is \textcolor{black}{effect is given by}
\begin{align}
 \label{eq:nonlinearityTwoClaims}
\begin{aligned}
NL_t(V^{2})& = XVA^2_t + B^r_t\Excond{\QQ}{
	\Ind{\tau^C< \tau^B}(1-R^C)\frac{1}{B^r_\tau} 
\left(\textcolor{black}{\hat{\cP}}^2_\tau-DiscVA^2_{\tau}\right)^-}{\cG_t}, \\
& - B^r_t\Excond{\QQ}{
	\Ind{\tau^B< \tau^C}(1-R^B)\frac{1}{B^r_\tau}\left(\textcolor{black}{\hat{\cP}}^2_\tau 
- DiscVA^2_{\tau}\right)^+}{\cG_t}, \\
& - B^r_t \Excond{\QQ}{\int_t^{\textcolor{black}{\tau}} 
	\frac{(r^{f,l}_u-r_u)\left(V^2_u(\varphi)\right)^+ 
- (r^{f,b}_u-r_u)\left(V^2_u(\varphi)\right)^-}{B^r_u}du}{\cG_t}. 
\end{aligned}
\end{align}
Observe that in the last FVA term appearing in  \eqref{eq:nonlinearityTwoClaims} we have the presence of $V^2$, 
i.e. a portfolio consisting only of the second claim: all expectations in \eqref{eq:nonlinearityTwoClaims} 
represent the stand-alone xVA correction for the second contingent claim. 
In \eqref{eq:fullXVAWithSecond} we observe instead the presence of $\mathscr{V}^2$, i.e. a portfolio consisting of 
the first and the second claim. 
The role of netting effects in reducing the overall impact of value adjustments can be seen by observing that
\begin{align}
\begin{aligned}
&\textcolor{black}{\hat{\cP}}^1_t + \textcolor{black}{\hat{\cP}}^2_t - (DiscVA^1_t+DiscVA^2_t)\\
&=\left(B^f_t\Excond{\QQ}{\frac{1}{B^f_T}}{\cF_t}-B^f_t\Excond{\QQ}{\int_t^T(r_u-r^f_u)\frac{B^r_u}{B^f_u}
\Excond{\QQ}{\frac{1}{B^r_T}}{\cF_u}du}{\cF_t}\right)(K^2-K^1)\\
&=B^r(t,T)(K^2-K^1),
\end{aligned}
\end{align}
where $B^r(t,T)$ denotes the time $t$ price of a zero coupon bond with maturity $T$ and short rate $r$. This shows that the combined exposure of the two forward contracts is obviously independent of the volatility of the asset $S.$ 

We finally stress that, given a numerical scheme that allows to estimate the evolution of the conditional expectation $\hat{V}$, 
e.g. a regression estimator in the context of a Monte Carlo simulation, the xVA desk can immediately estimate the DiscVA, 
hence only a simulation in terms of $r$ discounting is required for the implementation.

\subsection{A numerical illustration}\label{sect:numerical}
We conclude the paper by presenting two numerical illustrations\footnote{The source code for our examples is available at \url{https://github.com/AlessandroGnoatto/BiaginiEtAlExamples}}. 
The first one aims at providing evidence regarding the last claim of the previous section, namely that the estimation of DiscVA is 
a feasible task. We postulate that the risky asset evolves according to \eqref{eq:asset_price}. We assume, for the sake of simplicity, 
that all parameters are constant and we set $r^r=r = 0.01$, $\kappa=0$, $\sigma = 0.25$, $S_0 = 100$. 
In line with the previous section we consider the forward contract \eqref{eq:forward} written on $1000$ units of $S$ with strike 
$K^1=80$ and $T_1=1$. 

For such a simple claim we can compute prices without resorting to simulations. In particular, the xVA desk computes the price 
$\hat{V}^1_0$ according to \eqref{eq:XVADeskFwdCleanPrice}, i.e.  $\hat{V}^1_0=20795.22$ EUR, obtained by assuming a perfect 
collateralization with collateral rate $r$. However, the claim is perfectly uncollateralized, hence the front office function 
employs for valuation the unsecured discount rate $r^f = 0.05$ and hence computes the price $P^1_0$ according to \eqref{eq:officialPrice}. 
The front office price is $P^1_0=19980.62$ EUR. As we have already seen, the front office price is not consistent with the portfolio-wide valuation of the xVA desk, however the xVA desk can solve the consistency problem by computing $DiscVA^1_0$ as in \eqref{eq:discVAfirstClaim}. 

In this very simple example $DiscVA^1_0$ can be computed in closed form and we have  $DiscVA^1_0=-814.70$. 

Note however that \textcolor{black}{in more general cases} the computation of $DiscVA$ can be performed at no additional 
costs via Monte Carlo methods, which are usually employed for xVA simulations.

The second experiment we propose aims at showing the relevance of portfolio effects in xVA computations. Again a simple example 
will suffice to provide enough intuition. To illustrate the issue we simplify the treatment by assuming that there is a unique 
risk-free rate $r$ involved in all valuations. We consider again the bank trading two forwards on the stock $S$ as before. 
We assume again $T^1=1$ and set $K^1 = S_0=100$. We suppose that the forward is written on $1000$ units of the stock, and 
that only the counterparty can default. In summary, the whole xVA adjustment is solely given by the CVA. 
We suppose that the counterparty has a constant hazard rate $\lambda^{C,\QQ}=0.04$ and a recovery rate $R^C=0.4$.

The base scenario we consider is represented by the sole presence of the forward above in the portfolio between the bank 
and the counterparty. Using the same Monte Carlo framework we developed for the previous numerical test, we simulate again 
paths of the underlying $S$. After that, we perform a pathwise simulation of the exposure of the forward, which we then 
numerically integrate with respect to time and average over all paths. This procedure produces a Monte Carlo estimation of 
the CVA under the filtration $\FF$ according to Corollary \ref{eq:corollary}. We obtain an estimate for $CVA^1_0=148.17$ EUR.

Let us now introduce the second forward mentioned above, where we assume again that $T^2=T^1=1$ and set $K^2=90$. 
We also suppose that the second forward is written on the same quantity of shares of $S$, namely $1000$. 
We observe that, due to the different strikes, the second forward does not perfectly offset the first one. 
We first assumed that the second forward is the only claim in the portfolio and thus obtained an estimate of the 
stand alone CVA of $309.22$ EUR, so that the sum of the CVAs of the two forwards, ignoring portfolio effects, is $457.49$ EUR. 
Such value clearly overestimates the outstanding credit exposure between the bank and the counterparty.

By relying on a Monte Carlo simulation under the same assumptions as above, we compute the 
portfolio-wide CVA, i.e. 
$CVA^2$ and we obtain the estimate $CVA^2=232.69$ EUR. We observe then that the incremental CVA, $\Delta CVA = 84.52$ EUR. 
Finally, the non-linearity from \eqref{eq:NL} is estimated by $NL_0(V^2)= 232.69 - 84.52 = 148.17$ EUR. 

The example we propose shows quite clearly the relevance of portfolio effects: if the xVA desk ignored portfolio effects, 
the xVA charge would be $457.49$ EUR. By applying the incremental approach to xVA charge instead, when the second forward is 
included in the portfolio, there is only an additional charge of $84.52$ EUR. This is due to the fact that the two credit 
exposure partially compensate each other. 

In Figure \ref{fig:Exposures} we provide a further visualization of the portfolio effects. We compute the Monte Carlo sample 
average of the negative and positive part of the credit exposure of the forwards under consideration: such quantities are 
usually termed \textit{expective negative (resp. positive) exposure}. Also, we compute the $95\%$-quantile of the exposure. 
Red lines correspond to the first forward with strike $K^1$ whereas green lines refer to the second forward with strike $K^2.$ 
Finally, the portfolio resulting from the combination of the two forwards is represented by a blue line. We can clearly observe 
that combining the two claims has a beneficial effect in terms of reduction of the exposure: in particular we observe that the 
$95\%$-quantile is constant.

\begin{figure}[ht]
  \centering
   \subfloat{\label{fig:ENE}\includegraphics[scale=0.30]{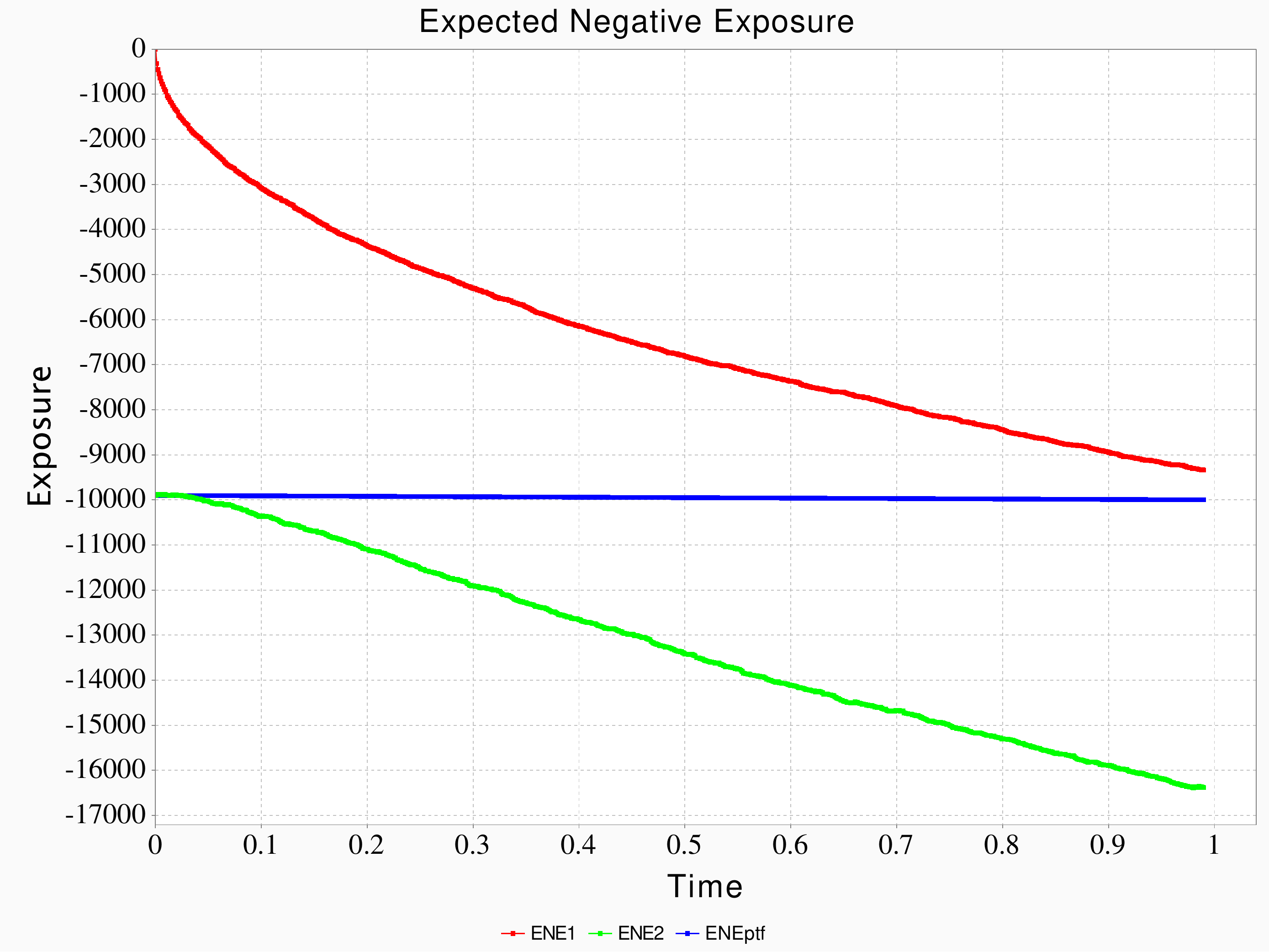}} 
   \subfloat{\label{fig:EPE}\includegraphics[scale=0.30]{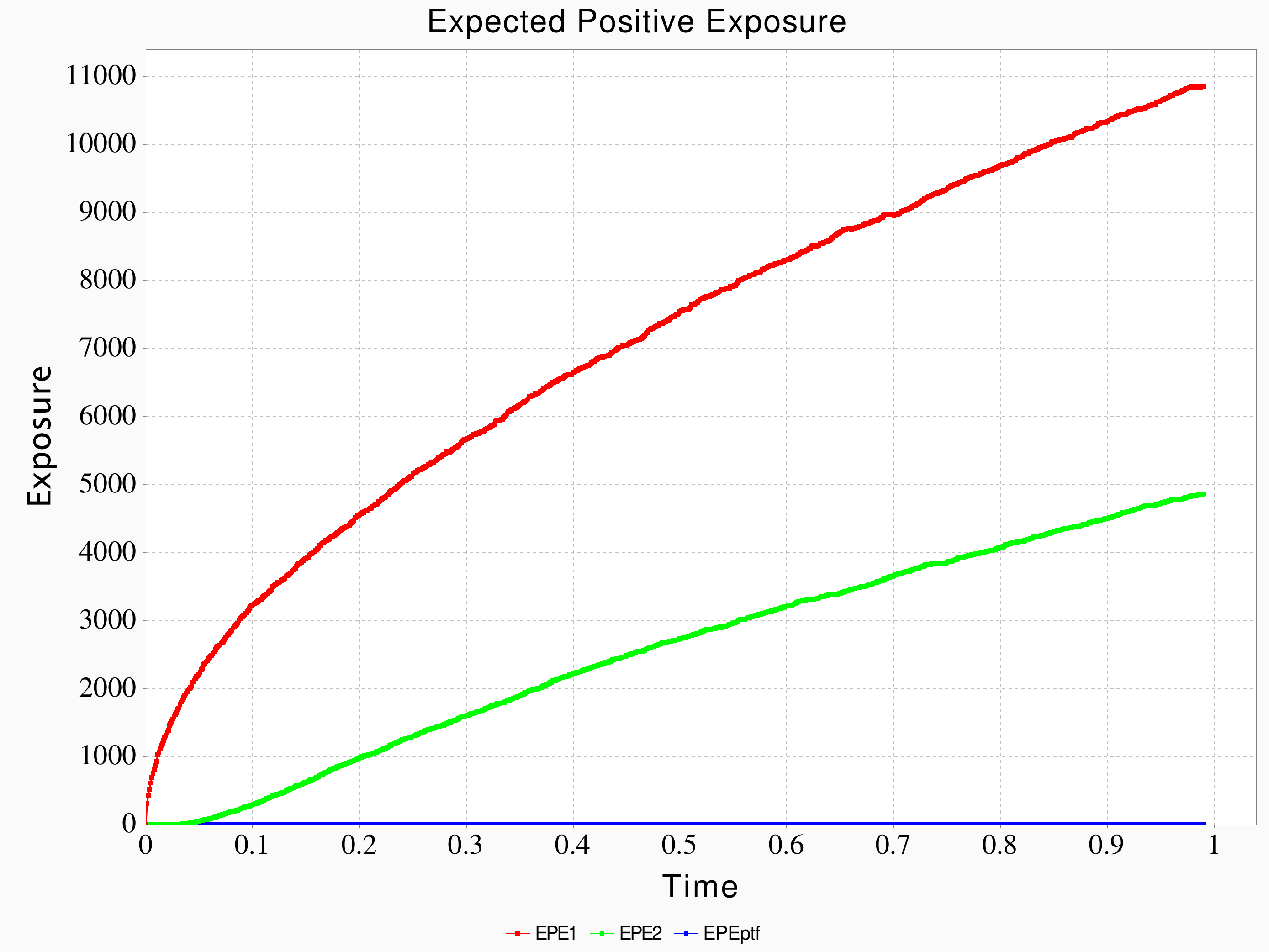}}\
    \subfloat{\label{fig:PFE}\includegraphics[scale=0.30]{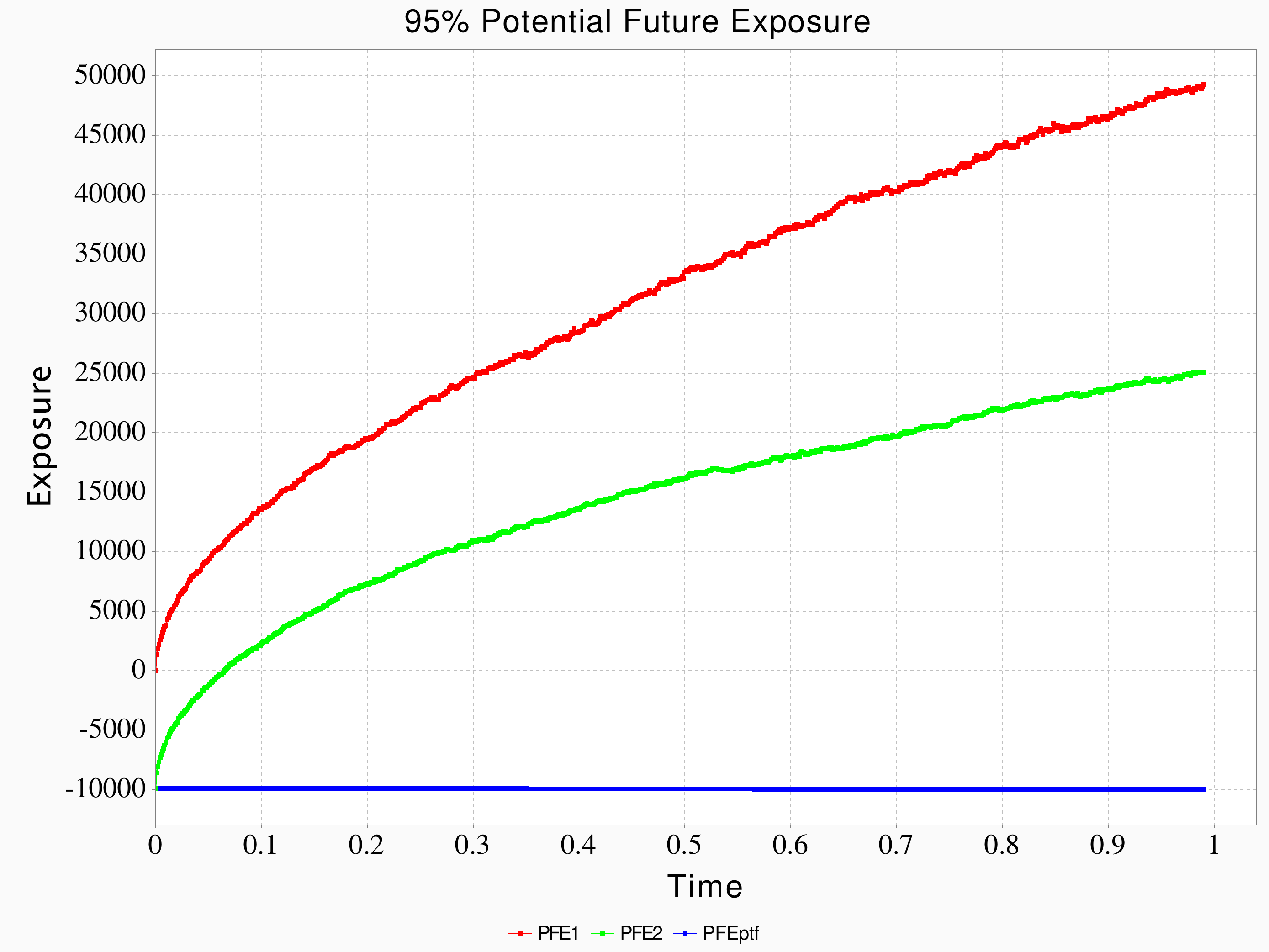}} \                              
\caption{Expected positive and negative exposure and potential future exposure for the portfolio of two forwards. The red and green lines refer to the first and second forward taken in isolation, whereas the blu line represents a portfolio containing the two claims.  \label{fig:Exposures}}
\end{figure}

\appendix
\section{Existence and Uniqueness of BSDEs}\label{app}

In this section we review some results on existence and uniqueness for some BSDEs. Our main references are \cite{nr2016}, which in turn extends results from \cite{cfs2008}, and \cite{Gobet2018}. 

Let $M = \left(M^1,\ldots M^d\right)^\top$ be a $d$-dimensional, real-valued, continuous and square integrable martingale 
on a filtered probability space $\left(\Omega,\cF,\FF,\QQ\right),$ where the filtration is assumed to satisfy the usual 
hypotheses and we assume that the predictable representation property holds with respect to $M$ for $(\FF,\QQ)$-martingales. 
We use $\left\langle M \right\rangle$ to denote the quadratic variation of $M$.

\begin{assumption}[\cite{nr2016} Assumption 3.1] There exists an $\RR^{d\times d}$-valued process $m$ and an $\FF$-adapted, 
continuous, bounded, increasing process $Q$ with $Q_0=0$ such that, for all $t\in[0,T],$ 
\begin{align}\label{eq:covariationAppendix}
\left\langle M \right\rangle_t = \int_0^t m_u m_u^\top d Q_u. 
\end{align}
\end{assumption}

If $M = W$ is a one-dimensional standard Brownian motion, then $Q_t=t$, whereas $m$ corresponds to the identity matrix. 
Next we introduce the driver of the BSDE via the following

\begin{assumption}[\cite{nr2016} Assumption 3.2]\label{assumptionDriverAppendix}
Let $h:\Omega\times [0,T]\times \RR\times \RR^d\mapsto \RR$ be an $\cF\otimes \cB([0,T])\otimes \cB(\RR)\otimes \cB(\RR^d)$-measurable function such that $h(\cdot,\cdot, y,z)$ is an $\FF$-adapted process for any fixed $(y,z)\in\RR\times \RR^d$.
\end{assumption}

The BSDEs of interest in view of financial applications are forward-backward SDEs (FBSDEs). Following \cite{nr2016}, 
we introduce a generic (forward) factor matrix-valued process given by 
\begin{align*}
\XX_t:=\left(\begin{array}{cccc}
X^1_t & 0 & \hdots & 0 \\ 
0 & X^2_t& \hdots & 0 \\ 
\vdots & \vdots & \ddots & 0\\ 
0 & 0& \hdots& X^d_t
\end{array} \right), \, t \,\in\, [0,T], 
\end{align*}
where the auxiliary processes $X^i, \, i=1,\ldots,d,$ are assumed to be $\FF$-adapted. The processes $X^i$ represent market 
risk factors or traded assets. We assume that the function $h$ of Assumption~\ref{assumptionDriverAppendix} can be written 
as $h(\omega,t,y,z)=g(\omega,t,y,\XX_tz),$ for $g$ satisfying Assumption~\ref{assumptionDriverAppendix}. 

\begin{definition}[\cite{nr2016} Definition 4.1]
We say that an $\RR^{d\times d}$-valued process $\gamma$ satisfies the ellipticity condition if there exists a constant 
$\Lambda>0$ such that
\begin{align}\label{eq:ellipticity}
\sum_{i=1}^d\left(\gamma_t\gamma_t^\top\right)_{ij}a_ia_j\geq \Lambda \left\|a\right\|
\end{align}
for all $a\in\mathbb{R}^d$ and $t\in[0,T].$
\end{definition}

\begin{assumption}[\cite{nr2016} Assumption 4.2]\label{assumptionMAppendix} The $\RR^{d\times d}$-valued $\FF$-adapted process $m$ in \eqref{eq:covariationAppendix} is given by
\begin{align*}
m_t m_t^\top=\XX_t\gamma_t\gamma_t^\top\XX_t^\top, 
\end{align*}
where $\gamma=[\gamma]_{ij}$ is a $d$-dimensional square matrix of $\FF$-adapted processes satisfying the ellipticity 
condition \eqref{eq:ellipticity}.
\end{assumption}

In the following we recall some definitions from \cite{nr2016}. 

\begin{definition}
We say that the function $h: \, \Omega\times [0,T]\times \RR\times \RR^d\mapsto \RR$ satisfies 
\begin{itemize}
\item the \emph{uniform Lipschitz condition} if there exists a constant $L$ such that for any $t\in[0,T]$ and all 
$y_1,y_2\in\RR$, $z_1,z_2\in\RR^d$
\begin{align*}
\left|h(t,y_1,z_1)-h(t,y_2,z_2)\right|\leq L\left(|y_1-y_2|+\left\|z_1-z_2\right\|\right);
\end{align*}
\item the \emph{uniform $m$-Lipschitz condition} if there exists a constant $\hat{L}$ such that for any $t\in[0,T]$ 
and all $y_1,y_2\in\RR$, $z_1,z_2\in\RR^d$
\begin{align*}
\left|h(t,y_1,z_1)-h(t,y_2,z_2)\right|\leq \hat{L}\left(|y_1-y_2|+\left\|m_t^\top(z_1-z_2)\right\|\right);
\end{align*}
\item the \emph{uniform $\XX$-Lipschitz condition} if there exists a constant $\tilde{L}$ such that for any $t\in[0,T]$ 
and all $y_1,y_2\in\RR$, $z_1,z_2\in\RR^d$
\begin{align*}
\left|h(t,y_1,z_1)-h(t,y_2,z_2)\right|\leq \tilde{L}\left(|y_1-y_2|+\left\|\XX_t(z_1-z_2)\right\|\right). 
\end{align*}
\end{itemize}
\end{definition}

\begin{lemma}[\cite{nr2016} Lemma 4.2] If Assumption~\ref{assumptionMAppendix} holds and the generator $h$ is uniform 
$\XX$-Lipschitz, then $h$ is uniform $m$-Lipschitz with $\hat{L}=\tilde{L}\max\left\{1,\Lambda^{-\frac{1}{2}}\right\}$, 
where $\Lambda$ is the constant defined in \eqref{eq:ellipticity}.
\end{lemma}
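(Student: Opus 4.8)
The plan is to reduce the claim to a pointwise linear-algebra estimate comparing $\|\XX_t a\|$ with $\|m_t^\top a\|$ for an arbitrary $a \,\in\, \RR^d$, and then apply it with $a = z_1 - z_2$.

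First I would fix $t \,\in\, [0,T]$ and $a \,\in\, \RR^d$ and set $b := \XX_t^\top a$. By Assumption~\ref{assumptionMAppendix} we have $m_t m_t^\top = \XX_t \gamma_t \gamma_t^\top \XX_t^\top$, so
\[
\|m_t^\top a\|^2 = a^\top m_t m_t^\top a = a^\top \XX_t \gamma_t \gamma_t^\top \XX_t^\top a = b^\top \gamma_t \gamma_t^\top b.
\]
Applying the ellipticity condition \eqref{eq:ellipticity} to the vector $b$ (in its homogeneous quadratic form $b^\top \gamma_t \gamma_t^\top b \geq \Lambda \|b\|^2$, which holds uniformly in $t$), and using that $\XX_t$ is diagonal, hence symmetric, so that $b = \XX_t a$, I would obtain
\[
\|m_t^\top a\|^2 \;\geq\; \Lambda \,\|\XX_t a\|^2, \qquad \text{equivalently} \qquad \|\XX_t a\| \;\leq\; \Lambda^{-1/2}\,\|m_t^\top a\|.
\]

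Next I would insert $a = z_1 - z_2$ into the uniform $\XX$-Lipschitz estimate for $h$: for every $t \,\in\, [0,T]$, $y_1, y_2 \,\in\, \RR$ and $z_1, z_2 \,\in\, \RR^d$,
\[
|h(t,y_1,z_1) - h(t,y_2,z_2)| \;\leq\; \tilde{L}\big(|y_1 - y_2| + \|\XX_t(z_1 - z_2)\|\big) \;\leq\; \tilde{L}\big(|y_1 - y_2| + \Lambda^{-1/2}\|m_t^\top(z_1 - z_2)\|\big).
\]
Bounding $\tilde{L} \leq \tilde{L}\max\{1, \Lambda^{-1/2}\}$ on the first summand and $\tilde{L}\Lambda^{-1/2} \leq \tilde{L}\max\{1, \Lambda^{-1/2}\}$ on the second, this is precisely the uniform $m$-Lipschitz inequality with $\hat{L} = \tilde{L}\max\{1, \Lambda^{-1/2}\}$.

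I do not anticipate a genuine obstacle here: the entire content is the inequality $\|\XX_t a\| \leq \Lambda^{-1/2}\|m_t^\top a\|$, which is immediate once one rewrites $\|m_t^\top a\|^2$ via the factorization $m_t m_t^\top = \XX_t \gamma_t\gamma_t^\top \XX_t^\top$ and invokes ellipticity. The only points requiring a little care are reading \eqref{eq:ellipticity} in its correct quadratic form and noting that $\XX_t^\top = \XX_t$ since $\XX_t$ is diagonal; the uniformity in $t$ of the constant $\hat{L}$ follows because $\Lambda$ in \eqref{eq:ellipticity} is a fixed deterministic constant, as required by the definition of the uniform $m$-Lipschitz condition.
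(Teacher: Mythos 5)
Your proof is correct. The paper itself states this lemma without proof, importing it directly from the cited reference, so there is no in-paper argument to compare against; your reduction to the pointwise inequality $\|\XX_t a\| \leq \Lambda^{-1/2}\,\|m_t^\top a\|$ via the factorization $m_t m_t^\top = \XX_t\gamma_t\gamma_t^\top\XX_t^\top$ and the ellipticity bound is exactly the standard argument, and you correctly read \eqref{eq:ellipticity} in its homogeneous quadratic form $a^\top\gamma_t\gamma_t^\top a \geq \Lambda\|a\|^2$ (the exponent is evidently a typo in the displayed condition) and use the diagonality of $\XX_t$ to identify $\XX_t^\top a$ with $\XX_t a$.
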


Theorem \ref{th:existenceUniquenessNR} provides the existence and uniqueness result, which is relevant for our purposes. 

\begin{theorem}[\cite{nr2016} Theorem 4.1]\label{th:existenceUniquenessNR}
Assume that the function $h$ can be represented as $h(t,y,z)=g(t,y,\XX_tz)$, where the function 
$g:\Omega\times[0,T]\times\RR\times\RR^d\mapsto\RR$ satisfies the uniform Lipschitz condition. 
Let the process $h(\cdot,0,0)$ belong to the space $\cH^{2}(\QQ)$, the random variable $\eta$ belong to $L^2(\cF_T,\QQ)$ 
and $U$ be a real-valued $\FF$-adapted process such that $U\in\cH^{2}(\QQ)$ and $U_T\in L^2(\cF_T,\QQ)$. 
Assume that the process $m$ satisfies Assumption~\ref{assumptionMAppendix} for some constant $\Lambda>0$. Then the BSDE
\begin{align}
\begin{cases}
dY_t& = Z^\top_tdM_t - h(t,Y_t,Z_t)dQ_t + dU_t,\\
Y_T&=\eta,
\end{cases}
\end{align}
has a unique solution $(Y,Z)$ such that $(Y,m^\top Z)\,\in\,\cH^{2}(\QQ) \times \cH^{2,d}(\QQ)$. 
Moreover the processes $Y$ and $U$ satisfy
\begin{align*}
\Ex{\QQ}{\sup_{t\in[0,T]}|Y_t-U_t|^2}<\infty. 
\end{align*} 
\end{theorem}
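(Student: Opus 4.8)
The plan is to obtain the solution $(Y,Z)$ by a Picard iteration carried out in the variable $(Y,m^\top Z)$ rather than in $(Y,Z)$ directly, working in the weighted Banach space $\cH^2_{\beta,T}(\QQ)\times\cH^{2,d}_{\beta,T}(\QQ)$ (with the $m^\top Z$-norm in the second slot) for a parameter $\beta>0$ to be fixed large enough. First I would reduce to the case $U\equiv 0$: setting $\bar Y:=Y-U$, the process $\bar Y$ solves a BSDE with terminal value $\eta-U_T\in L^2(\cF_T,\QQ)$ and driver $(t,\omega,y,z)\mapsto h(t,y+U_t,z)$, which is still uniform Lipschitz in $(y,z)$ with the same constant and whose value at $(0,0)$, namely $h(\cdot,U,0)$, lies in $\cH^2(\QQ)$ because $h(\cdot,0,0)\in\cH^2(\QQ)$, $U\in\cH^2(\QQ)$ and $h$ is Lipschitz. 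So without loss of generality $U\equiv 0$.

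Given an input $(y,z)$ with $m^\top z\in\cH^{2,d}_{\beta,T}(\QQ)$, I would set $Y_t:=\Excond{\QQ}{\eta+\int_t^T h(s,y_s,z_s)\,dQ_s}{\cF_t}$; since $Q$ is bounded and $h(\cdot,y,z)$ has $\cH^2$-growth, the closing martingale $\Gamma_t:=Y_t+\int_0^t h(s,y_s,z_s)\,dQ_s$ is square-integrable, so the predictable representation property with respect to $M$ furnishes a predictable $Z$ with $\Gamma_t=\Gamma_0+\int_0^t Z_s^\top dM_s$; this defines the map $\Phi(y,z):=(Y,Z)$. The core estimate comes from applying It\^o's formula to $e^{\beta Q_t}|\delta Y_t|^2$, where $\delta Y:=Y^1-Y^2$ and $\delta Z:=Z^1-Z^2$ correspond to two images of $\Phi$, taking expectations, and using $\langle M\rangle_t=\int_0^t m_sm_s^\top\,dQ_s$; this produces on the left both $\EE^{\QQ}\big[\int_0^T e^{\beta Q_s}|m_s^\top\delta Z_s|^2\,dQ_s\big]$ and $\beta\,\EE^{\QQ}\big[\int_0^T e^{\beta Q_s}|\delta Y_s|^2\,dQ_s\big]$, while on the right a cross term that the $m$-Lipschitz bound on $h$, together with Young's inequality $2ab\le\mu^2a^2+\mu^{-2}b^2$, controls by a small multiple of the same two quantities once $\beta$ exceeds a fixed multiple of the square of the Lipschitz constant. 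This yields a strict contraction of $\Phi$, hence a unique fixed point $(Y,Z)$, which is the sought solution; uniqueness in the solution class follows from the same a priori estimate applied to the difference of two solutions at a fixed $\beta$.

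The step I expect to be the main obstacle --- and the reason the factor process $\XX$ is singled out in the hypotheses --- is that $\XX$, and hence $m=\XX\gamma$, is not assumed bounded, so one cannot pass freely between integrability of $Z$ and integrability of $m^\top Z$; the whole fixed-point argument, the a priori estimates, and the statement of the conclusion must therefore be phrased consistently in $m^\top Z$, which is exactly why the theorem asserts $(Y,m^\top Z)\in\cH^2(\QQ)\times\cH^{2,d}(\QQ)$ rather than a bound on $Z$ itself. The Lipschitz hypothesis enters the estimates only through the $m$-Lipschitz form of $h$, which is available by \cite[Lemma 4.2]{nr2016}: under Assumption~\ref{assumptionMAppendix} the uniform $\XX$-Lipschitz property of $g(t,y,\XX_t\cdot)=h(t,y,\cdot)$ upgrades to $m$-Lipschitz with constant $\hat L=\tilde L\max\{1,\Lambda^{-1/2}\}$, which is what makes the bracket term in the It\^o estimate absorbable.

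Finally, for the supremum bound I would return to $\bar Y=Y-U$, write $\sup_{t\in[0,T]}|\bar Y_t|\le 2\sup_{t\in[0,T]}|\Gamma_t|+\int_0^T|h(s,Y_s,Z_s)|\,dQ_s$, bound the first term by Doob's $L^2$-inequality applied to the square-integrable martingale $\Gamma$, and the second by Cauchy--Schwarz using boundedness of $Q$ together with the Lipschitz growth of $h$ and $h(\cdot,0,0)\in\cH^2(\QQ)$, $Y\in\cH^2(\QQ)$, $m^\top Z\in\cH^{2,d}(\QQ)$; squaring and taking expectations then gives $\Ex{\QQ}{\sup_{t\in[0,T]}|Y_t-U_t|^2}<\infty$. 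Apart from the bookkeeping forced by the unbounded $\XX$, everything here is the classical El Karoui--Peng--Quenez scheme transcribed from a Brownian integrator to a continuous square-integrable martingale integrator.
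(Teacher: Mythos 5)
The paper itself contains no proof of this statement: it is imported verbatim from \cite{nr2016} (their Theorem 4.1) as background material in the appendix, so there is nothing internal to compare against. Your reconstruction is correct and follows the same route as the cited source — the classical El Karoui--Peng--Quenez fixed-point scheme transposed to a continuous square-integrable martingale integrator: reduction to $U\equiv 0$, upgrade from the $\XX$-Lipschitz to the $m$-Lipschitz condition via their Lemma 4.2 (which is precisely where Assumption~\ref{assumptionMAppendix} and the ellipticity constant $\Lambda$ enter), a contraction for the pair $(Y,m^\top Z)$ in the $e^{\beta Q}$-weighted norms, and Doob plus Cauchy--Schwarz for the supremum bound on $Y-U$. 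You also correctly identify the one genuinely non-classical point, namely that $m$ is unbounded so all estimates must be phrased in $m^\top Z$ rather than $Z$. Two minor bookkeeping remarks: the contraction should be supplemented by the (routine) check that the Picard map sends the solution space into itself, obtained from the same It\^o estimate with $(0,0)$ as one input; and the fixed-point argument determines $Z$ only up to $dQ\otimes d\PP$-null sets, i.e.\ uniqueness holds for $m^\top Z$ in the $dQ$-seminorm, which is also how the original reference states it.
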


\begin{assumption}[\cite{Gobet2018} Assumption (S)] \label{ass:S_gobet}
For any $y,\,z,\,\lambda \,\in\, \RR\times\RR^d\times\RR,$ $f(\cdot,y,z,\lambda)$ is an $\FF$-adapted stochastic process with values 
in $\RR$ and there exists a constant $C_f > 0$ such that $\PP$-a.s., for all $(s,y_1,z_1,\lambda_1), \, (s,y_2,z_2,\lambda_2) 
\,\in\, [0,T]\times\RR\times\RR^d\times\RR,$ 
$$|f(s,y_1,z_1,\lambda_1) - f(s,y_2,z_2,\lambda_2)| \leq C_f \left(|y_1 - y_2| + |z_1 - z_2| + |\lambda_1 - \lambda_2| \right).$$ 
Moreover, $\EE\left[\int_0^T |f(s,0,0,0)|^2 ds \right] < \infty.$  
\end{assumption}

\subsection{Proof of Proposition \ref{pro:noArbitrageBasicMarket}} \label{app:A1}
Since we are only trading in the basic risky assets, 
the position in the initial margin is zero  hence, by \eqref{eq:repoConstraint} and \eqref{def:wealth_prox}, 
the value process is of the form 
\begin{align} \label{eq:portfolioValue}
V_t(\varphi)=\psi^{f,b}_tB^{f,b}_t+\psi^{f,l}_tB^{f,l}_t, \, t \,\in\,[0,T]. 
\end{align}
Recalling that simultaneous borrowing and lending at the same time is not allowed, we have by \eqref{eq:portfolioValue} that 
\begin{align*}
\psi^{f,l}& = \left(V_t(\varphi)\right)^+\left(B^{f,l}_t\right)^{-1}, \; 
\psi^{f,b} = -\left(V_t(\varphi)\right)^-\left(B^{f,b}_t\right)^{-1}, \; t \,\in\,[0,T].  
\end{align*}
Moreover, we can rewrite the funding term of the generic $i$-th risky assets as follows
\begin{align*}
\int_0^t\psi^i_udB^i_u=-\int_0^t\frac{\xi^i_uS^i_u}{B^i_u}dB^i_u=-\int_0^t r^i_u \xi_u^i S^i_u du, \, t \,\in\,[0,T]. 
\end{align*}
Upon substitution in the self-financing condition \eqref{eq:sefFinancing}, we obtain
\begin{align*}
\begin{aligned}
dV_t(\varphi) & = \sum_{i=1}^d\xi^i_t\left(dS^i_t + dD^i_t - r^i_tS^i_tdt\right) 
+ \sum_{j\in\{B,C\}}\xi^j_t\left(dP^j_t-r^j_tP^j_{t-}dt\right) - r^{f,b}_t\left(V_t(\varphi)\right)^-dt 
+ r^{f,l}_t\left(V_t(\varphi)\right)^+dt.
\end{aligned}
\end{align*}
We now use the inequality $r^{f,l}_t\leq r^{f,b}_t$ from Assumption~\ref{assumptionRates}, hence
\begin{align*}
\begin{aligned}
dV_t(\varphi)& = \sum_{i=1}^d\xi^i_t\left(dS^i_t+dD^i_t-r^i_tS^i_tdt\right) 
+ \sum_{j\in\{B,C\}}\xi^j_t\left(dP^j_t - r^j_tP^j_{t-}dt\right) + r^{f,l}_t\left(V_t(\varphi)\right)^+dt 
- r^{f,b}_t\left(V_t(\varphi)\right)^-dt \\ 
& \leq\sum_{i=1}^d\xi^i_t\left(dS^i_t + dD^i_t - r^{i}_tS^{i}_tdt\right) 
+ \sum_{j\in\{B,C\}}\xi^j_t\left(dP^j_t - r^{j}_tP^j_{t-}dt\right) + r^{f,l}_t V_t(\varphi)dt. \\
\end{aligned}
\end{align*}
Introducing $\tilde{V}^l_t(\varphi):=\left(B^{f,l}_t\right)^{-1}V_t(\varphi),$ we have then the inequality
\begin{align*}
\begin{aligned}
d\tilde{V}^l_t(\varphi)&\leq\sum_{i=1}^d\xi^i_t\left(B^{f,l}_t\right)^{-1}\left(dS^i_t+dD^i_t-r^{i}_tS^{i}_tdt\right) 
+ \sum_{j\in\{B,C\}}\xi^j_t\left(B^{f,l}_t\right)^{-1}\left(dP^j_t-r^{j}_tP^j_{t-}dt\right) \\
\end{aligned}
\end{align*}
or equivalently,
\begin{align*}
\begin{aligned}
d\tilde{V}^l_t(\varphi) & \leq\sum_{i=1}^d\xi^i_t\frac{B^i_t}{B^{f,l}_t}\frac{\left(dS^i_t+dD^i_t-r^{i}_tS^{i}_tdt\right)}{B^i_t} 
+ \sum_{j\in\{B,C\}}\xi^j_t\frac{B^j_t}{B^{f,l}_t}\frac{\left(dP^j_t-r^{j}_tP^j_{t-}dt\right)}{B^j_t},\\
\end{aligned}
\end{align*}
and so, by \eqref{eq:SPQ}, we arrive at the inequality
\begin{align} \label{eq:noarbitrage}
\begin{aligned}
d\tilde{V}^l_t(\varphi) &\leq \sum_{i=1}^d\xi^i_t\frac{B^i_t}{B^{f,l}_t}d\tilde{S}^{i,cld}_t + 
\sum_{j\in\{B,C\}}\xi^j_t\frac{B^j_t}{B^{f,l}_t}d\tilde{P}^j_t, \\
\end{aligned}
\end{align}
We observe that the right-hand side in \eqref{eq:noarbitrage} is a local martingale, which is bounded from below by Definition \ref{def:admissibility} \textcolor{black}{and because $r^f$ is bounded}. 
This implies that the aforementioned right-hand side is a supermartingale. Absence of arbitrage follows along the usual lines.

\section*{Acknowledgments} 
We would like to thank \textcolor{black}{the reviewers} for their useful remarks that helped us improve our work. We are also grateful to Damiano Brigo, Agostino Capponi, Andrea Pallavicini, and the participants to the SIAM Conference on Financial Mathematics \& Engineering in Toronto, the XX Quantitative Finance Workshop in Zurich, and the Mathematical and Computational Finance Internal Seminar, Oxford University, for their comments and suggestions. Any errors remain our responsibility.

\bibliographystyle{apa}
\bibliography{biblio}

\end{document}